\DeclareRobustCommand{\cev}[1]{%
  \mathpalette\do@cev{#1}%
}
\newcommand{\do@cev}[2]{%
  \fix@cev{#1}{+}%
  \reflectbox{$\m@th#1\vec{\reflectbox{$\fix@cev{#1}{-}\m@th#1#2\fix@cev{#1}{+}$}}$}%
  \fix@cev{#1}{-}%
}
\newcommand{\fix@cev}[2]{%
  \ifx#1\displaystyle
  \mkern#23mu
  \else
  \ifx#1\textstyle
  \mkern#23mu
  \else
  \ifx#1\scriptstyle
  \mkern#22mu
  \else
  \mkern#22mu
  \fi
  \fi
  \fi
}
\newcounter{theoremcounter}
\newtheorem{theorem}{Theorem}
\newtheorem{lemma}{Lemma}
\newtheorem{proposition}{Proposition}
\newtheorem{corollary}{Corollary}
\newenvironment{proof}[1][Proof]{\begin{trivlist}
\item[\hskip \labelsep {\bfseries #1}]}{\end{trivlist}}
\newcommand{\qed}{\hfill $\blacksquare$}
\newcommand{\bs}{\boldsymbol}
\newcommand{\bra}[1]{\left\langle #1\right|}
\newcommand{\ket}[1]{\left|#1\right\rangle}
\newcommand{\braket}[2]{\left\langle #1|#2\right\rangle}
\newcommand{\ketbra}[2]{\ket{#1}\bra{#2}}
\newcommand{\partder}[2]{\frac{\partial #1}{\partial #2}}
\DeclareMathOperator{\Tr}{Tr}
\newcommand{\Mod}[1]{\ \mathrm{mod}\ #1}
\newcommand{\bsmc}[1]{\bs{\mathcal #1}}
\newcommand{\vast}{\bBigg@{4}}
\newcommand{\Vast}{\bBigg@{5}}
\begin{document}

\title{Stationary Phase Method in Discrete Wigner Functions and Classical Simulation of Quantum Circuits}
\author{Lucas Kocia}
\affiliation{National Institute of Standards and Technology, Gaithersburg, Maryland, 20899, U.S.A.}
\author{Peter Love}
\affiliation{Department of Physics, Tufts University, Medford, Massachusetts 02155, U.S.A.}
\begin{abstract}
  One of the lowest-order corrections to Gaussian quantum mechanics in infinite-dimensional Hilbert spaces are Airy functions: a uniformization of the stationary phase method applied in the path integral perspective. We introduce a ``periodized stationary phase method'' to discrete Wigner functions of systems with odd prime dimension and show that the \(\frac{\pi}{8}\) gate is the discrete analog of the Airy function. We then establish a relationship between the stabilizer rank of states and the number of quadratic Gauss sums necessary in the periodized stationary phase method. This allows us to develop a classical strong simulation of a single qutrit marginal on \(t\) qutrit \(\frac{\pi}{8}\) gates that are followed by Clifford evolution, and show that this only requires \(3^{\frac{t}{2}+1}\) quadratic Gauss sums. This outperforms the best alternative qutrit algorithm (based on Wigner negativity and scaling as \(\sim\hspace{-3pt} 3^{0.8 t}\) for \(10^{-2}\) precision) for any number of \(\frac{\pi}{8}\) gates to full precision.
\end{abstract}
\maketitle

\section{Introduction}
\label{sec:intro}

This paper aims to bridge the gap between efficient strong simulation qubit methods, which use low stabilizer rank representations of magic states, and odd-prime-dimensional qudit methods, which have relied on Wigner negativity in past studies. We introduce a strong simulation odd-dimensional qudit algorithm that is more efficient than the current state-of-the-art algorithm based on Wigner negativity.

A \emph{strong} simulation refers to the computation of a quantum probability. This differs from \emph{weak} simulation, which refers to the computation of sample outcomes of a quantum probability distribution. Strong simulation of quantum universal circuits up to multiplicative error is \(\#P\)-hard~\cite{Kuperberg15,Fujii17,Hangleiter18}; the cost of classical strong simulation is expected to not only scale exponentially but to also be harder than tasks that can be performed efficiently on a quantum computer. However, this does not mean that there do not exist efficient methods in particular cases, such as those frequently covered in introductory textbooks on quantum mechanics.

For instance, by representing quantum circuits through the path integral perspective, efficient approximation schemes can be formed by only including lowest-order terms in \(\hbar\). It is known that quantum circuits require terms at orders greater than \(\hbar^0\) for them to attain quantum universality~\cite{Kocia17,Kocia17_2,Kocia17_3}. In the continuous-variable setting, this corresponds to requiring non-Gaussianity as a resource. The ``minimal'' non-Gaussianity that can be added takes the functional form of Airy functions, and these can be treated within the path integral formalism with just one more order of correction in terms of \(\hbar\): \(\mathcal O(\hbar^1)\)~\cite{Heller18}. Perhaps a similar systematic expansion in orders of \(\hbar\) is also possible in the discrete case, and can result in a more efficient treatment of some universal gates.

In the discrete case, non-Gaussianity has been studied in other ways not explicitly dependent on orders of \(\hbar\). Such past work on efficient classical simulation has utilized the negativity of Wigner functions or contextuality~\cite{Spekkens08,Delfosse15} as a resource of non-Gaussianity. More formally, the resource theory of contextuality in finite-dimensional systems has also been explicitly considered recently~\cite{Abramsky17,Duarte17}.

For quantum states \(\hat \rho\), the Wigner function~\cite{Wigner32,Wootters87} \(\rho(\bs x)\) is a non-negative function for stabilizer states. Since Clifford gates take stabilizer states to stabilizer states, classical simulation of stabilizer states under Clifford gate evolution using this discrete Wigner formulation can be accomplished in polynomial time and has been shown to be non-contextual~\cite{Kocia17}.

However, Clifford gates and stabilizer states do not allow for universal quantum computation. The extension of this set by \(\frac{\pi}{8}\) gates or magic states to allow for universal quantum computation introduces contextuality into \(\rho(\bs x)\), which means that they can also take negative values~\cite{Ferrie09,Mari12,Veitch12,Veitch13}. Since the contextual part of a classical simulation is solely responsible for its transition from polynomial to exponential scaling, it is useful to treat contextuality as an operational resource~\cite{Chitambar18} that can be added to a polynomially-efficient backbone in the appropriate amount to allow for a quantum process to be simulated. The goal of such a contextuality injection scheme is a classical computation cost that scales as efficiently as possible with the contextuality present.

Previous attempts at developing a framework for classical algorithms for simulation of qudit quantum circuits that leverage contextuality have done so indirectly by using Wigner function negativity~\cite{Pashayan15}. Recently, there have also been some more direct approaches~\cite{Duarte17}. In continuous systems, a historically successful approach of leveraging contextuality efficiently is to use semiclassical techniques that rely on expansion in orders of \(\hbar\). The relationship between contextuality and higher orders of \(\hbar\) in the Wigner-Weyl-Moyal (WWM) formalism has been recently established~\cite{Kocia17,Kocia17_2,Kocia17_3}. Previous derivation of WWM in odd dimensions~\cite{Almeida98,Rivas99,Kocia16} were not able to accomplish this because they began by using results from the continuous case of the WWM formalism. In the continuous regime, the stationary phase method is applies in its traditional form, and must then be ``discretized'' and ``periodized''. Following that approach does not allow for the derivation of higher order \(\hbar\) corrections because anharmonic trajectories cannot be obtained by ``periodizing'' to a discrete Weyl phase space grid~\cite{Kocia16}.

Here we are able to develop a method that includes such higher order terms for discrete odd prime-dimensional systems. Without appealing to the continuous case, we use a periodized version of Taylor's theorem to develop the stationary phase method in the discrete setting. We explain how this ``periodized'' form of the stationary phase method differs from the continuous version of the stationary phase method.

This stationary phase method allows us to derive the WWM formalism with asymptotically decreasing order \(\hbar\) corrections to a subset of unitaries that include the \(T\)-gate and \(\frac{\pi}{8}\) gates. These serve as contextual corrections to non-contextual Clifford operations. We obtain a new method for classical simulation of quantum circuits involving Clifford+\(\frac{\pi}{8}\) gates for qutrits. To demonstrate the effectiveness of our resultant method, we calculate its computational cost compared to a recent study that introduced a method to sample the same odd-dimensional Wigner distribution using Monte Carlo methods~\cite{Pashayan15}. The stationary phase method scales as \(3^{\frac{t}{2}+1}\) for calculations to full precision while the best alternative qutrit algorithm scales as \(\sim\hspace{-3pt} 3^{0.8 t}\) for calculations to \(10^{-2}\) precision. This allows the stationary phase method to be more useful for intermediate-sized circuits

Related approaches include recent proposals to use non-Gaussianity as a resource in continuous (infinite dimensional) quantum mechanics such as optics~\cite{Zhuang18,Takagi18,Albarelli18}, as well as the well-established field dealing with semiclassical propagators in continuous systems~\cite{Tannor07,Brack18,Heller18}.

This paper is organized as follows: in Section~\ref{sec:statphaseincontsys} we review the stationary phase method in continuous systems. In Section~\ref{sec:WWMformalism} we briefly introduce aspects of the WWM formalism and develop its quantum channel representation of gates. We introduce closed-form results for quadratic Gauss sums in Section~\ref{sec:Gausssums} and this motivates the introduction of the periodized stationary phase method for discrete odd prime-dimensional systems in Section~\ref{sec:statphase}. Section~\ref{sec:uniformization} compares and characterizes this discrete stationary phase method with its continuous analog and uniformization. The periodized stationary phase method is then used to evaluate the qutrit \(\pi/8\) gate magic state in Section~\ref{sec:qutritpi8gate}. We then establish a relationship between stabilizer rank, Wigner phase space dimension and number of critical points in Section~\ref{sec:stabrank}. These results allow us to produce an expression with \(3^{\frac{t}{2}}\) critical points for \(t\) states in Section~\ref{sec:twoqutritmagicstates}. We discuss future directions of study in Section~\ref{sec:future} and close the paper with some concluding remarks in Section~\ref{sec:conc}.

\section{Stationary Phase in Continuous Systems}
\label{sec:statphaseincontsys}

The stationary phase method is an exact method of reexpressing the integral of an exponential of a function in terms of a sum of contributions from critical points:
\begin{equation}
  \label{eq:semiclassprop}
\int \mathcal{D}[x] e^{\frac{i}{\hbar} S[x]} = \sum_j \int \mathcal{D}[x_{j}] \, e^{\frac{i}{\hbar} \left( S[x_{j}] + \delta S[x_{j}] + \frac{1}{2} \delta^2 S[x_{j}] + \ldots \right) },
\end{equation}
where \(S[x_j]\) is a functional over \(x_j\), and the sum is indexed by critical ``trajectories'' \(j\) defined by
\begin{equation}
  \label{eq:statphasecond}
  \left.\frac{\delta S[x]}{\delta x}\right|_{x=x_j} = 0,
\end{equation}
and we have chosen to exercise our freedom to factor out the term \(\frac{i}{\hbar}\) from \(S\), for later clarity. These contributions from critical points correspond to Gaussian integrals, or Fresnel integrals in the general case, which can be analytically evaluated, or higher-order uniformizations corresponding to two or more critical points. In general, the higher order contributions asymptotically decrease in significance. When \(S\) is the classical action of a particle, Eq.~\ref{eq:statphasecond} defines the critical points as classical trajectories that satisfy the Hamiltonian associated with \(S\). (In this case, \(S\) is functional of \(x\) because it is the integral of the Lagrangian over time, the latter of which is a function of \(x\) and \(\dot x\)).

Terminating Eq.~\ref{eq:semiclassprop}'s expansion of \(S[x]\) at second order corresponds to making a first order approximation in \(\hbar\):
\begin{eqnarray}
  \label{eq:vVMG}
\int \mathcal{D}[x] e^{\frac{i}{\hbar} S[x]} = \sum_j\left( \frac{- \frac{\partial^2 S_{j}}{\partial \bs x \partial \bs x'}}{2 \pi i \hbar} \right)^{1/2} e^{i \frac{S_j(x,x')}{\hbar}} + \mathcal{O}(\hbar^2).\nonumber
\end{eqnarray}
where the sum is over all classical paths that satisfy the boundary conditions \(\partder{S}{x} = \partder{S}{x'} = 0\). The subscript \(j\) in \(S_j(\bs x,\bs x')\) indexes  the generally infinite number of classical trajectories that satisfy the initial \(x\) and final \(x'\) conditions (i.e. \(x\) and \(x'\) are not generally sufficient to define the functional \(S[x(t)] \equiv S(x,x')\)).

Here we will be interested in the double-ended propagator, or quantum channel---the amplitude from one state to another state. Thus, we will discuss the stationary phase approximation in reference to two integrals. We will also be interested in the case that the action \(S\) can be expressed as a polynomial in its arguments and thus becomes a \emph{bona fide} function. Given functions \(\psi(x'')\) and \(\psi'(x''')\) that represent these states, and an action \(S\) that represents the system of interest (usually this is an integral of the system's Lagrangian and so is related to its Hamiltonian), the propagator between them can be written as
\begin{eqnarray}
  &&U(\psi, \psi', S) \nonumber\\
  &=& \int^{\infty}_{-\infty} \text d \bs x'' \int^{\infty}_{-\infty} \text d \bs x''' \psi^*(\bs x'') U(\bs x'', \bs x''', t) \psi'(\bs x''')\\
  &=& \int^{\infty}_{-\infty} \text d \bs x'' \int^{\infty}_{-\infty} \text d \bs x''' \psi^*(\bs x'') e^{i S(\bs x'', \bs x'''; t)/\hbar} \psi'(\bs x''')\\
  &=& \sum_j \left( S^2_j \right)^{1/2} e^{i S_j /\hbar} + \mathcal O (\hbar^2)\nonumber
\end{eqnarray}
where the action \(S\) is assumed to be a polynomial in \(\mathbb C[\bs x''_1, \ldots, \bs x''_n, \bs x'''_1, \ldots, \bs x'''_n]\) (a complex-valued function of \(\bs x''_i\) and \(\bs x'''_n\)), \(S_j\) corresponds to the action evaluated at the \(j\)th zero of \(\partder{S}{(x'',x')}\)---the \(j\)th ``critical point'' of \(S\)---and \(S^2_j\) is a function of the second derivative of \(S\) evaluated at the \(j\)th critical point.

As an example, if one chooses to represent the propagator in terms of initial and final position, \(\bra \psi = \bra q\) and \(\ket \psi' = \ket q'\) respectively, then the above equation becomes
\begin{eqnarray}
  &&U(\bs q, \bs q', S) \nonumber\\
  &=& \int^{\infty}_{-\infty} \text d \bs q'' \int^{\infty}_{-\infty} \text d \bs q''' \braket{q}{\bs q''} e^{i S(\bs q'', \bs q'''; t)/\hbar} \braket {\bs q'''}{q'}\\
  &=& \sum_j \left( \frac{\partial^2 S}{\partial q_j \partial q'_j} \right)^{1/2} e^{i S(\bs q_j, \bs q'_j; t)/\hbar} + \mathcal O (\hbar^2)\nonumber,
\end{eqnarray}
where the sum is over the critical points \(\bs q_j\) and \(\bs q'_j\) of the polynomial action \(S\) function. The first term is often called the \emph{Van Vleck propagator} or, more precisely, the \emph{Van Vleck-Morette-Gutzwiller propagator}~\cite{Van28,Morette51,Gutzwiller67}.

However, we have purposely been general with our choice of representation \(\psi\) and \(\psi'\) because instead of the position representation, we will employ the ``center-chord'' representation. This ``center-chord'' representation is more suitable for discrete Hilbert spaces~\cite{Kocia16} and produces the Wigner-Weyl-Moyal (WWM) formalism. Using this representation with the stationary phase method means that stationary phase expansions must be made around the \emph{centers} \(\bs x\) and \(\bs x'\). We briefly develop aspects of this formalism that are pertinent to gate concatenation in the next section.

\section{Discrete Wigner Functions: Wiger-Weyl-Moyal Formalism}
\label{sec:WWMformalism}

The discrete WWM formalism of quantum mechanics is equivalent to the matrix and path integral representations~\cite{Berezin77,Rivas99}. A complete description of the odd-dimensional WWM formalism can be found elsewhere~\cite{Almeida98,Rivas99,Kocia16}. Here we introduce the basic formalism of Weyl symbols of products of operators or gates, which will be necessary in order to develop a propagator-like treatment.

\subsection{WWM Formalism Basics}

We consider the case of \(d\) odd. We will later restrict this further to \(d = p^h\) for \(p\) odd prime and \(h > 0\), not to be confused with the reduced Planck's constant. We set \(\hbar = \frac{d}{2 \pi}\).

We label the computational basis for our system by \(j \in {0, 1, \ldots, d-1}\), for \(d\) odd. We identify the discrete position basis with the computational basis and define the ``boost'' operator as diagonal in this basis:
\begin{equation}
  \hat {Z}^{\delta p} \ket{j} \equiv \omega^{j \delta p} \ket{j},
\end{equation}
where \(\omega \equiv \omega(d) = e^{2 \pi i/d}\).

We define the normalized discrete Fourier transform operator to be equivalent to the Hadamard gate:
\begin{equation}
  \hat {F} = \frac{1}{\sqrt{d}}\sum_{j,k \in \mathbb{Z}/d\mathbb{Z}} \omega^{-j k} \ket{j}\bra{k}.
  \label{eq:hadamard}
\end{equation}
This allows us to define the Fourier transform of \(\hat {Z}\) as the ``shift'' operator:
\begin{equation}
   \hat {X} \equiv \hat {F} \hat {Z} \hat {F}^\dagger,
\end{equation}
which acts as
\begin{equation}
  \hat {X}^{\delta q} \ket{j} \equiv \ket{j\oplus\delta q},
\end{equation}
where \(\oplus\) denotes mod-\(d\) integer addition.
The Weyl relation holds for \(\hat X\) and \(\hat Z\):
\begin{equation}
  \hat {Z} \hat {X} = \omega \hat {X} \hat {Z}.
\end{equation}

Consistent with~\cite{Almeida98,Rivas99,Kocia16}, we define the symplectic matrix
\begin{equation}
\bs{\mathcal J} =  \left( \begin{array}{cc} 0 & -\mathbb{I}_{n}\\ \mathbb{I}_{n} & 0 \end{array}\right),
\end{equation}
for \(\mathbb{I}_n\) the \(n\)-dimensional identity.

The Weyl symbol (Wigner function) of an operator \(\hat \rho\) can be written
\begin{eqnarray}
  && {\rho}(\bs x_p, \bs x_q)\\
  &=& \Tr \left( d^{-n} \sum_{\bs \lambda \in (\mathbb Z/ d \mathbb Z)^n} e^{\frac{i}{\hbar} {\bs \lambda}^T \bs{\mathcal J} {\bs x} } e^{-\frac{i}{2\hbar} \bs \lambda_p \cdot \bs \lambda_q} \hat Z^{\bs \lambda_p} \hat X^{\bs \lambda_q} \hat \rho \right), \nonumber
  \label{eq:twogenweylsymbol}
\end{eqnarray}
where \(\bs x \equiv (\bs x_p, \bs x_q)\) is a \(2n\)-vector corresponding to a conjugate pair of \(n\)-dimensional \emph{center} momenta and positions while \(\bs \lambda \equiv (\bs \lambda_p, \bs \lambda_q)\) corresponds to its dual \(chord\) momenta and positions.

The Weyl symbol of a unitary gate is in \(\mathbb C\) and any function in \(\mathbb C\) can always be written in the form
\begin{equation}
  \label{eq:weylsymbolofgenunitary}
  U(\bs x) = \exp \left[\frac{2 \pi i}{p^h} S(\bs x) \right],
\end{equation}
where \(S(\bs x) \in \mathbb C[\bs x]\) and \(\sum_{\bs x} U(\bs x) = 1\). Given \(\bs x \in (\mathbb Z / d \mathbb Z)^n\) we can make this more precise by restricting \(S(\bs x)\) to be a polynomial in the ring \(\mathbb C \left[(\mathbb Z / d \mathbb Z)^n\right]\) where \(\deg(S) < d\) (without loss of generality). We call \(S(\bs x)\) the center-generating action or just simply the action due to its semiclassical role~\cite{Almeida98}.

\subsection{Weyl Symbols of Quantum Circuits}

We are interested in the Weyl symbol of products of operators because quantum circuits evolve by products of operators corresponding to elementary gates. The Weyl symbol of a product of an even \(2N\) number of operators \(\hat A_{2N}, \ldots, \hat A_1\) has a phase determined by this symplectic area for the odd number \(2N+1\) of centers:
\begin{eqnarray}
  \label{eq:Weylsymbolofevennumofops}
  &&(A_{2N}\cdots A_1) (\bs x) =\\
  && \frac{1}{d^{2nN}} \sum_{\bs x_1, \ldots, \bs x_{2N}} A_{2N}(\bs x_{2N}) \cdots A_1(\bs x_1) \nonumber \\
                && \times\exp\left[ \frac{i}{\hbar} \Delta_{2N+1}(\bs x, \bs x_1, \ldots, \bs x_{2N}) \right]. \nonumber
\end{eqnarray}
The generalized symplectic matrix for \(N\) degrees of freedom can be defined through the introduction of the matrix \(\mathcal J_N\) and \(H_N\)~\cite{Almeida98}:
\begin{equation}
  \mathcal J_N = \left( \begin{array}{c|c|c|c}\mathcal J & \bs 0& \bs 0& \ldots\\\hline \bs 0& \mathcal J& \bs 0& \\\hline \bs 0& \bs 0& \ddots& \\\hline \vdots & & &\ddots \end{array} \right),
\end{equation}
\begin{equation}
  H_N = \left( \begin{array}{c|c|c|c} \bs 0& -\bs 1& -\bs 1& \ldots\\\hline \bs 1& \bs 0& -\bs{1} & \\\hline \bs 1 & \bs 1& \bs 0 & \\\hline \vdots& & & \ddots \end{array}\right).
\end{equation}
The product of \(\mathcal J_N\) and the inverse of \(H_N\) is
\begin{equation}
  \mathcal J_N \mathcal H_N^{-1} = \mathcal H_N^{-1} \mathcal J_N = \left( \begin{array}{c|c|c|c}\bs 0 & \mathcal J & -\mathcal J & \ddots\\\hline -\mathcal J & \bs 0 & \mathcal J & \ddots\\\hline \mathcal J & -\mathcal J & \bs 0 & \ddots\\\hline \ddots & \ddots & \ddots & \ddots \end{array} \right).
\end{equation}

The symplectic area for an odd number of centers is defined to be
\begin{eqnarray}
  \label{eq:symplecticareaodd}
  &&\Delta_{2N+1}(\bs x, \bs x_1, \ldots, \bs x_{2N}) \\
  &=& (\bs x_1 - \bs x, \ldots, \bs x_{2N} - \bs x) \mathcal H^{-1}_{2N} \mathcal J_{2N} (\bs x_1 - \bs x, \ldots, \bs x_{2N} - \bs x). \nonumber
\end{eqnarray}

Notice here the convention of associating a factor of \(d^{-n}\) to every sum over \(\bs x_i \equiv (\bs x_{p_i}, \bs x_{q_i})\). This is a convention that we will continue to use throughout this paper. For instance, we expect
\begin{equation}
  d^{-n} \sum_{\bs x} A(\bs x) = 1,
\end{equation}
for any quantum gate \(\hat A = \hat U\) or density matrix state \(\hat A = \hat \rho\).

For example, the Wigner function of two operators \(\hat A_2 \hat A_1 \) can be written:
\begin{eqnarray}
  \label{eq:Weylsymboloftwoops}
  &&(A_{2} A_1) (\bs x)\\
  &=& \frac{1}{d^{2n}} \sum_{\bs x_1, \bs x_{2}} A_{2}(\bs x_{2}) A_1(\bs x_1) \exp\left[ \frac{i}{\hbar} \Delta_{3}(\bs x, \bs x_1, \bs x_{2}) \right], \nonumber
\end{eqnarray}
where
\begin{eqnarray}
  \Delta_3(\bs x, \bs x_1, \bs x_2) = 2 \bs x^T \bsmc J (\bs x_1 - \bs x_2) + 2 \bs x_1^T \bsmc J \bs x_2,
\end{eqnarray}
is the symplectic area associated with three centers \(\bs x_1\), \(\bs x_2\) and \(\bs x_3\).

Proceeding with a propagator-like treatment, here we will be interested in propagating from some state \(\hat \rho\) to \(\hat \rho'\) and so will be interested in extending these definitions to capture four operators---\(\hat \rho' \hat U \hat \rho \hat U^\dagger\). By Eq.~\ref{eq:Weylsymbolofevennumofops}, this requires the symplectic area for five degrees of freedom:
\pagebreak
\begin{widetext}
  \begin{eqnarray}
    \label{eq:Delta5}
  \Delta_5(\bs x, \bs x_1, \bs x_2, \bs x_3, \bs x_4) &=& (\bs x_1 - \bs x, \bs x_2 - \bs x, \bs x_3 - \bs x, \bs x_4 - \bs x)^T \mathcal H^{-1}_4 \mathcal J_4 (\bs x_1 - \bs x, \bs x_2 - \bs x, \bs x_3 - \bs x, \bs x_4 - \bs x)\\
                                  &=& 2 \bs x_2^T \bsmc J (\bs x_3 - \bs x_1) - 2 \bs x_4^T \bsmc J (\bs x_3 + \bs x_1) + 2 \bs x_4^T \bsmc J \bs x_2 + 2 \bs x^T \bsmc J (\bs x_3 + \bs x_1 - \bs x_4 - \bs x_2) + 2 \bs x_3^T \bsmc J \bs x_1. \nonumber
  \end{eqnarray}
\end{widetext}

More generally, we will be interested in an even numbers of operators: \(\hat \rho' \hat U_1 \cdots \hat U_m \rho \hat U_m^\dagger \cdots \hat U_1\). To accomplish this, we can extend the definition for the Weyl symbol of the product of \(2N\) operators given in Eq.~\ref{eq:Weylsymbolofevennumofops} to an odd number \((2N-1)\) of products by considering the Weyl symbols of \(2N\) products of operators and setting \(\hat A_{2N} = \hat I\), which has the corresponding Weyl symbol \(I(\bs x) = 1\). This leaves the \(\bs x_{2N}\) variable alone in the argument of the exponential, free to be summed over to produce a Kronecker delta function. The remaining exponentiated terms are defined to be \(\Delta_{2N}(\bs x, \bs x_1, \ldots, \bs x_{2N-1})\).

Finally, consider the Weyl symbol of the product of the four operators \(\hat I \hat A_3 \hat A_2 \hat A_1\):
\begin{eqnarray}
  \label{eq:Weylsymbolofthreeops}
  &&A_3 A_2 A_1 (\bs x) =\nonumber\\
  && \frac{1}{d^{2n} }\sum_{\bs x_3, \bs x_2, \bs x_1 \in (\mathbb Z/d \mathbb Z)^{2n}} A_3(\bs x_3) A_2(\bs x_2) A_1(\bs x_1) \\
  && \times \delta(\bs x_3 - \bs x_2 + \bs x_1 - \bs x) \exp\left[ \frac{i}{\hbar} \Delta_4(\bs x, \bs x_1, \bs x_2, \bs x_3) \right], \nonumber
\end{eqnarray}
where
\begin{eqnarray}
  \Delta_4(\bs x, \bs x_1, \bs x_2, \bs x_3) &=& 2 \bs x_2^T \bsmc J \bs x_3 - 2 \bs x_1^T \bsmc J \bs x_3\\
  &&+ 2 \bs x_1^T \bsmc J \bs x_2 + 2 \bs x^T \bsmc J (\bs x_1 + \bs x_3 - \bs x_2). \nonumber
\end{eqnarray}
Notice that the prefactor fell by \(d^{2n}\) compared to Eq.~\ref{eq:Weylsymbolofevennumofops} since that is the dimension of the vector inside the Kronecker delta function. This agrees with our established standard of a factor of \(d^{-n}\) for every sum over \(\bs x_i\) since after the Kronecker delta function is summed over, the resulting expression involves only a double sum over phase space and so we expect two factors of \(d^{-n}\).

With the WWM formalism for the Weyl symbols of products of operators thus established, we move on to consider quantum channels, which are double-ended propagators.

In general, past treatment of the quantum propagation of a state \(\hat \rho\) by a unitary \(\hat U\) in the WWM formalism have been content with Eq.~\ref{eq:Weylsymbolofthreeops}, setting \(\hat A_2 = \hat \rho'\), and \(\hat A_3 = \hat A_1^\dagger = \hat U\)~\cite{Almeida98,Rivas99}. However, such a treatment intimately ties the propagator to the initial state, \(\hat \rho'\), and to act on a final state necessarily involves using Eq.~\ref{eq:Weylsymboloftwoops} setting \(\hat A_2 = \hat \rho\), the final state, and \(\hat A_1 = \hat I \hat U \hat \rho' \hat U^\dagger\). This leaves an intermediary phases \(\Delta_3\) to sum over and as a result does not produce a self-contained double-ended propagator (a quantum channel) that acts on states by summation without any additional functions or phases, as in Eq.~\ref{eq:vVMG}. A double-ended propagator form is often both more familiar and more useful for generality.

Using the additional general formulae defined in this Section, notably Eq.~\ref{eq:Weylsymbolofevennumofops}, we can develop such a double-ended propagator or quantum channel.

\subsection{Weyl Symbols for Quantum Channels}

We define \(U U^*(\bs x, \bs x')\) to be the Weyl symbol of a quantum channel that can take any initial state \(\rho'\) to a final states \(\rho\) by summing over their phase space variables:
\begin{equation}
  \label{eq:quantumchannel}
  (\rho U \rho' U^\dagger)(\bs x) \equiv \frac{1}{d^{2n}} \sum_{\bs x, \bs x' \in (\mathbb Z/ d \mathbb Z)^{2n}} \rho(\bs x) UU^*(\bs x, \bs x') \rho'(\bs x').
\end{equation}

To obtain \(U U^*(\bs x, \bs x')\), we use Eq.~\ref{eq:Weylsymbolofevennumofops} for \(2N=4\). We set \(A_4(\bs x_4) = 1\) and sum \(\bs x_4\) away, set \(A_2(\bs x_2) = \rho'\) but then discard it along with its sum over \(\bs x_2\), and set \(A_3(\bs x) = A_1^*(\bs x) = U(\bs x)\). Relabelled the variables of summation, this produces:
\begin{eqnarray}
  U U^*(\bs x, \bs x') &\equiv& \frac{1}{d^{2n}} \sum_{\bs x_1, \bs x_2, \bs x_3 \in (\mathbb Z/p \mathbb Z)^{2n}} U(\bs x_1) U^*(\bs x_2) \nonumber\\
  \label{eq:Weylprop}
                       && \times\exp\left[ \frac{2 \pi i}{d} \Delta_5(\bs x_3, \bs x, \bs x_1, \bs x', \bs x_2)\right],
\end{eqnarray}
where we also used the identity \(\Delta_5(\bs x, \bs x_1, \bs x', \bs x_2, \bs x_3) = \Delta_5(\bs x_3, \bs x, \bs x_1, \bs x', \bs x_2)\). This is equivalent to using Eq.~\ref{eq:Weylsymbolofthreeops} for \(A_2(\bs x_2) = \rho'(\bs x_2)\) and discarding the sum over \(\bs x_2\).

Notice in Eq.~\ref{eq:Weylprop} the inclusion of the full phase \(\Delta_5\) in the expression. As a result, acting on \(\rho(\bs x)\) or \(\rho'(\bs x')\) by \(U U^*(\bs x, \bs x')\) as in Eq.~\ref{eq:Weylprop} simply involves summing over \(\bs x\) or \(\bs x'\) respectively; there is no intermediate phase \(\Delta_3\) as in Eq.~\ref{eq:Weylsymboloftwoops} that must also be summed over. 

Our choice of normalization in Eq.~\ref{eq:Weylprop} allows us to continue the standard of including a factor of \(d^{-n}\) to every sum over a pair of conjugate phase space degrees of freedom in Eq.~\ref{eq:quantumchannel}. This means that \(d^{-2n} \sum_{\bs x, \bs x' \in (\mathbb Z/d \mathbb Z)^{2n}} UU^*(\bs x, \bs x') = d^{2n}\) while \(d^{-n} \sum_{\bs x \in (\mathbb Z/d \mathbb Z)^{2n}} UU^\dagger(\bs x) = d^n = \Tr( \hat U \hat U^\dagger)\).

The notation \(U U^*(\bs x, \bs x')\) for the Weyl symbol of this propagator is perhaps clunky, since it can be mistaken for the Weyl symbol of \(\hat U \hat U^\dagger\), \(U U^\dagger(\bs x)\). But as Eq.~\ref{eq:Weylprop} denotes, it is the Weyl symbol of a double-ended propagator and so has two arguments, \(\bs x\) \emph{and} \(\bs x'\)\footnote{To avoid any confusion, we point out that \(U U^*(\bs x, \bs x') \ne U U^\dagger(\bs x) = 1\); \(U U^\dagger(\bs x)\) is the Weyl symbol of \(\hat U \hat U^\dagger = \hat I\) and so is a function of only one variable, \(\bs x\); 
\(U U^*(\bs x, \bs x')\) can most appropriately be associated with the Weyl symbol of the superoperator \(\odot \hat U \bullet \hat U^\dagger\), where \(\odot\) and \(\bullet\) denote the operators the superoperator acts on, and so is a function of two variables, (cont'd on pg. \(7\)) (cont'd from pg. \(5\)) denoted \(\bs x\) and \(\bs x'\).\newline
\indent As added incentive, \(U U^*(\bs x, \bs x')\) is naively a simpler function to deal with than the Weyl symbol of a unitary operator, \(U(\bs x)\). \(U U^*(\bs x, \bs x')\) is real-valued just like the Weyl symbols (\(\rho(\bs x)\)) of density functions (Wigner functions), while \(U(\bs x)\) is generally complex-valued. Also, \(U U^*(\bs x, \bs x')\) resembles traditional propagators more closely in that it can be said to take states from \(\bs x'\) to \(\bs x\), whereas \(U(\bs x)\) requires pairing with \(U^*(\bs x)\) and summation over intermediate values with the appropriate phase as in Eq.~\ref{eq:Weylsymbolofthreeops}, to act on a state.}

For Clifford gates \(\hat U\), the associated Weyl symbol \(U(\bs x)\) is a non-negative map; \(U(\bs x)\) takes non-negative states to non-negative states. This is clearer for \(U U^*(\bs x, \bs x')\), which becomes a non-negative real function: \(U U^*(\bs x, \bs x') \ge 0\). This parallels the non-negativity of Wigner functions \(\rho(\bs x)\) of stabilizer states \(\hat \rho\). The extension of this set by \(\frac{\pi}{8}\) gates or magic states to allow for universal quantum computation introduces contextuality into \(U U^*(\bs x, \bs x')\) and \(\rho(\bs x)\), respectively, which means that they can now also have (real) negative values.

In the next section we will explore how the non-negative real function \(U U^*(\bs x, \bs x')\) produced by a Clifford gate can be described by a single Gauss sum.

\section{Gauss Sums}
\label{sec:Gausssums}

We now restrict to \(d=p^h\) for \(p\) odd prime and \(h \in \mathbb Z\).

Consider, a simplified definition of the Gauss sum from~\cite{Dabrowski97,Fisher02}:
\begin{eqnarray}
  G_h(\bs A, \bs v) = p^{nh/2} \sum_{\bs x \in (\mathbb Z/p^h \mathbb Z)^n} &&\exp \left[ \frac{2\pi i}{p^h} \left( \bs x^T \frac{\bs A}{2} \bs x \right) \right] \nonumber\\
  \label{eq:Gausssum}
                    && \times \exp \left[ \frac{2 \pi i}{p^h} \bs v^T \bs x \right],
\end{eqnarray}
for \(\bs A \in \mathbb Z^{n \times n}\), \(\bs v \in \mathbb Z^n\), and where for \(h\le 0\), \(G_h(\bs A, \bs v) = p^{nh/2}\).

The result of evaluating this sum can be simplified and summarized in the following list~\cite{Dabrowski97}:
\begin{proposition}{Gauss Sums}
  \label{prop:Gausssums}
\begin{enumerate}
\item[(a)] If \(\det \bs A \ne 0\) and \(h\) is large enough that \(\bs A' = p^h \bs A^{-1}\) has entries in \(\mathbb Z/p \mathbb Z\). 
Then \(G_h(\bs A, \bs 0)\) is \(p^{\frac{1}{2}v_p(\det \bs A)}\) times a fourth root of unity, where \(v_p(\alpha)\) is the highest exponent such that \(p^{v_p(\alpha)}\) divides \(\alpha\).
\item[(b)] If \(\exists \, \bs u \in \mathbb Z / p^n \mathbb Z\) such that \(\bs v = \bs A \bs u\) then \(G_h(\bs A, \bs v) = G_h(\bs A) \exp \left( -\frac{\pi i}{p^h} \bs u^t \bs A \bs u \right)\). Otherwise, \(G_h(\bs A, \bs v) = 0\).
\item[(c)] Either \(G_1(\bs A, \bs v) = 0\) or it is \(p^{(n-r)/2}\) times a root of unity as in (a), where \(r\) is the rank of \(\bs A\). 
\end{enumerate}
\end{proposition}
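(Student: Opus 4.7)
The plan is to prove each part by reducing the multivariate Gauss sum in Eq.~\ref{eq:Gausssum} to products of one-variable quadratic Gauss sums, for which the evaluations are classical. Only the symmetric part of $\bs A$ contributes to $\bs x^T \bs A \bs x$, so without loss of generality I may assume $\bs A = \bs A^T$.

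I would tackle part (b) first, since it sets up the other parts. Suppose $\bs v = \bs A \bs u$ for some $\bs u \in (\mathbb Z/p^h\mathbb Z)^n$. Performing the shift $\bs x \mapsto \bs y - \bs u$ in the sum, the linear term $\bs v^T \bs x = \bs u^T \bs A(\bs y - \bs u)$ combines with the cross term from the expanded quadratic to leave only $\bs y^T \tfrac{\bs A}{2}\bs y$ plus a constant $-\tfrac{1}{2}\bs u^T \bs A \bs u$, which can be pulled outside the sum to yield exactly $G_h(\bs A)\exp(-\tfrac{\pi i}{p^h}\bs u^T \bs A \bs u)$. For the converse, if $\bs v \notin \mathrm{image}(\bs A)$, then by symmetry of $\bs A$ there exists $\bs w \in \ker(\bs A)$ with $\bs w^T \bs v \not\equiv 0 \pmod{p^h}$. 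The substitution $\bs x \mapsto \bs x + \bs w$ preserves the quadratic $\bs x^T\tfrac{\bs A}{2}\bs x$ but multiplies the sum by the nontrivial character value $\exp(\tfrac{2\pi i}{p^h}\bs w^T \bs v)$, forcing $G_h(\bs A,\bs v)=0$.

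For part (a), I would diagonalize $\bs A$ by a unimodular change of variables. Over $\mathbb Z/p^h\mathbb Z$ with $p$ odd, the $p$-adic spectral theorem for symmetric bilinear forms gives $\bs P \in \mathrm{GL}_n(\mathbb Z/p^h\mathbb Z)$ with $\bs P^T \bs A \bs P = \mathrm{diag}(d_1,\ldots,d_n)$, so the sum factors as $\prod_i G_h(d_i,0)$. Each one-dimensional sum is evaluated by writing $d_i = p^{a_i} d_i'$ with $\gcd(d_i',p)=1$: the change of variable $x = p^{h-a_i} y + z$ splits the sum into a trivial piece of size $p^{a_i}$ and a quadratic Gauss sum modulo $p^{h-a_i}$. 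Iterating the standard reduction $G_h(d_i',0) = p\, G_{h-2}(d_i',0)$ and using $\sum_{x\in\mathbb F_p}e^{2\pi i d_i' x^2/p} = \pm\sqrt{p}$ or $\pm i\sqrt{p}$ (a fourth root of unity times $\sqrt p$), each factor contributes $p^{a_i/2}$ times a fourth root of unity after combining with the prefactor $p^{nh/2}$. Summing $\sum_i a_i = v_p(\det \bs A)$ gives the claimed magnitude $p^{v_p(\det \bs A)/2}$. Part (c) is the special case $h=1$: over the field $\mathbb F_p$, diagonalizing gives $r$ nondegenerate directions each contributing a $\sqrt{p}$ times a root of unity, and $n-r$ kernel directions that either annihilate the sum (by part (b) applied to the linear term $\bs v$) or contribute $p^{n-r}$ trivially; combined with the $p^{n/2}$ prefactor this produces $p^{(n-r)/2}$ times a root of unity.

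The main obstacle will be justifying the diagonalization step in part (a), since $\mathbb Z/p^h\mathbb Z$ is not a field when $h>1$; one must handle diagonal entries with intermediate $p$-adic valuations carefully, and the bookkeeping of how partial degeneracies collapse variables to constants versus letting them range over full residues modulo $p^h$ must match up exactly with $v_p(\det \bs A)$. Once the diagonal form is in hand, the rest is mechanical evaluation of one-dimensional quadratic Gauss sums and a character-sum vanishing argument.
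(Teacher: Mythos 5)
The paper does not prove Proposition~1 at all: it is imported verbatim from Dabrowski and Fisher, and only the special case Corollary~1 is proved in the text. So there is no ``paper's own proof'' to compare against; what follows is an assessment of your reconstruction on its own merits.

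Your outline is the standard proof and is essentially sound. The completing-the-square argument for the forward direction of part~(b), the character-sum vanishing for the converse, the factorization of the multivariate sum into one-dimensional pieces after symmetric diagonalization, the reduction $G_h(d',0)=p\,G_{h-2}(d',0)$ for a $p$-adic unit $d'$, and the termination at the classical quadratic Gauss sum over $\mathbb F_p$ are all correct in spirit. You also correctly identify the real technical point: diagonalizing a symmetric matrix over $\mathbb Z/p^h\mathbb Z$ (equivalently $\mathbb Z_p$) rather than over a field. For $p$ odd this is a standard fact (symmetric bilinear forms over $\mathbb Z_p$, $p$ odd, admit an orthogonal basis, so $\bs P^T\bs A\bs P$ is diagonal for some $\bs P\in\mathrm{GL}_n(\mathbb Z_p)$), and you are right that the bookkeeping of the elementary divisors $p^{a_i}$ against $v_p(\det\bs A)=\sum a_i$ is where the exponent in part~(a) comes from. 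Similarly the converse of part~(b) really does hinge on the fact that, after diagonalization, $\bs v\notin\mathrm{image}(\bs A)$ forces some coordinate $\tilde v_i$ to have $p$-adic valuation $<a_i$, and then $\bs w=p^{h-a_i}\bs e_i\in\ker\bs A$ exhibits the nontrivial character; the phrase ``by symmetry'' alone does not deliver this over a non-field ring, so the orthogonal diagonalization needs to be invoked there too, not only in part~(a).

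One thing worth flagging explicitly: you are implicitly correcting a sign error in the paper's definition, Eq.~\eqref{eq:Gausssum}. As written the prefactor is $p^{+nh/2}$, which would make $G_h(\bs A,\bs 0)=p^{nh+\frac12 v_p(\det\bs A)}$ times a root of unity, contradicting the magnitude claimed in parts~(a) and~(c). The normalization that makes the proposition true (and that also makes Theorem~\ref{th:statphase}(b) dimensionally consistent) is $G_h=p^{-nh/2}\sum_{\bs x}(\cdots)$. Your bookkeeping (``each factor contributes $p^{a_i/2}$ times a fourth root of unity after combining with the prefactor'') is exactly the arithmetic one gets with $p^{-nh/2}$, so you are using the right definition; just note that the paper's displayed formula is off by a sign in the exponent, as is the stray ``$\bs u\in\mathbb Z/p^n\mathbb Z$'' in part~(b), which should read $\bs u\in(\mathbb Z/p^h\mathbb Z)^n$.
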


There is a more broad application of this result to \(p\)-adic number theory, a topic that is briefly introduced in the Appendix~\ref{app:padicnumbers}, but is beyond the scope of this work.

A simple case of this result produces one of the most often used equations in this paper:
\begin{corollary}[Kronecker Delta Function]
\label{corr:Gausssums}
  \begin{equation}
    p^{-n} \sum_{\substack{\bs x \in\\ (\mathbb Z/ p \mathbb Z)^n}} \exp \left( \frac{2 \pi i}{p} \bs v \cdot \bs x \right) = \prod_{i=1}^n \delta (v_i \Mod p) \equiv \delta(\bs v),
  \end{equation}
\end{corollary}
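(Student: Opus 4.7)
The plan is to prove the corollary directly, by factoring the multi-dimensional exponential sum into a product of one-dimensional sums and evaluating each one using the geometric-series identity for $p$-th roots of unity. This is more transparent than trying to invoke Proposition~\ref{prop:Gausssums}(b), whose statement is phrased for Gauss sums with a genuinely quadratic phase; the corollary is the degenerate ``$\bs A = 0$'' specialization, and it is simpler to redo the short computation than to track through the normalization conventions of Eq.~\ref{eq:Gausssum}.

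First, I would write $\bs v \cdot \bs x = \sum_{i=1}^n v_i x_i$ so that the exponential and the sum over $\bs x \in (\mathbb Z/p\mathbb Z)^n$ both factor coordinatewise:
\[ p^{-n} \sum_{\bs x \in (\mathbb Z/p\mathbb Z)^n} \exp\!\left(\frac{2\pi i}{p}\, \bs v \cdot \bs x\right) = \prod_{i=1}^n \left( p^{-1} \sum_{x_i = 0}^{p-1} e^{2\pi i v_i x_i / p}\right). \]
Next, I would evaluate each factor in two cases. If $v_i \equiv 0 \Mod p$, every summand equals $1$ and the factor is $p^{-1} \cdot p = 1$. If $v_i \not\equiv 0 \Mod p$, then since $p$ is prime $e^{2\pi i v_i / p}$ is a primitive $p$-th root of unity, and the finite geometric series gives
\[ \sum_{x_i = 0}^{p-1} e^{2 \pi i v_i x_i / p} = \frac{e^{2 \pi i v_i} - 1}{e^{2 \pi i v_i / p} - 1} = 0. \]
In both cases the factor equals $\delta(v_i \Mod p)$, and multiplying the $n$ factors produces the claimed product $\prod_{i=1}^n \delta(v_i \Mod p) \equiv \delta(\bs v)$.

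Honestly, there is no substantive obstacle here — this is the textbook orthogonality of additive characters on $\mathbb Z/p\mathbb Z$. The only item requiring any care is the bookkeeping between the $p^{-n}$ normalization chosen in the statement of the corollary and the $p^{nh/2}$ prefactor in the Gauss-sum definition of Eq.~\ref{eq:Gausssum}; once these conventions are aligned the result is immediate, and it is this character-orthogonality form that will be invoked repeatedly later to collapse sums over symplectic-area phases into Kronecker deltas on the phase-space variables.
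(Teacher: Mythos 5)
Your proof is correct, but it takes a genuinely different route from the one in the paper. The paper proves this corollary by specializing Proposition~\ref{prop:Gausssums}(b) to the degenerate case $\bs A = \bs 0$: when $\bs v \ne \bs 0$ no $\bs u$ with $\bs A \bs u = \bs v$ exists, so the Gauss sum vanishes, while for $\bs v = \bs 0$ the sum reduces to $G_1(\bs 0) = p^n$ (up to normalization). You instead bypass the Gauss-sum machinery entirely, factor the $n$-dimensional sum into a product of one-dimensional sums, and evaluate each factor by the geometric-series orthogonality of additive characters on $\mathbb Z/p\mathbb Z$. Both arguments are valid. Your approach is more elementary and self-contained (it never touches the normalization conventions of Eq.~\ref{eq:Gausssum}), and you correctly identify the one bookkeeping hazard of the Proposition route. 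The paper's approach has the pedagogical virtue of exhibiting the Kronecker delta as the $\bs A = \bs 0$ limit of the same Gauss-sum formalism used throughout Sections~\ref{sec:Gausssums}--\ref{sec:statphase}, so it reads as a sanity check on Proposition~\ref{prop:Gausssums} rather than an independent fact. One tiny overstatement in your argument: you do not need $e^{2\pi i v_i/p}$ to be a \emph{primitive} $p$-th root of unity for the geometric series to vanish — only that it be a nontrivial $p$-th root of unity (i.e., $\ne 1$), which already follows from $v_i \not\equiv 0 \Mod p$. Primality of $p$ does make it primitive, but that extra strength is unused.
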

\begin{proof}
  \(\bs A = \bs 0\) and so by Proposition~\ref{prop:Gausssums}(b), the sum is equal to \(0\) if \(\bs v \ne \bs 0\) since there does not exist any \(\bs u \in \mathbb Z/ p^n \mathbb Z\) such that \(\bs A \bs u \ne 0\). On the other hand, if \(\bs v = \bs 0\), then Proposition~\ref{prop:Gausssums}(b) states that the sum is equal to \(p^{-3n/2} G_1(0) \exp \left( -\frac{\pi i}{p} \bs u^t \bs A \bs u \right) = 1 \times \exp \left( -\frac{\pi i}{p} \bs v^t \bs v \right) = 1\). This is the definition of the Kronecker delta function.
\qed
\end{proof}

We now give an example of the application of this Proposition to Clifford gates.

\subsection{Gauss Sums on Clifford Gates}
\label{subsec:GausssumsonCliffordgates}

As we saw in Eq.~\ref{eq:weylsymbolofgenunitary}, the Weyl symbol of a unitary gate such as a Clifford gate is \(U(\bs x) = \exp \left[\frac{2 \pi i}{p^h} S(\bs x) \right]\). Clifford gates have actions that fall into a subring of \(\mathbb C \left[(\mathbb Z / d \mathbb Z)^n\right]\): \(S(\bs x) \in \mathbb Z / d \mathbb Z \left[(\mathbb Z / d \mathbb Z)^n\right]\) where \(\deg (S) \le 2\)~\cite{Kocia16}. We can write the action for a Clifford gate as:
\begin{equation}
  S(\bs x) = \bs x^T \bs B \bs x + \bs \alpha^T \bsmc J^T \bs x,
\end{equation}
where \(\bs B\) is a \(2n \times 2n\) symmetric matrix with elements in \(\mathbb Z/d\mathbb Z\).

This simpler form makes Clifford gates act on states in a correspondingly simpler manner. Namely, Clifford gates transform Wigner functions by a linear symplectic transformation~\cite{Kocia16}. One way to see this is by deriving functions that act as Hamiltonians from the above action, which can be transformed into the computational basis by Legendre transform~\cite{Kocia16}. These Hamiltonians are similarly harmonic and so their effect on states can be described that the action of harmonic Hamiltonians acting on continuous degrees of freedom can be: by rigid symplectic transformations. In the discrete case, this translated to taking Wigner arguments (\(\bs x\)) to themselves while preserving area (a symplectic linear transformation). We proceed to see how this is manifested from the perspective of the double-ended propagator \(U U^*(\bs x, \bs x')\).

Let us evaluate \(U U^*(\bs x, \bs x')\) from this perspective:
\begin{eqnarray}
  U U^*(\bs x, \bs x') &=& \frac{1}{p^{2n}} \sum_{\bs x_1, \bs x_2, \bs x_3 \in (\mathbb Z/p \mathbb Z)^{2n}} U(\bs x_1) U^*(\bs x_2) \nonumber\\
                       && \times\exp\left[ \frac{2 \pi i}{p} \Delta_5(\bs x_3, \bs x, \bs x_1, \bs x', \bs x_2)\right],
\end{eqnarray}
where \(U(\bs x) = \exp\left[ \frac{2 \pi i}{p} \left(\bs x^T \bs B \bs x + \bs \alpha^T \bsmc J^T \bs x \right) \right]\).

\begin{widetext}
Applying Proposition~\ref{prop:Gausssums} produces
\begin{equation}
  U U^*(\bs x, \bs x') = G_1(\bs A, \bs 0) \sum_{\bs x_3 \in (\mathbb Z/p \mathbb Z)^{2n}} \exp \left[ -\frac{\pi i}{p} \left( \bs x_3^T \bs A' \bs x_3 + 2 \bs v'^T \bs x_3 + 2 c' \right) \right].
\end{equation}
where \(\bs A' = 0\), \(2 \bs v'^T = (2 \bs x' + (\bs \alpha \Mod p))^T (4 \bsmc J) (\bsmc J \bs B - \bs I)^{-1} + 4 (\bs x + \bs x')^T \bsmc J\), and \(2 c' = 4 \bs x'^T [4 \bs \beta \bsmc J - 4 \bs \beta \bsmc J \bs B^{-1} \bsmc J - \bsmc J] \bs x + 8 \bs \alpha^T \bs \beta \bsmc J [1 - \bs B^{-1} \bsmc J] \bs x\) for \(\bs \beta = -\frac{1}{2} \bsmc J^T (\bs B - \bsmc J \bs B^{-1} \bsmc J)^{-1} \) (see Appendix~\ref{app:Cliffgate}).
\end{widetext}

We can define a symplectic matrix \(\bsmc M\) by relating it to the symmetric matrix \(\bs B\) through the Cayley parametrization:
\begin{equation}
  \bsmc M \equiv (1 + \bsmc J \bs B)^{-1} (1 - \bsmc J \bs B) = (1 - \bsmc J \bs B) (1 + \bsmc J \bs B)^{-1}.
\end{equation}
By Corollary~\ref{corr:Gausssums}, since \(\bs A' = \bs 0\), the sum over \(\bs x_3\) is non-zero if and only if \(\bs v' = 0\) (see Appendix~\ref{app:Cliffgate}):
\begin{eqnarray}
  \bs x &=& \bsmc M \left(\bs x' + \frac{\bs \alpha}{2}\right) + \frac{\bs \alpha}{2}.
\end{eqnarray}

Or, in other words, we are presented with the usual plane wave sum identity for a Kronecker delta function.

For values of \(\bs x'\) and \(\bs x\) that satisfy \(\bs x = \bsmc M (\bs x' + \frac{\bs \alpha}{2}) + \frac{\bs \alpha}{2}\), \(U U^*(\bs x, \bs x') = d^{2n}\). There is only one solution \(\bs x\) given \(\bs x'\), and so \(d^{-2n} \sum_{\bs x, \bs x' \in (\mathbb Z/d \mathbb Z)^{2n}} UU^*(\bs x, \bs x') = d^{2n}\). This result affirms that the Clifford gate symplectically transforms Wigner functions by point-to-point symplectic (area-preserving) permutation as discussed in the beginning of this Section.

Note that this approach is no longer suitable if \(S(\bs x)\) has powers higher than quadratic, or, if even though it is quadratic, \(\bs B\) and \(\bs \alpha\) do not have elements in \(\mathbb Z\). This is because the Gauss sum results from Proposition~\ref{prop:Gausssums} cannot apply in these cases. To handle a more general form for \(S(\bs x)\), the the Gauss sum results of this section must be extended. We explore such a generalization in Section~\ref{sec:statphase} where we develop the discrete and periodic stationary phase method.

\section{Stationary Phase}
\label{sec:statphase}

We will now proceed to momentarily consider the more general case of a sum over an exponentiated polynomial \(S(\bs x)\) that lies in \(\mathbb Q_p[x_1, \ldots, x_n]\), the field of p-adic numbers (see Appendix~\ref{app:padicnumbers}), instead of in \(\mathbb Z/p \mathbb Z[x_1, \ldots, x_n]\), but such that \(\partder{S(\bs x)}{\bs x}\) still has coefficients in \(\mathbb Z_p\). We do this so as to make use of a result from this more general treatment since, in this case, we can no longer derive a simple Gauss sum result. However, if we split the sum up, we can obtain a closed form in terms of several Gauss sums or sums of higher order with arguments in \(\mathbb Z/p \mathbb Z[x_1, \ldots, x_n]\) again. 

To accomplish this we express the sum, \(\mathcal I\), in terms of local terms \(\mathcal I_{\bar{\bs x}}\):
\begin{eqnarray}
  \label{eq:expsum}
  \mathcal I &=& \sum_{\bs x \in (\mathbb Z/p^m \mathbb Z)^n} \exp \left[ \frac{2 \pi i}{p^m} S(\bs x) \right]\\
             &=& \sum_{\bar{\bs x} \in (\mathbb Z/p^j \mathbb Z)^n} \left\{ \sum_{\substack{{\bs x} \in (\mathbb Z/p^m \mathbb Z)^n\\\bs x \Mod p^j = \bar{\bs x}}} \exp \left[ \frac{2 \pi i}{p^m} S(\bs x) \right] \right\} \nonumber\\
             &\equiv& \sum_{\bar{\bs x} \in (\mathbb Z/p^j \mathbb Z)^n} \mathcal I_{\bar{\bs x}}, \nonumber
\end{eqnarray}
where \(j < m\).

So far this is simply a reorganization of the order of terms in the summation. However, since these local terms are over coarser periodic domains, their polynomial power can be reduced in much the same way that polynomial powers can be lowered in proximal Taylor series expansions over a continuous domain. In this case, since we are dealing with a periodic discrete domain, such a simplification relies on the ``polynomial version'' of Taylor's theorem~\cite{Bourbaki90},
\begin{equation}
  S(a + p^j x) = \sum_{\alpha} \frac{1}{\alpha !} \frac{\partial^\alpha}{\partial x^\alpha} \left[ S(x) \right]_{x = a} \cdot p^{j|\bs \alpha|} x^\alpha,
\end{equation}
where the sum is over the components of \(0 \le \alpha_i \in \mathbb Z^n\) such that \(|\bs \alpha| \equiv \prod_i (\alpha_i !)\).
Using the polynomial version of Taylor's theorem, Fisher proved the following Lemma~\cite{Fisher02}:
\begin{lemma}[Fisher1]
  \label{lem:fisher1}
  Given positive integers \(j\) and \(k\), a polynomial \(S(x) \in \mathbb Q_p[x_1, \ldots x_n]\), and a point \(a \in \mathbb Z^n_p\), let \(S_a^{(<k)}(x)\) denote the Taylor polynomial of degree \(k-1\) for \(S(a+x)\). Assume that \(p^t S(x)\) and the partial derivatives have coefficients in \(\mathbb Z_p\), for some integer \(t\), and let
  \begin{equation}
    \label{eq:epsilonk}
    \epsilon = \epsilon_k = \min\{t, \lfloor \log k/\log p \rfloor \}.
  \end{equation}
  Then \(S(a + p^j x) = S_a^{(<k)}(p^j x) \Mod p^{jk-\epsilon}\) as polynomials in \(x\).
\end{lemma}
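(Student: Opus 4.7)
The plan is to apply the polynomial version of Taylor's theorem quoted immediately before the lemma statement, and then bound the $p$-adic valuation of every coefficient of the resulting remainder, using each of the two hypotheses separately.

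First, I would substitute $p^j x$ for $x$ in the polynomial Taylor expansion of $S$ around $a$ to obtain
\[
S(a+p^j x) \;=\; \sum_{\alpha} \frac{\partial^\alpha S(a)}{\alpha!}\, p^{j|\alpha|}\, x^\alpha,
\]
and observe that truncating at $|\alpha|\le k-1$ picks off exactly $S_a^{(<k)}(p^jx)$. Consequently the lemma reduces to showing that for every multi-index $\alpha$ with $|\alpha|\ge k$,
\[
v_p\!\left(\frac{\partial^\alpha S(a)}{\alpha!}\, p^{j|\alpha|}\right)\;\ge\; jk-\epsilon.
\]

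I would then derive two independent lower bounds for this valuation and take whichever is better. For the first bound I would use that $\tfrac{1}{\alpha!}\partial^\alpha$ is the ``divided derivative'', acting on monomials as $\tfrac{1}{\alpha!}\partial^\alpha(x^\beta)=\binom{\beta}{\alpha}x^{\beta-\alpha}$; since binomial coefficients are integers, this operator preserves $\mathbb{Z}_p$-integrality of coefficients. Thus $\tfrac{p^t}{\alpha!}\partial^\alpha S\in\mathbb{Z}_p[x]$, and evaluating at $a\in\mathbb{Z}_p^n$ gives $v_p(\tfrac{\partial^\alpha S(a)}{\alpha!})\ge -t$, so the full coefficient has $v_p\ge j|\alpha|-t\ge jk-t$.

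For the second bound I would use the hypothesis that the partial derivatives of $S$ lie in $\mathbb{Z}_p[x]$: since differentiating a $\mathbb{Z}_p$-polynomial again yields a $\mathbb{Z}_p$-polynomial, this propagates inductively to $\partial^\alpha S \in \mathbb Z_p[x]$ for every $|\alpha|\ge 1$, and in particular $\partial^\alpha S(a)\in\mathbb{Z}_p$. Writing $\tfrac{\partial^\alpha S(a)}{\alpha!}=\sum_\beta c_\beta\binom{\beta}{\alpha}a^{\beta-\alpha}$ and estimating the $v_p$ of each binomial coefficient via Kummer's theorem (the number of base-$p$ carries in $\alpha+(\beta-\alpha)=\beta$), combined with Legendre's formula $v_p(\alpha!)=\sum_{i\ge 1}\lfloor|\alpha|/p^i\rfloor$, one sharpens the estimate to
\[
v_p\!\left(\frac{\partial^\alpha S(a)}{\alpha!}\right)\ \ge\ -\lfloor\log_p|\alpha|\rfloor.
\]
Since $m\mapsto m-\lfloor\log_p m\rfloor$ is nondecreasing for $m\ge 1$, multiplying by $p^{j|\alpha|}$ with $|\alpha|\ge k$ yields $v_p\ge jk-\lfloor\log_p k\rfloor$. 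Taking the minimum loss from the two bounds gives $v_p\ge jk-\min\{t,\lfloor\log_p k\rfloor\}=jk-\epsilon$, so the remainder vanishes modulo $p^{jk-\epsilon}$ coefficient-by-coefficient, proving the polynomial congruence.

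The hard part will be Step~3, namely obtaining the logarithmic refinement. The crude estimate $v_p(\alpha!)\le |\alpha|/(p-1)$ is much weaker than $\lfloor\log_p|\alpha|\rfloor$ (indeed $v_p((p^r)!)=(p^r-1)/(p-1)\gg r$), so one cannot simply combine $v_p(\partial^\alpha S(a))\ge 0$ with division by $\alpha!$. The key is to track how the factorial denominators in the divided-derivative expansion $\tfrac{\partial^\alpha S(a)}{\alpha!}=\sum_\beta c_\beta\binom{\beta}{\alpha}a^{\beta-\alpha}$ cancel against the $p$-adic valuations of the integer coefficients forced by the partial-derivative hypothesis; this is precisely where Kummer's theorem on base-$p$ carries enters and delivers the $\lfloor\log_p k\rfloor$ factor in $\epsilon$.
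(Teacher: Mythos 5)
The paper does not prove this lemma: it is stated as a cited result, with the attribution ``Fisher proved the following Lemma~[Fisher02],'' so there is no in-paper proof for you to be measured against. Judging your proposal on its own, the overall architecture is right and would be accepted: expand via the polynomial Taylor theorem, reduce to a coefficient-by-coefficient valuation bound on the remainder terms $\frac{\partial^\alpha S(a)}{\alpha!}p^{j|\alpha|}x^\alpha$ for $|\alpha|\ge k$, establish two independent lower bounds (one per hypothesis), and use the monotonicity of $m\mapsto jm-\lfloor\log_p m\rfloor$ to push $|\alpha|\ge k$ down to $k$. Your first bound (divided derivatives preserve $\mathbb{Z}_p$-integrality, apply to $p^tS$, get $v_p\ge jk-t$) is correct exactly as written.

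The gap is in the second bound, and it is concrete: Kummer's theorem and Legendre's formula do not, by themselves, deliver $v_p\bigl(\frac{\partial^\alpha S(a)}{\alpha!}\bigr)\ge-\lfloor\log_p|\alpha|\rfloor$. Your displayed expansion $\frac{\partial^\alpha S(a)}{\alpha!}=\sum_\beta c_\beta\binom{\beta}{\alpha}a^{\beta-\alpha}$ has nonnegative-valuation $\binom{\beta}{\alpha}$ and $a^{\beta-\alpha}$ but coefficients $c_\beta$ that may have valuation as low as $-t$, so it reproduces the first bound and nothing sharper; Legendre's formula for $v_p(\alpha!)$ never enters because the factorials are already absorbed into the binomials, and counting base-$p$ carries gives you a quantity with no useful relation to $\lfloor\log_p|\alpha|\rfloor$. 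What actually closes the gap is a single-derivative peel-off. Pick any $i$ with $\alpha_i\ge1$ (possible since $|\alpha|\ge k\ge1$) and write
\[
\frac{1}{\alpha!}\partial^\alpha S \;=\; \frac{1}{\alpha_i}\cdot\frac{1}{(\alpha-e_i)!}\,\partial^{\alpha-e_i}\bigl(\partial_i S\bigr).
\]
By hypothesis $\partial_i S\in\mathbb{Z}_p[x]$, and divided derivatives preserve $\mathbb{Z}_p[x]$, so the second factor evaluated at $a\in\mathbb{Z}_p^n$ lies in $\mathbb{Z}_p$; hence $v_p\bigl(\frac{\partial^\alpha S(a)}{\alpha!}\bigr)\ge-v_p(\alpha_i)\ge-\lfloor\log_p\alpha_i\rfloor\ge-\lfloor\log_p|\alpha|\rfloor$. (The dual route through your binomial expansion also works, but the identity you need is the absorption identity $\binom{\beta_i}{\alpha_i}=\frac{\beta_i}{\alpha_i}\binom{\beta_i-1}{\alpha_i-1}$, not Kummer's carry count; Kummer is a red herring here.) With that replacement your proof is complete and correct.
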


We express the quadratic Taylor polynomial of \(S(x)\) at \(a\) as \(S_a^{<3}(x) = S(a) + \nabla S(a) \cdot x + \frac{1}{2} x \cdot H_a\cdot x\), where \(H_a= \left.\frac{\partial^2 S}{\partial x_\mu \partial x_\nu}\right|_{x = a}\) is the Hessian matrix of \(f\) at \(a\).

We proceed as follows:
\begin{theorem}[Fisher2]
  \label{th:statphase}
  Let \(S(x) \in \mathbb Q_p[x_1, \ldots, x_n]\) such that \(\nabla S(x) = (\partder{S}{x_1}, \ldots, \partder{S}{x_n})\) has coefficients in \(\mathbb Z_p\). Define \(\epsilon_k\) as in Eq.~\ref{eq:epsilonk}. Given integers \(m \ge j \ge 1\) and a point \(a \in \mathbb Z^n_p\) with reduction \(\bar a \in (\mathbb Z/p^j \mathbb Z)^n\), define \(\mathcal I_{\bar a} = \mathcal I_{\bar a} (f; \mathbb Z/p^m \mathbb Z)\) as in Eq.~\ref{eq:expsum}. Then:
  \begin{enumerate}
  \item[(a)] The local sum \(\mathcal I_{\bar a}\) vanishes unless \(\nabla S(a) \equiv 0 \Mod p^{\min (j, m-j)}\).
  \item[(b)] If \(m \le 3j - \epsilon_3\) and \(\nabla S(a) \equiv 0 \Mod p^{\min(j, m-j)}\), then
    \begin{equation}
      \mathcal I_{\bar a} = p^{nm/2}e^{2 \pi i S(a)/p^m} G_{m-2j}(H_a,p^{-j}\nabla S(a)).
    \end{equation}
  \item[(c)] Assume that \(\det (H_a) \ne 0\), and choose an integer \(h\) such that \(p^h H_a^{-1}\) has coefficients in \(\mathbb Z_p\). Assume that \(j > h\) and \(2j + h \le m \le 3j - \epsilon_3\). Then the following are equivalent:
    \begin{enumerate}
    \item[(i)] \(\mathcal I_{\bar a} \ne 0\),
    \item[(ii)] \(\nabla S(a) \in p^j H_a \cdot \mathbb Z^n_p\), i.e., \(p^j H_a | \nabla S(a)\),
    \item[(iii)] \(\exists \alpha \in \mathbb Z^n_p: \nabla S(a) = 0 \in \mathbb Z^n_p\) and \(\alpha \equiv a \Mod p^j\), and \(\mathcal I_{\bar a} = p^{n m/2} e^{2 \pi i S(\alpha)/p^m} G_m(H_\alpha,0)\).
    \end{enumerate}
  \end{enumerate}

\end{theorem}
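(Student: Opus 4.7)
The plan is to run the standard periodized stationary phase argument: a substitution, a Taylor truncation, and an identification as a Gauss sum. Fix a lift $\bs a \in \mathbb Z_p^n$ of $\bar{\bs a}$ and make the change of variables $\bs x = \bs a + p^j \bs y$, so that the residue condition $\bs x \equiv \bar{\bs a} \pmod{p^j}$ is parameterized cleanly by $\bs y \in (\mathbb Z/p^{m-j}\mathbb Z)^n$:
\begin{equation}
  \mathcal I_{\bar{\bs a}} = \sum_{\bs y \in (\mathbb Z/p^{m-j}\mathbb Z)^n} \exp\!\left[\tfrac{2\pi i}{p^m}\, S(\bs a + p^j \bs y)\right].
\end{equation}
Invoking Lemma~\ref{lem:fisher1} at the base point $\bs a$ with $k = 3$ gives
\begin{equation}
  S(\bs a + p^j \bs y) \equiv S(\bs a) + p^j \nabla S(\bs a) \cdot \bs y + \tfrac{p^{2j}}{2}\, \bs y^T H_{\bs a}\, \bs y \pmod{p^{3j - \epsilon_3}}
\end{equation}
as polynomials in $\bs y$. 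Under the hypothesis $m \le 3j - \epsilon_3$ of part~(b), the cubic-and-higher Taylor remainder is a multiple of $p^m$ and drops out of the exponential entirely.

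For part~(b), I would factor the constant phase $e^{2\pi i S(\bs a)/p^m}$ out of the sum and rearrange the remaining exponent into the form prescribed by the Gauss-sum definition of Proposition~\ref{prop:Gausssums}. The hypothesis $\nabla S(\bs a) \equiv 0 \pmod{p^{\min(j,\, m-j)}}$ is precisely what makes $p^{-j}\nabla S(\bs a)$ a well-defined $p$-adic integer at the needed precision, so that it can legitimately serve as the linear shift $\bs v$ in $G_{m-2j}(\bs A, \bs v)$. Matching the quadratic piece against the $p^{nh/2}$ built into the Gauss-sum definition and counting the size of the $\bs y$-domain collects into the advertised $p^{nm/2}\, e^{2\pi i S(\bs a)/p^m}\, G_{m-2j}(H_{\bs a},\, p^{-j}\nabla S(\bs a))$.

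Part~(a) is then handled by splitting on the sign of $j - (m - j)$. If $j > m - j$, then $p^{2j - m}$ is a nonnegative integer power of $p$, so the quadratic Taylor term is an element of $\mathbb Z_p$ and drops from the exponential; only the linear phase remains, and Corollary~\ref{corr:Gausssums} forces vanishing unless $\nabla S(\bs a) \equiv 0 \pmod{p^{m-j}}$. If $j \le m - j$, then part~(b) applies, and Proposition~\ref{prop:Gausssums}(b) applied to $G_{m-2j}(H_{\bs a}, p^{-j}\nabla S(\bs a))$ forces $p^{-j}\nabla S(\bs a)$ to lie in the image of $H_{\bs a}$, which in particular requires $\nabla S(\bs a) \equiv 0 \pmod{p^j}$. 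For part~(c), the extra assumptions $\det H_{\bs a} \ne 0$, $j > h$, and $2j + h \le m$ are precisely the quantitative input for a Hensel/Newton lift of $\nabla S$ centered at $\bs a$: the iterate $\bs a - H_{\bs a}^{-1} \nabla S(\bs a)$ lies in $\mathbb Z_p^n$, agrees with $\bs a$ modulo $p^j$, and doubles the $p$-adic precision of the critical-point equation, so iteration produces an exact $\bs \alpha \in \mathbb Z_p^n$ with $\nabla S(\bs \alpha) = 0$ and $\bs \alpha \equiv \bs a \pmod{p^j}$. Re-running (b) from $\bs \alpha$ kills the linear argument of the Gauss sum and reduces it to $G_m(H_{\bs \alpha}, \bs 0)$, while the equivalences (i)$\Leftrightarrow$(ii)$\Leftrightarrow$(iii) follow from Proposition~\ref{prop:Gausssums}(b) combined with this lift.

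The main obstacle I anticipate is the bookkeeping in part~(b): the factor of $\tfrac{1}{2}$ in the quadratic Taylor term produces genuine $\pm 1$ sign twists because $p^{m-2j}$ is odd for odd $p$, and these have to be folded carefully into the Gauss-sum normalization so that the prefactor $p^{nm/2}$ emerges as stated. Verifying that the Taylor precision $m \le 3j - \epsilon_3$ is invoked in exactly the right place, and that no power of $p$ is miscounted between the variables being summed over and the built-in $p^{nh/2}$ in the Gauss sum, is the delicate technical core of the argument.
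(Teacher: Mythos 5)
The paper does not supply a self-contained proof of Theorem~\ref{th:statphase}: it is stated as a result from Fisher~\cite{Fisher02} (hence the label ``Fisher2''), and only part (b) is reproduced, at the start of Section~\ref{sec:uniformization}. There the argument is exactly yours: substitute $x = a + p^j y$, apply Lemma~\ref{lem:fisher1} with $k=3$ to replace $S(a+p^j y)$ by its quadratic Taylor polynomial modulo $p^m$, factor off the constant phase, and match the remaining sum against the Gauss-sum definition. Your bookkeeping worry is also legitimate: the factor of $\tfrac12$ is harmless for odd $p$ (it is a unit in $\mathbb Z_p$), but the $p$-power counting only closes to the advertised $p^{nm/2}$ prefactor if the normalization in Eq.~\ref{eq:Gausssum} reads $p^{-nh/2}$ rather than $p^{nh/2}$; with the printed sign the magnitudes in Proposition~\ref{prop:Gausssums}(a) and the formula in part (b) are mutually inconsistent.

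The genuine gap is in your treatment of part (a) in the case $j \le m-j$. You appeal to part (b) and then to Proposition~\ref{prop:Gausssums}(b), but part (b) carries the extra hypothesis $m \le 3j - \epsilon_3$, which part (a) does not assume. When $m > 3j - \epsilon_3$ the quadratic truncation of Lemma~\ref{lem:fisher1} fails modulo $p^m$, cubic and higher Taylor terms survive, and $\mathcal I_{\bar a}$ is not a quadratic Gauss sum, so the appeal to Proposition~\ref{prop:Gausssums}(b) is unavailable. Even inside the window $2j \le m \le 3j - \epsilon_3$ the argument is circular: $G_{m-2j}(H_a, p^{-j}\nabla S(a))$ is only a well-formed Gauss sum once one knows $p^j \mid \nabla S(a)$, which is what you are trying to prove. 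The standard route to (a) needs no Taylor truncation beyond the linear term. Set $\ell = \min(j, m-j)$ and split $y = y' + p^{m-j-\ell} z$ with $z \in (\mathbb Z/p^\ell\mathbb Z)^n$; since $2(m-\ell)\ge m$, Taylor's theorem gives
\begin{equation}
  S(a + p^j y' + p^{m-\ell} z) \equiv S(a + p^j y') + p^{m-\ell}\,\nabla S(a + p^j y')\cdot z \ \ (\Mod p^m),
\end{equation}
so the inner sum over $z$ is a pure character sum modulo $p^\ell$ and vanishes unless $\nabla S(a + p^j y') \equiv 0 \Mod{p^\ell}$. Because $\nabla S$ has $\mathbb Z_p$ coefficients, $\nabla S(a + p^j y') \equiv \nabla S(a) \Mod{p^j}$ and a fortiori $\Mod{p^\ell}$, which yields exactly the stated criterion. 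Your Hensel/Newton sketch for part (c) is in the right direction, but the paper gives no proof of (c) against which to compare, and the lifting step would still need to be checked against the precision bounds $j > h$ and $2j + h \le m$.
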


As an example consider \(S(x) = x^3 + 2 x^2\) over \(\mathbb Z/ 3^2 \mathbb Z\):
\begin{equation}
  \mathcal I = \sum_{x \in \mathbb Z/ 3^2 \mathbb Z} \exp\left[\frac{2 \pi i}{3^2} (x^3+2x^2)\right].
\end{equation}
We reexpress this as a sum of local terms \(\mathcal I_{\bar x}\):
\begin{equation}
  \mathcal I = \sum_{\bar x \in \mathbb Z/ 3 \mathbb Z} \mathcal I_{\bar x},
\end{equation}
where
\begin{equation}
 \mathcal I_{\bar x} = \left\{ \sum_{\substack{\bs x \in \mathbb Z/3^2 \mathbb Z\\ \bs x \Mod p = \bar x}} \exp\left[ \frac{2 \pi i}{3^2} (x^3+2x^2) \right] \right\}.
\end{equation}
\vspace{5pt}
Since \(S(x) \in \mathbb Z_p\) it follows that \(t=0\) and so \(\epsilon=0\).

\(0 \Mod 3 = \nabla S(x) = 3 x^2 + 4 x\) has solution \(x = 0\). Therefore, by Theorem~\ref{th:statphase}(a),
\begin{equation}
  \mathcal I = \mathcal I_{\bar 0}.
\end{equation}
Furthermore, by Theorem~\ref{th:statphase}(b), since \(2 = m \le 3j - \epsilon_3 = 3\), it follows that
\begin{equation}
  \mathcal I = \mathcal I_{\bar 0} = 3 e^{2 \pi i S(0)/3^2} = 3.
\end{equation}

Therefore, due to Theorem~\ref{th:statphase}, we are able to simplify the sum, \(\mathcal I\), to just one term, \(\mathcal I_{\bar 0}\).

\subsection{Diagonal Unitary Example}

An important subset of non-Clifford gates include diagonal gates with rational eigenvalues. These can always be written as \(\hat U = \exp \left[ \frac{2 \pi i}{p^h} S(\hat q) \right]\) where \(S(\hat q) \in \mathbb Q[\hat q]\). Hence, their Weyl symbols are \(U(\bs x_q) = \exp \left[ \frac{2 \pi i}{p^h} S(\bs x_q) \right]\).

As an example, we consider an action that encompasses the generalized \(\frac{\pi}{8}\)-gates for qutrits~\cite{Campbell12,Howard12}:
\begin{equation}
  S_9(\bs x) = C x_q^3 + B x_q^2 + \bs \alpha \bsmc J^T \bs x,
\end{equation}
where \(C, B \in \mathbb Z\) and \(\bs \alpha \in \mathbb Z^n\). We consider \(S_9\) as a polynomial over \(\mathbb Z/3^2\mathbb Z\) so that its Weyl symbol is \(U_9(\bs x) \equiv \exp\left(\frac{2 \pi i}{3^2}S_9(\bs x)\right)\). We note that though the coefficients of this polynomial are still in \(\mathbb Z\) as the actions were in Section~\ref{sec:Gausssums}, it is cubic and thus must be treated by the method of stationary phase instead of as a single quadratic Gauss sum.

It is easy enough to verify that this corresponds to a unitary operator (see Appendix~\ref{app:firstnontrivialex}):
\begin{widetext}
\begin{equation}
  \left(U_9 U_9^* \right) (\bs x) = \left(\frac{1}{3^2}\right)^2\sum_{\bs x', \bs x'' \in (\mathbb Z/3^2\mathbb Z)^2} U_9(\bs x'') U_9^*(\bs x') \exp\left(\frac{2 \pi i}{3^2} \Delta_3(\bs x, \bs x', \bs x'') \right) = 1.
\end{equation}
So we now consider the double-ended propagator corresponding to \(S_9\):
\begin{eqnarray}
  U_9 U_9^*(\bs x, \bs x') &=& \frac{1}{9^{2}} \sum_{\bs x_1, \bs x_2, \bs x_3 \in (\mathbb Z/3^2 \mathbb Z)^{2N}} U(\bs x_1) U^*(\bs x_2) \exp\left[ \frac{2 \pi i}{p} \Delta_5(\bs x_3, \bs x, \bs x_1, \bs x', \bs x_2)\right] \\
                           &=& 9 \sum_{\{\tilde x_{2q}\} \Mod 3} 3 \exp\left[\frac{2 \pi i}{3^2} S(\tilde x_{2q}) \right], \nonumber
\end{eqnarray}
where \(S(\tilde x_{2q})\) is a cubic polynomial shown in Appendix~\ref{app:firstnontrivialex}. 
\end{widetext}

For some values of \(A,\, B\), and \(\bs \alpha\), some phase space points \((\bs x, \bs x')\) produce three Gauss sums while others produce none. This can be found by seeing which values of \((\bs x, \bs x')\) cause \(\partder{S(x_{2q})}{x_{2q}}\) to equal zero. Examples of this can be found in Appendix~\ref{app:firstnontrivialex}. This illustrates the usefulness of the periodized stationary phase approximation in not only formalizing the number of Gauss sums that are necessary, but also in providing an easy way to find this number (i.e. finding the zeros of \(\partder{S(x_{2q})}{x_{2q}}\)).

For general diagonal gates, the coefficients in their action polynomials with fall in \(\mathbb Q\) instead of \(\mathbb Z\). We will develop a technique in Section~\ref{sec:qutritpi8gate} that will show how to change the domain of summation from \(\mathbb Z / p^h \mathbb Z\) to \(\mathbb Z/ p^{h'} \mathbb Z\), where \(h'>h\), such that the resultant equivalent action has coefficients in \(\mathbb Z\) thereby allowing for its treatment by the method of periodized stationary phase developed here. But first, let us characterize and discuss this periodized stationary phase method.

\section{Uniformization}
\label{sec:uniformization}

The discrete ``periodized'' stationary phase method introduced in the Section~\ref{sec:statphase} has many similarities to the stationary phase approximation in the continuous setting. This is discussed in more detail in Appendix~\ref{sec:periodizedstatphase}. One of the properties of the continuous stationary phase approximation is that the number of Gaussian integrals it produces can be decreased by increasing the expansion in the exponentiated action to a higher order than quadratic. This produces a smaller number of higher order integrals, such as Airy functions (corresponding to a cubic expansion) and the Pearcey integral (corresponding to a quartic expansion). This procedure is called uniformization. Though this produces fewer terms, these integrals are correspondingly more difficult to evaluate compared to Gaussian integrals.

A similar uniformization procedure can be applied with the discrete ``periodized'' stationary phase method. As we shall see here, this reduces the number of quadratic Gauss sums that the method produces into a fewer number of higher-order Gauss sums.

According to Theorem~\ref{th:statphase}, the discrete stationary phase method drops the size of summation by a factor of \(p^{m - \lceil m/3 \rceil}\) and changes the summand to a sum of quadratic Gauss sums with closed-form solutions multiplied by phases. However, the stationary phase method can also decrease the domain by a larger multiple of \(p\), but the resultant sum is over terms that are themselves sums with higher than quadratic order; they are no longer quadratic Gauss sums.

To prove part (b) of Theorem~\ref{th:statphase}, Fisher appealed to Lemma~\ref{lem:fisher1} to show that if \(m \le kj-\epsilon_k = 3 j - \epsilon_3\) and \(\nabla S(a) \equiv 0 ( \Mod p^{\min \{j, m-j\}})\), then for any \(x \in (\mathbb Z/ p^{m-j} \mathbb Z)^n\),
\begin{equation}
  S(a + p^j x) = S(a) + p^j \nabla S(a) \cdot x + \frac{1}{2} p^{2j} H_a(x) \in \mathbb Q_p/p^m \mathbb Z.
\end{equation}
This implies
\begin{eqnarray}
  I_{\bar a} &=& \sum_{x \in (\mathbb Z/ p^{m-j} \mathbb Z)^n} e^{2 \pi i S(a + p^j x)/p^m} \\
  &=& e^{2 \pi i S(a)/p^m} \sum_{x \in (\mathbb Z/ p^{m-j} \mathbb Z)^n} e^{2 \pi i \left[ p^{-j} \nabla S(a)\cdot x + \frac{1}{2} H_a(x) \right]/p^{m-2j}}, \nonumber
\end{eqnarray}
which can be rewritten in terms of the Gauss sum notation as above in Theorem~\ref{th:statphase} (b).

  This trivially generalizes to higher order polynomials:
\begin{theorem}[Generalization of Fisher2 (b)]
  \label{th:statphasegeneral}
  Let \(S(x) \in \mathbb Q_p[x_1, \ldots, x_n]\) such that \(\nabla S(x) = (\partder{S}{x_1}, \ldots, \partder{S}{x_n})\) has coefficients in \(\mathbb Z_p\). Define \(\epsilon_k\) as in Eq.~\ref{eq:epsilonk}. Given integers \(m \ge j \ge 1\) and a point \(a \in \mathbb Z^n_p\) with reduction \(\bar a \in (\mathbb Z/p^j \mathbb Z)^n\), define \(\mathcal I_{\bar a} = \mathcal I_{\bar a} (f; \mathbb Z/p^m \mathbb Z)\) as in Eq.~\ref{eq:expsum}. Then:
  If \(m \le kj - \epsilon_k\) and \(\nabla S(a) \equiv 0 \Mod p^{\min(j, m-j)}\), then
  \begin{equation}
  I_{\bar a} = \sum_{x \in (\mathbb Z/ p^{m-j} \mathbb Z)^n} e^{2 \pi i S(a + p^j x)/p^m} = \sum_x e^{2 \pi i S^{<k}(a)/p^{m-j}}.
  \end{equation}
\end{theorem}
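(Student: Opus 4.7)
The plan is to mirror the sketched derivation of Theorem~\ref{th:statphase}(b), simply invoking Lemma~\ref{lem:fisher1} at the user-supplied order $k$ rather than at $k=3$. The cubic truncation in Theorem~\ref{th:statphase}(b) arose entirely because the polynomial Taylor theorem was applied at order three; no other step of that argument depended on the truncation being quadratic or on the resulting sum admitting a Gauss-sum closed form. The derivation therefore lifts verbatim to an arbitrary Taylor polynomial $S_a^{<k}$ of degree $k-1$.

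Concretely, I would first reparametrize the local sum. Every residue $\bs x' \in (\mathbb{Z}/p^m\mathbb{Z})^n$ with $\bs x' \equiv \bar a \Mod p^j$ is uniquely of the form $\bs x' = a + p^j x$ for some $x \in (\mathbb{Z}/p^{m-j}\mathbb{Z})^n$, giving
\begin{equation}
\mathcal I_{\bar a} = \sum_{x \in (\mathbb{Z}/p^{m-j}\mathbb{Z})^n} e^{2\pi i S(a + p^j x)/p^m}.
\end{equation}
Next I would apply Lemma~\ref{lem:fisher1} with the prescribed $k$: it supplies the polynomial identity $S(a + p^j x) \equiv S_a^{<k}(p^j x) \Mod p^{jk - \epsilon_k}$, and the hypothesis $m \le jk - \epsilon_k$ promotes this to a congruence modulo $p^m$. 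The two exponentials then agree term-by-term inside the sum, establishing the claimed identity with $S_a^{<k}(p^j x)$ in place of $S(a + p^j x)$.

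The gradient condition $\nabla S(a) \equiv 0 \Mod p^{\min(j,m-j)}$ plays exactly the role it did in the $k=3$ case. Expanding $S_a^{<k}(p^j x) = S(a) + p^j \nabla S(a) \cdot x + \tfrac{p^{2j}}{2} x^T H_a x + \ldots$, the hypothesis guarantees that the linear contribution either vanishes modulo $p^m$ (when $m \le 2j$) or, after the common factor $p^j$ is pulled out of the exponent so the denominator becomes $p^{m-j}$, that $p^{-j} \nabla S(a) \in \mathbb{Z}_p$ and the resulting phase is a well-defined function of $x \in (\mathbb{Z}/p^{m-j}\mathbb{Z})^n$. Higher-order Taylor terms enter with prefactors $p^{\ell j}$ for $\ell \ge 2$ and are therefore automatically $p$-integral after the same pull-out, contributing no new difficulty.

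I do not anticipate a substantive obstacle: the entire content of the statement is a repackaging of Lemma~\ref{lem:fisher1} at general $k$, and the gradient-vanishing hypothesis is consumed in the same elementary bookkeeping step as before. The only modest subtlety is cosmetic: for $k \ge 4$ the Taylor polynomial $S_a^{<k}$ contains all mixed partial derivatives up to order $k-1$, so the resulting phase sum is a genuinely higher-order analog of a Gauss sum and does not admit a closed form via Proposition~\ref{prop:Gausssums}. This is exactly the uniformization trade-off flagged in Section~\ref{sec:uniformization}: the cost of collapsing many quadratic terms into a single higher-order phase sum is that the remaining sum is more intricate to evaluate.
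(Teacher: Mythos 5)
Your proposal is correct and follows the same route as the paper: the paper's proof is the single line ``The same proof as Fisher's can be used here,'' and you have simply unpacked what that means --- reparametrize the local sum, invoke Lemma~\ref{lem:fisher1} at the general order $k$ rather than $k=3$, observe that $m \le kj - \epsilon_k$ promotes the polynomial congruence to one modulo $p^m$, and note that the gradient hypothesis plays the same bookkeeping role as in Fisher's argument. The additional remark about the well-definedness of $p^{-j}\nabla S(a)$ under $\nabla S(a) \equiv 0 \Mod p^{\min(j,m-j)}$ is accurate and fills in detail the paper leaves implicit.
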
  
\begin{proof}
  The same proof as Fisher's can be used here. \qed
\end{proof}

In particular, for \(k=4\),
\begin{equation}
  I_{\bar a} = e^{2 \pi i S(a)/p^m} \sum_x e^{2 \pi i \left[ p^{-2j} \nabla S(a)\cdot x + \frac{1}{2} p^{-j} H_a(x) + \frac{1}{3!} A_a(x)\right]/p^{m-3j}},
\end{equation}
where
\begin{equation}
  A_a = \sum_{i,j,k}\left. \frac{\partial^3 S(x)}{\partial x_i \partial x_j \partial x_k} \right|_{x = a} x_i x_j x_k.
\end{equation}
Completing the cube for \(n=1\) by setting \(x' = \left(\nabla^3 S\right)^{\frac{1}{3}} \left(x + p^{-j} \nabla^2 S/ \nabla^3 S\right)\) this can be rewritten as:
\begin{equation}
  I_{\bar a} = e^{2 \pi i (S(a)+ \Gamma)/p^m} \sum_{x'} e^{2 \pi i \left[ \Xi x' + x'^3 \right]/p^{m-3j}},
\end{equation}
for
\begin{eqnarray}
  \Gamma &=& - p^{-3j} \nabla S(a) (\nabla^2 S)/(\nabla^3 S) + \frac{1}{2} p^{-3j} (\nabla^2 S)^3/(\nabla^3 S) \nonumber\\
  && -\frac{1}{6} p^{-3j} (\nabla^2 S)^3/(\nabla^3 S)^2, \\
  \Xi &=& \left(\nabla^3 S\right)^{-\frac{1}{3}} \left[ p^{-2j} \nabla S(a) - p^{-2j} (\nabla^2 S)^2/(\nabla^3 S) \right. \nonumber\\
  && \left. +\frac{1}{2} p^{-2j} (\nabla^2 S)^2/(\nabla^3 S) \right].
\end{eqnarray}
(Note that \(\left( \nabla^3 S \right)^{\frac{1}{3}}\) is well-defined for prime dimension.)

This is a discrete sum analog to the \emph{Airy function}~\cite{Olver10}.

Similarly, for \(k=5\), setting \[x' = \left(\nabla^4 S\right)^{\frac{1}{4}} \left( x + (p^{-j} \nabla^3 S/ \nabla^4 S) \right)\] for \(n=1\) allows the equation to be rewritten as:
\begin{equation}
  S_{\bar a} = e^{2 \pi i (S(a)+ \Gamma)/p^m} \sum_{x'} e^{2 \pi i \left[ \Xi x' + \Upsilon x'^2 + x'^4 \right]/p^{m-4j}},
\end{equation}
for
\begin{eqnarray}
  \Upsilon &=& \left(\nabla^4 S\right)^{-\frac{1}{2}} \left[ \frac{1}{2} p^{-2j} \nabla^2 S \right. \nonumber\\
           && \left.- \frac{1}{4} \left(p^{-j} \nabla^3 S \right)^2 / \left(\nabla^4 S \right) \right],\\
  \Xi &=& \left(\nabla^4 S\right)^{-\frac{1}{4}} \left[p^{-3j} \nabla S - \frac{1}{6} p^{-2j} (\nabla^3 S)^2 / (\nabla^4 S) \right.\nonumber\\
           && \left. - \left(\nabla^4 S\right)^{\frac{1}{2}} 2 p^{-j} (\nabla^3 S) / (\nabla^4 S) \Upsilon \right], \\
  \Gamma &=& - (\frac{1}{24} \nabla^4 S) (p^{-j} \nabla^3 S/ \nabla^4 S)^4 - (p^{-j} \nabla^3 S/ \nabla^4 S)^2 \Upsilon \nonumber\\
           && - (p^{-j} \nabla^3 S/ \nabla^4 S) \Xi.
\end{eqnarray}
This is a discrete sum analog to the \emph{Pearcey integral}~\cite{Pearcey46}.

Higher order instances can be similarly developed leading to discrete sum analogs to integrals familiar in catastrophe theory~\cite{Gilmore93}.

For \(n\) degrees of freedom with a domain of summation of \(\mathbb Z/ p^m \mathbb Z\) (as might be produced by, for instance, \(n\) \(p^m\)-dimensional qudits), the naive numerical summation involves a sum over \(p^{n m}\) terms. Using the Gauss sum \(k=3\) simplification or appropriate uniformization level for \(k > 3\), this sum can be reduced to a sum over \(p^{n}\) terms involving Gauss, Airy, Pearcey etc. sums, which number \(\prod_j p^{m-j}\) and can be pre-computed and stored for use during the summation. These terms can perhaps also be approximated numerically instead of tabulated, thereby eschewing the exponential cost in storage. This is the current approach with Airy functions and Pearcey integrals in the continuous case in computation, for instance~\cite{Chester57}.

We note that including non-Clifford gates within the WWM formalism was not possible under the previous derivation of the formalism in terms of powers of \(\hbar\)~\cite{Almeida98,Rivas99,Kocia16}. This is because earlier derivations of the discrete WWM formalism begain with the continuous case, where the stationary phase method is able to be applied in its traditional form, and then ``discretized'' and ``periodized'' the final propagator. As a result, the propagator could only be written to order \(\hbar\), which just includes Clifford propagation, and higher order corrections could not be derived because anharmonic trajectories cannot be obtained by ``periodizing'' to a discrete Weyl phase space grid. In this way, this paper finally accomplishes this extension to higher orders of \(\hbar\) by instead using this different ``periodized'' stationary phase method, thereby solving this old problem for the first time. Just as in the old derivation, it finds that Clifford propagation is captured at order \(\hbar^0\), but unlike the old derivation it is able to formally ascribe a power of \(\hbar\) required to include (diagonal and their Clifford transformations) non-Clifford gates in a formal Taylor series, as discussed in Section~\ref{sec:periodizedstatphase}.

\section{Application to universal gatesets}
\label{sec:qutritpi8gate}

The \(T\)-gate is a non-Clifford single qudit gate. It extends the Clifford gateset to allow for universal quantum computation~\cite{Campbell12}. Generalizations of the \(T\)-gate to qudits are frequently called \(\pi/8\) gates.

Here we will show how the discrete ``periodic'' stationary phase method can be applied to the Wigner function of the qutrit \(\pi/8\) gate magic state to obtain a summation over quadratic Gauss sums:
  \begin{eqnarray}
    &&\rho_{\pi/8}(\bs x) =\\
    &&\frac{1}{3^2} \sum_{x_{2q} \in \mathbb Z/ 3^2 \mathbb Z} \exp \left\{\frac{2 \pi i}{3^2} S''_{\pi/8}(x_{2q}, x_q, x_p, x'_p {=} 0) \right\}, \nonumber
  \end{eqnarray}
  where \(S''_{\pi/8}(x_{2q}, x_q, x_p, x'_p{=}0) = (-x_{2q} + 2 x_q )^{3} -x_{2q}^{3} + 2 \times 3 (x_{2q} - x_q) x_p\).

\(\pi/8\) gates differ for dimension \(p=3\) and \(p>3\)~\cite{Campbell12,Howard12}. For \(p=3\), with \(\zeta = e^{2 \pi i/9}\), 
\begin{equation}
  U_v = \sum_{j=1}^{2} \zeta^{v_j} \ketbra{j}{j}
\end{equation}
where
\begin{equation}
  v = (0, 6 z' + 2 \gamma' + 3 \epsilon', 6 z' + \gamma' + 6 \epsilon') \mod 9.
\end{equation}

Despite these differences, these \(\pi/8\) gates are all diagonal with entries that are rational powers of \(e^{2\pi i/p}\). This means that their Weyl symbol actions \(S_{\pi/8}(\bs x) \notin \mathbb Z[\bs x]\) but lie in \(\mathbb Z[\bs x]\) when the dimension \(p^h\) of the system is increased to \(p^{h'}\).

Here we will show how to perform this change in the domain of summation for a gate from \(\mathbb Z / p^h \mathbb Z\) to \(\mathbb Z/ p^{h'} \mathbb Z\), where \(h'>h\), such that the resultant equivalent action will go from having coefficients in \(\mathbb Q\) to \(\mathbb Z\). This is a useful technique for general diagonal gates, since the coefficients in their action polynomials will generally fall in \(\mathbb Q\) instead of \(\mathbb Z\). Reexpressing their Weyl symbols in this way will allow for their evaluation by the method of stationary phase. Though we will only show this for the qutrit \(\pi/8\) gate as an example, this technique can be equivalently applied for all diagonal gates with rational coefficients, as well as their Clifford transformations, which only symplectically permute the action.

To demonstrate this we examine a particular \(\pi/8\) gate for qutrits and its corresponding magic state.

Let \(z'= 1\), \(\gamma' = 2\) and \(\epsilon' = 0\) so that
\begin{equation}
  U_v(0,1,8) = \left( \begin{array}{ccc} 1 & 0 & 0\\ 0 & \zeta & 0\\ 0 & 0 & \zeta^8\end{array} \right).
\end{equation}

The corresponding Weyl symbol is
\begin{eqnarray}
  U_{\pi/8}(x_{q}) &=& \exp \left[ - \frac{2 \pi i}{3} \frac{2}{3} (x_{q} + 3 x_{q}^2) \right] \\
                 &\equiv& \exp\left[\frac{2 \pi i}{3} S_{\pi/8}(x_{q})\right], \nonumber
\end{eqnarray}
for \(x_q \in \mathbb Z/ 3^2 \mathbb Z\) and \(x_{q_1} \in \mathbb Z/ 3 \mathbb Z\).

\(S_{\pi/8}(\bs x) \notin \mathbb Z[\bs x]\) but it does lie in \(\mathbb Z[\bs x]\) when the domain is increased from \(3\) to \(3^2\); we want to consider the sum over \(x_q\) of \(U_{\pi/8}(\bs x) = U_{\pi/8}(x_q)\) and reexpress it in terms of an exponential such that it is a sum over the larger space \(\mathbb Z/ 3^2 \mathbb Z\) of a polynomial with integer coefficients (see \ref{app:qutritpi8gate}):
\begin{eqnarray}
  && \sum_{x_q \in \mathbb Z/3 \mathbb Z} \exp\left[ \frac{2 \pi i}{3} \left(-\frac{2}{3}\right) (x_q + 3 x_q^2) \right] \\
  &=& \frac{1}{3} \sum_{x_q \in \mathbb Z/3^2 \mathbb Z} \exp \left[ \frac{2 \pi i}{3^2} x_q^{3} \right] \nonumber\\
  &\equiv& \frac{1}{3} \sum_{x_q \in \mathbb Z/3^2 \mathbb Z} \exp \left[ \frac{2 \pi i}{3^2} S'_{\pi/8}(\bs x) \right]. \nonumber
\end{eqnarray}
Notice that the first line consists of a sum of an exponentiated polynomial with coefficients in \(\mathbb Q\) while the second line is a sum over a larger space, but this time of an exponentiated polynomial with coefficients in \(\mathbb Z\). Thus, we have accomplished what we set out to do.

Since we will not be performing the sum above in isolation but with other exponentiated terms, so it is important to establish a slightly more general result:
\begin{eqnarray}
  &&\sum_{x_q \in \mathbb Z/3 \mathbb Z} \exp\left\{ \frac{2 \pi i}{3} \left[ -\frac{2}{3} (x_q + 3 x_q^2) + f(\bs x) \right] \right\} \\
  &=& \frac{1}{3} \sum_{x_q \in \mathbb Z/3^2 \mathbb Z} \exp \left\{ \frac{2 \pi i}{3^2} \left[ S'_{\pi/8}(\bs x) + 3 f(\bs x) \right] \right\}, \nonumber
\end{eqnarray}
where \(f(\bs x) \in \mathbb Z[x_1, \ldots, x_n]\).

\begin{widetext}
 Applying this identity to \(U_{\pi/8} U^*_{\pi/8}(\bs x, \bs x')\) transforms it into an expression with exponentiated polynomials over a larger domain of summation and whose coefficients are in \(\mathbb Z\). This result is now amenable for the discrete ``periodized'' stationary phase method result given by Theorem~\ref{th:statphase}. This produces (see Appendix~\ref{app:qutritpi8gate}):
 \begin{eqnarray}
   \label{eq:qutritpi8gate}
  U_{\pi/8} U^*_{\pi/8}(\bs x, \bs x') &=& \frac{1}{3^2} \sum_{\bs x_1, \bs x_2, \bs x_3 \in (\mathbb Z/3 \mathbb Z)^2} U_{\pi/8}(\bs x_1) U_{\pi/8}^*(\bs x_2) \exp\left[ \frac{2 \pi i}{3} \Delta_5(\bs x_3, \bs x, \bs x_1, \bs x', \bs x_2)\right], \\
                                       &=& \frac{1}{3} \sum_{\{\tilde x_{2q}\} \Mod 3} \exp\left[\frac{2 \pi i}{3^2} S''_{\pi/8}(\tilde x_{2q}) \right] G_0(H_{\tilde x_{2q}},3^{-1}\nabla S''_{\pi/8}(\tilde x_{2q}) ), \nonumber
\end{eqnarray}
where \(S''_{\pi/8}(\tilde x_{2q}) = 3 \left[ 2 (x_p - x'_p) x_{2q} + (x_p + 3 x'_p) x'_q - (3 x_p + x'_p) x_q  \right]\), \(H_{\tilde x_{2q}} = \left. \frac{\partial^2 S}{\partial x_\mu \partial x_\nu} \right|_{x = \tilde x_{2q}}\), and \(G_0\) is defined in Eq.~\ref{eq:Gausssum}.
\end{widetext}

We are interested in the magic state of this gate, which corresponds to it acting on \(\hat H \ket 0 = \ket{p=0}\). The Wigner function of \(\ket{p=0}\) is \(\rho'(\bs x) \equiv \frac{1}{3} \delta_{x_p, \bs 0}\). Hence,
\begin{widetext}
  \begin{eqnarray}
    \rho_{\pi/8}(\bs x) \equiv \sum_{\bs x'} U_{\pi/8}U^*_{\pi/8}(\bs x, \bs x') \rho'(\bs x') &=& \begin{cases} \frac{1}{3^2} \sum_{x_{2q} \in \mathbb Z/ 3^2 \mathbb Z} \exp \left[ \frac{2 \pi i}{3^2} S''_{\pi/8}(x_{2q}, x_q, x_p, x'_p{=}0) \right] & \text{if}\,\, x'_q = x_q \,(\hskip-3pt \Mod 3),\\ 0 & \text{otherwise} \end{cases}. \nonumber\\
    \label{eq:weylpi8gatequtrit}
                      &=& \frac{1}{3^2} \sum_{x_{2q} \in \mathbb Z/ 3^2 \mathbb Z} \exp \left\{ \frac{2 \pi i}{3^2} \left[ (-x_{2q} + 2 x_q )^{3} -x_{2q}^{3} + 2 \times 3 (x_{2q} - x_q) x_p  \right] \right\},
  \end{eqnarray}
  (where, again, details are shown in Appendix~\ref{app:qutritpi8gate}).
\end{widetext}

If we let \(S(x_p, x_q)\) be the phase, we note that \(\partder{S}{x_p} = \partder{S}{x_q} = 0 \Mod 3\) and \(\nabla^2 S \equiv H = 0 \Mod 3^0\) \(\forall \, x_p, \,x_q\). Hence, evaluation at any phase space point, or linear combination thereof, requires summation over the three reduced phase space points \(\tilde x_{2q} \in \mathbb Z/ 3 \mathbb Z\) since they are all critical points.

\subsection{Application}
We consider as an example, computation of the qutrit circuit outcome:
\begin{equation}
  \label{eq:probofapp}
  P = \Tr \left[ \bra{0} \hat U_C \hat U_{\pi/8}^{\otimes k} \hat H^{\otimes k}\ket{0}^{\otimes n} \right].
\end{equation}
This corresponds to the probability of the outcome \(\ketbra{0}{0}\) after \(k\) qutrit magic states are acted on by a random Clifford circuit \(U_C\) on \(n\ge k\) total qutrits.

A recent study introduced a method to sample this distribution using Monte Carlo methods on Wigner functions~\cite{Pashayan15}. Results were demonstrated for one to ten magic states, in a system of \(100\) qutrits, that required \(10^5\) to \(>10^8\) samples, respectively, to attain precision \((P_{\text{sampled}} - P) <10^{-2}\) with \(95\%\) confidence~\cite{Pashayan15}.

Before the Clifford gates are applied, in the Wigner picture the circuit can be described by \(k\) products of Eq.~\ref{eq:weylpi8gatequtrit} and \((n-k)\) products of \(\delta_{x_{q_j},0}\). The random Clifford gate can be described as an affine transformation by a symplectic matrix \(\bsmc M\) and vector \(\bs \alpha\)~\cite{Gross06,Kocia17}, as described in Section~\ref{subsec:GausssumsonCliffordgates}, such that
\begin{equation}
  \left( \begin{array}{c} \bs x'_p\\ \bs x'_q \end{array}\right) = \bsmc M \left( \left( \begin{array}{c} \bs x_p\\ \bs x_q \end{array}\right) + \frac{\bs \alpha}{2} \right) + \frac{\bs \alpha}{2}.
\end{equation}
The form of the \(\bsmc M\) and \(\bs \alpha\) for the Clifford gates are given in~\cite{Kocia17}. The final contraction to \(\ketbra{0}{0}\) corresponds to a sum over all \(n\) of the \(x_{p_j}\) and \(x_{q_j}\) except for the first degree of freedom, which is only considered at \(x_{q_1} = 0\) (and all \(x_{p_1}\)).

We note that this is a strong simulation algorithm. Therefore, it is interesting to compare the number of terms produced by the stationary phase method that must be summed over in this strong simulation with the number of samples that must be taken in the prior strong simulation by Pashayan \emph{et al}. However, a direct comparison of the two on an equal footing is a bit blurred by the fact that the latter only reports results based on a Monte Carlo sampling of their terms, producing a result correct only up to a precision \(\epsilon = 0.01\). Their explicit evaluation would produce a result correct up to precision \(\epsilon = 0.0\) and would likely scale far worse (we estimate at least \(3^{2t}\) for a naive evaluation of the \(2t\) dimensional Wigner function, see Fig.~\ref{fig:pi8gatenumerics}). Monte Carlo calculation of finite sums reduce the number of terms that must be evaluated for intermediate to large sums, and we assume that is the case with Pashayan \emph{et al}.'s results. Thus we will proceed to compare the two methods with the caveat that Monte Carlo sampling of the sums produced by the stationary phase method will likely also improve its performance for intermediate to large values of \(t\).

For \(k\) magic states in a \(100\)-qutrit system, we have \(k\) one-dimensional sums (integrals) over \(x_{2q_j}\) and \(100\) sums over all the \(x_{p_j}\) and \(99\) sums over all the \(x_{q_j}\) except \(x_{q_1}\), resulting in a total of \((k+199)\) sums. This can be concisely written by letting \(\bs x \equiv (x_{p_1}, \ldots, x_{p_{100}}, x_{q_1}, \ldots, x_{q_{100}}) \equiv (\bs x_p, \bs x_q)\) and \(\rho_{\pi/8}(\bs x_i) \equiv \rho_{\pi/8}(x_{p_i}, x_{q_i})\) such that
\begin{equation}
  P = \sum_{\bs x' \in D} \left[ \prod_{i=1}^t \rho_{\pi/8}(\bs x'_i) \prod_{j=t+1}^{100} \delta(x'_{q_j})\right],
  \label{eq:applicationtrace}
\end{equation}
for
\begin{equation}
  D = \left\{\bs x' \bigg| \left(\bsmc M^{-1} \left(\bs x'-\frac{\bs \alpha}{2}\right)-\frac{\bs \alpha}{2}\right)_{101} \Mod 3^2 =0\right\}
  \label{eq:applicationtracerestriction}
\end{equation}
where \((\bsmc M^{-1} (\bs x'-\frac{\bs \alpha}{2})-\frac{\bs \alpha}{2})_i = x_i\) for \(i \in \{1, \ldots, 200\}\). There are \(199\) sums in Eq.~\ref{eq:applicationtrace} and the \(\rho_{\pi/8}\) contain \(k\) more, for a total of \((k+199)\).

The restriction on the sum over \(\bs x'\) to be in \(D\), defined in Eq.~\ref{eq:applicationtracerestriction}, can be treated as an additional Kronecker delta function modulo \(3^2\), \(\delta(\bsmc M^{-1} (\bs x'-\frac{\bs \alpha}{2}) - \frac{\bs \alpha}{2})_{101})\), multiplying the full sum over \(\bs x'\).

We can begin reducing the number of sums by first summing out all \(\delta(x'_{q_j})\) in Eq.~\ref{eq:applicationtrace}, which will replace all \(x'_{q_j}\) in this additional Kronecker delta function for \(j \in \{t+1, \ldots, 100\}\). We can then proceed to sum away all \(x'_{p_j}\) for \(j \in \{t+1, \ldots, 100\}\) since they are not present anywhere in the full summand and put in the appropriate factors of \(3\). The additional delta function \(\delta(\bsmc M^{-1} (\bs x'-\frac{\bs \alpha}{2}) - \frac{\bs \alpha}{2})_{101})\) can now only include terms \(x'_{p_i}\) and \(x'_{q_i}\) for \(i \in \{1, \ldots, t\}\). Any such \(x'_{p_i}\) term can now be chosen to be summed away, which replaces it on the corresponding \(\rho_{\pi/8}\) state. The remaining \((t-1)\) \(x'_{p_i}\) terms can only be linear terms in the \(\rho_{\pi/8}\) exponential functions. Thus, they can be summed away to produce \((t-1)\) Kronecker delta functions that are linearly independent and so can be used to sum away \((t-1)\) \(x'_{q_i}\). There will remain left over one \(x'_{q_i}\) variable along with the \(t\) \(x_{2qi}\) variables---a remaining total of \((t+1)\) sums. This process is illustrated schematically in Table~\ref{tab:pi8gatesimplification}.

  \begin{table*}[ht]
    \begin{tabular}{| c | c | c |}
      \hline
    Step \# & Description & Resultant Equation\\
    \hline
    \(0\) & & \(\sum \left[ \prod \rho^{\text{linear in \(x'_{p_i}\)}}_{\pi/8}(x'_{p_i}, x'_{q_i}, x_{2q_i}) \rho^{\text{rest}}_{\pi/8}(x'_{q_i}, x_{2q_i}) \prod \delta(x'_{q_j}) \delta((\mathcal M \bs x')_{101})\right]\)\\
    \hline
    \(1\) & Sum away all \(x'_{q_j}\). & \(\sum \left[ \prod \rho^{\text{linear in \(x'_{p_i}\)}}_{\pi/8}(x'_{p_i}, x'_{q_i}, x_{2q_i}) \rho^{\text{rest}}_{\pi/8}(x'_{q_i}, x_{2q_i}) \delta((\mathcal M \bs x')_{101}|_{\wedge x'_{q_j}=0})\right]\)\\
    \hline
    \(2\) & Sum away all \(x'_{p_j}\). & \(3^{100-t}\sum \left[ \prod \rho^{\text{linear in \(x'_{p_i}\)}}_{\pi/8}(x'_{p_i}, x'_{q_i}, x_{2q_i}) \rho^{\text{rest}}_{\pi/8}(x'_{q_i}, x_{2q_i}) \delta((\mathcal M \bs x')_{101}|_{\wedge x'_{q_j}=0})\right]\)\\
    \hline
    \(3\) & \begin{tabular}{c}Replace a \(x'_{p_{i'}}\) term\\ using remaining delta function.\end{tabular} & \(\begin{array}{c}3^{100-t}\sum \left[ \prod_{i\ne i'} \rho^{\text{linear in \(x'_{p_i}\)}}_{\pi/8}(x'_{p_i}, x'_{q_i}, x_{2q_i}) \rho^{\text{rest}}_{\pi/8}(x'_{q_i}, x_{2q_i})\right. \\ \times \left.\rho^{\text{linear in \(x'_{p_{i'}}\)}}_{\pi/8}(\bs x', x_{2q_{i'}}) \rho^{\text{rest}}_{\pi/8}(x'_{q_{i'}}, x_{2q_{i'}}) \right]\end{array}\)\\
    \hline
    \(4\) & Sum away remaining \((t-1)\) \(x'_{p_i}\) terms. & \(\begin{array}{c}3^{100-t}\sum \left[ \prod_{i\ne i'} \delta(x'_{p_i}, x'_{q_i}, x_{2q_i}) \rho^{\text{rest}}_{\pi/8}(x'_{q_i}, x_{2q_i})\right. \\ \times \left.\rho^{\text{linear in \(x'_{p_{i'}}\)}}_{\pi/8}(\bs x'_q, x_{2q_{i'}}) \rho^{\text{rest}}_{\pi/8}(x'_{q_{i'}}, x_{2q_{i'}}) \right]\end{array}\)\\
    \hline
    \(5\) & Sum away \((t-1)\) \(x'_{q_i}\). & \(\begin{array}{c}3^{100-t}\sum \left[ \prod_{i\ne i'} \rho^{\text{rest}}_{\pi/8}(x'_{q_{i'}}, x_{2q_i})\right. \\ \times \left.\rho^{\text{linear in \(x'_{p_{i'}}\)}}_{\pi/8}(x'_{q_{i'}}, x_{2q_{i'}}) \rho^{\text{rest}}_{\pi/8}(x'_{q_{i'}}, x_{2q_{i'}}) \right]\end{array}\)\\
    \hline
    \end{tabular}
  \caption{Step-by-step schematic illustration of the worst-case simplification of Eq.~\ref{eq:applicationtrace} to \((t+1)\) sums described in the text. For clarity, in the equations in the third column, \(\rho_{\pi/8}(x'_{p_i}, x'_{q_i}, x_{2q_i})\) is split up into its exponential part linear in \(x'_{p_i}\), \(\rho^{\text{linear in \(x'_{p_i}\)}}(x'_{p_i}, x'_{q_i}, x_{2q_i})\), and the rest, \(\rho^{\text{rest}}_{\pi/8}(x'_{q_i}, x_{2q_i})\), which is not dependent on \(x'_{p_i}\). Furthermore, Kronecker delta functions are mostly written in short-hand in terms of just their comma-separated arguments. It should be possible from the text to determine the domains of the sums and products as well as the form of the functions listed.}
    \label{tab:pi8gatesimplification}
\end{table*}

If the additional Kronecker delta function representing the restricted sum never had any remaining \(x'_{p_i}\) terms then it could simply be used to sum away and replace a \(x'_{q_i}\) term with other \(x'_{q_i}\)s, and one could then proceed as outlined before by summing away all \(x'_{p_i}\) to produce Kronecker delta functions and summing away all \(x'_{q_i}\). This would result in only \(t\) remaining sums over the \(x_{2q_i}\).

Therefore, in the worst case, the total number of variables that will need to be summed in the \(t\) \(\pi/8\) gate magic states will be \((t+1)\) consisting of Gauss sums and leading phases. This is worst-case scaling of \(3^{t+1}\) terms in a sum. We show how this scaling compares to the alternative algorithm in Fig.~\ref{fig:pi8gatenumerics}.

Acting \(\hat U_C\) on the ket instead produces \(\bsmc M (\bs x + \frac{\bs \alpha}{2}) + \frac{\bs \alpha}{2}\) in the arguments for the initial stabilizers and \(\rho_{\pi/8}\) functions with a sum over \(x\) restricting \(x_{q_1} = 0\). The preceding argument then can be made again since the rows in the top half and bottom half of \(\bsmc M (\bs x + \frac{\bs \alpha}{2}) + \frac{\bs \alpha}{2}\) correspond to linearly independent equations and so can be treated independently~\cite{Kocia17}. This results in a worst-case of \((t+1)\) sums again.

\begin{figure}[ht]
  \includegraphics[scale=1.0]{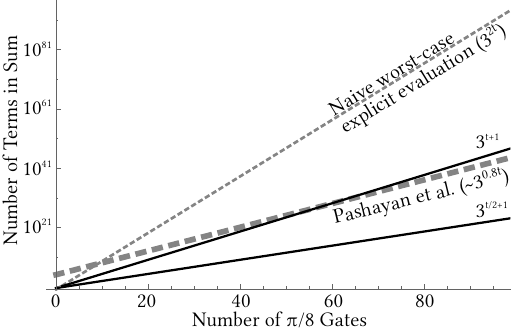}
  \caption{Logarithm of the number of terms required to evaluate single qutrit marginal after Clifford evolution of Eq.~\ref{eq:weylpi8gatequtrit}. The number of samples needed for a full Wigner calculation (thinner dashed curve) and the corresponding Monte Carlo integration based on negativity for precision \(<0.01\) from Pashayan \emph{et al}.~\cite{Pashayan15} (thicker dashed curve) is included for reference. The upper black curve (plotting \(3^{t+1}\)) illustrates the scaling of the number of terms in the sum with the single qutrit \(\pi/8\) gate magic states (Eq.~\ref{eq:qutritpi8gate}) and the lower black curve (plotting \(3^{\frac{t}{2}+1}\)) illustrates the scaling with the two-qutrit \(\pi/8\) gate magic state (Eq.~\ref{eq:twoqutritpi8gate_2}). The latter illustrates that the stationary phase algorithm for strong simulation (full calculation, i.e. with precision \(0.0\)) is able to significantly outperform Pashayan \emph{et al}.'s algorithm based on negativity for any number of qutrits with a performance gain that grows exponentially with the number of qutrits.}
  \label{fig:pi8gatenumerics}
\end{figure}

We have thus found that a \(\pi/8\) gate (and magic state) can be captured with only \(p\) critical points. This is commensurate with its role as the ``minimally'' non-Clifford gate~\cite{Boykin99} and that its direct products have the smallest stabilizer rank~\cite{Bravyi16}.

We proceed to improve this scaling further by developing a relationship between state stabilizer rank and the number of critical points necessary to represent that state and then leveraging this result with the \(\pi/8\) gate magic state's optimal stabilizer rank for two qutrits.

\section{Stabilizer Rank and Stationary Phase Critical Points}
\label{sec:stabrank}

We will now see that a relationship can be established between the stabilizer rank of this state and the number of these critical points. This is in contrast to the relationship between the amount of negativity present magic states and contextuality, which appear to be inversely related~\cite{Howard17}. This suggests that negativity is not the most efficient way to introduce ``magic'' or non-contextuality in a practical algorithm, and indeed we find this to be the case here.

In the stationary phase method applied to the infinite dimensional (continuous) case, the critical points correspond to intersections between Gaussian manifolds, the continuous generalization of stabilizer states~\cite{Kocia16}. However, in the ``periodized'' stationary phase method examined here, we cannot expect the same relationship to hold. Nevertheless, a relationship can still be estabilished between the number of stabilizer states that a state can be expressed in terms of---its stabilizer rank---and the number of critical points necessary to represent that state:
\begin{theorem}[Stabilizer Rank and Critical Points]
  \label{th:stabrank}
  Let \(\ket{\Psi}\) be a \(n\)-qudit odd-prime-\(p\)-dimensional state that can be written as an equiprobable linear combination of the \(p^n\) logical basis states. If \(\ket{\Psi}\) can also be expressed as an equiprobable linear combination of \(p^m\) (\(m \le n\)) orthogonal stabilizer states, which can be written, after some Clifford transformation \(\hat U_C\), as products of \(m\) orthogonal single-qudit stabilizer states \(\{\ket{\phi_{ij}}\}\) and \((n{-}m)\) single-qudit stabilizer states \(\{\ket{\psi_{ik}}\}\) that are not logical states,
  \begin{equation}
    \hat U_C \ket{\Psi} = \sum_{i=0}^{p^m-1} c_i \prod_j^{m} \ket{\phi_{ij}} \prod_k^{n-m} \ket{\psi_{ik}},
  \end{equation}
  where \(c_i \in \mathbb C\) and \(|c_i|^2 = |c_j|^2\) \(\forall\, i,j\), then the Wigner function of \(\hat \rho = \ketbra{\Psi}{\Psi}\), \(\rho(\bs x)\), can be expressed in terms of \(\le p^m\) quadratic Gauss sums of dimension \(p^{n-m}\) (i.e. \(\rho(\bs x)\) can be written in terms of \(\le p^m\) critical points of dimension \(p^{n-m}\)).
\end{theorem}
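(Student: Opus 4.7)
The plan is to reduce by Clifford covariance, exploit the tensor-product structure of the stabilizer decomposition, and use the delta-function support of the logical-state cross Wigner functions to collapse the double sum to at most $p^m$ terms, each of which assembles into a quadratic Gauss sum of dimension $p^{n-m}$.

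First I would invoke the Clifford covariance of the WWM formalism (Section~\ref{subsec:GausssumsonCliffordgates}): $\hat U_C$ transforms the Weyl symbol of $\hat\rho$ by a symplectic affine change of arguments, so it preserves both the count of critical points and the dimension of each resulting Gauss sum. Hence it suffices to prove the claim for $\hat U_C = \hat I$, in which case $\ket{\Psi} = \sum_{i=0}^{p^m-1} c_i \prod_j \ket{\phi_{ij}} \prod_k \ket{\psi_{ik}}$. I then expand $\rho(\bs x) = \sum_{i,i'=0}^{p^m-1} c_i c_{i'}^{*}\, W[\ketbra{\Psi_i}{\Psi_{i'}}](\bs x)$ as a double sum of $p^{2m}$ cross terms, and invoke the tensor-product property of Weyl symbols to factor each cross Wigner as $\prod_{j=1}^m W[\ketbra{\phi_{ij}}{\phi_{i'j}}](\bs x_j)\prod_{k=1}^{n-m} W[\ketbra{\psi_{ik}}{\psi_{i'k}}](\bs x_k)$, separating the $m$ single-qudit \emph{logical} factors from the $(n-m)$ single-qudit \emph{non-logical} stabilizer factors.

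The next step is a delta-function collapse. From Eq.~\ref{eq:twogenweylsymbol} applied to $\ketbra{n}{n'}$ one reads off $W[\ketbra{n}{n'}](x_p,x_q) \propto \delta_{x_q,\,2^{-1}(n+n')\bmod p}\,\omega^{(n'-n)x_p}$. Fixing $\bs x$, the $m$ Kronecker deltas pin each logical digit of $i'$ in terms of the corresponding digit of $i$ and of $x_{q_j}$; because $(\phi_{i1},\ldots,\phi_{im})$ labels $i$ uniquely, this forces $i'$ to a single value $i'(i,\bs x)$, or else kills the term. The double sum therefore reduces to a single sum over $i$ with at most $p^m$ nonzero contributions---these are the $p^m$ critical points of the theorem.

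Each surviving contribution has the form $c_i c_{i'(i,\bs x)}^{*}\cdot e^{\frac{2\pi i}{p} L_i(\bs x_p)}\cdot \prod_{k=1}^{n-m} W[\ketbra{\psi_{ik}}{\psi_{i'(i,\bs x)k}}](\bs x_k)$ with $L_i$ linear in the $m$ logical momenta. Non-logical single-qudit stabilizer states have Weyl symbols that are exponentials of polynomials of degree at most two (Section~\ref{subsec:GausssumsonCliffordgates}), so each cross-Wigner factor is itself a quadratic phase; assembling the $(n-m)$ factors with the translations from Eq.~\ref{eq:Weylsymbolofevennumofops} and collecting the exponent produces precisely a quadratic Gauss sum $G_h(\bs A, \bs v)$ of dimension $p^{n-m}$ in the sense of Eq.~\ref{eq:Gausssum}. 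The main obstacle is this final assembly step: verifying that the product of single-qudit stabilizer cross-Wigner exponents really aggregates into a single quadratic form with integer-valued symmetric $\bs A$ and integer $\bs v$ (so that Proposition~\ref{prop:Gausssums} applies), and carefully tracking the overall normalization using the equiprobability hypothesis $|c_i|^2 = |c_j|^2$.
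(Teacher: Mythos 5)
Your proposal takes a genuinely different route from the paper's proof, and the overall strategy is viable, though it has a gap you yourself flag plus a few imprecisions.

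The paper works on the \emph{state side}: it applies a further Clifford $\hat U'_C$ to send the $\ket{\phi_{ij}}$ to logical states, writes $\hat U'_C\hat U_C\ket\Psi = \hat U\,\hat F^{\otimes n}\ket{0}^{\otimes n}$ for a diagonal unitary $\hat U$ with Weyl symbol $\mathcal N\exp\bigl(\tfrac{2\pi i}{p}S(i,j)\bigr)$, and then shows that, for fixed $i$, the restriction of $\hat U$ to the last $(n{-}m)$ qudits sends $\ket{+}^{\otimes(n-m)}$ to a product stabilizer state, hence is Clifford, hence $S(i,\cdot)$ is quadratic in $j$. The Gauss-sum structure of the Wigner function then follows from the quadratic action appearing inside the double-ended propagator. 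Your proposal instead works directly on the \emph{phase-space side}: you expand $\rho(\bs x)$ as the $p^{2m}$-term double sum of cross-Wigner functions, use the delta-function structure of $W[\ketbra{n}{n'}]$ to collapse $i'$ to a single value given $i$ and $\bs x$, and then argue that the residual product of non-logical cross-Wigner factors is a $p^{n-m}$-term quadratic Gauss sum. This is a legitimately different organization of the argument: the paper's route isolates \emph{where} the quadratic phase comes from (the Clifford diagonal subgate) and is constructive, while yours keeps everything at the level of Weyl symbols and makes the $p^m$ count manifest from the start. Both rely, at bottom, on the same fact: equiprobable non-logical single-qudit stabilizer states carry quadratic phases in the computational basis.

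Two points deserve attention. First, your reduction to $\hat U_C = \hat I$ and subsequent use of the logical-state cross-Wigner formula quietly assumes the $\ket{\phi_{ij}}$ are already logical; as stated in the theorem they are merely orthogonal single-qudit stabilizer states, and the paper explicitly applies the additional Clifford $\hat U'_C$ to land them on $\ket{0},\ldots,\ket{p-1}$. This is easily fixed but should be made explicit. Second---and this is the substantive gap you yourself identify---the claim that the non-logical cross-Wigner factors ``assemble into a quadratic Gauss sum of dimension $p^{n-m}$'' is not justified by citing Section~\ref{subsec:GausssumsonCliffordgates}, which concerns Clifford \emph{gates}, not stabilizer \emph{states}. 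What you actually need is: expanding $W[\ketbra{\psi_{ik}}{\psi_{i'k}}]$ via Eq.~\ref{eq:twogenweylsymbol} produces a one-dimensional sum whose phase is quadratic precisely because $\psi_{ik}$ and $\psi_{i'k}$ are equiprobable stabilizer states (so their computational-basis amplitudes are $e^{2\pi i\,\theta(j)/p}/\sqrt p$ with $\theta$ quadratic), and these one-dimensional Gauss sums combine multiplicatively into an $(n{-}m)$-dimensional one. You should also note that such cross-Wigners generically carry Kronecker-delta support restrictions (not just a pure quadratic phase), so one must check these are absorbed into the domain and do not break the Gauss-sum form required by Proposition~\ref{prop:Gausssums}. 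None of this is fatal to your approach, but it is the part of the argument that actually does the work and cannot be left to a citation.
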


\begin{proof}
  For a fixed \(j\), there must exist a one-qubit Clifford transformation that takes \(\ket{\phi_{ij}}\) to \(\ket{0}\). It follows after this transformation, for all other \(i\), \(\ket{\phi_{ij}}\) are also orthogonal one-qudit stabilizer states and so must be the other logical states on that qudit. Proceeding in this manner on the remaining one-qudit \(\{\phi_{ij}\}\) produces a Clifford transformation \(\hat U'_C\) such that
  \begin{equation}
    \hat U'_C \hat U_C \ket{\Psi} = \sum_i^{p^m} c_i \ket{i} \prod_k^{m-n} \ket{\psi_{ik}},
  \end{equation}
  where \(\ket{i}\) is the base \(p\) representation of a product of single-qudit logical (stabilizer) states.

  We can rewrite the \(({n-m})\) remaining qudits in the base \(p\) representation:
  \begin{equation}
    \sum_i^{p^m} c_i \ket{i} \prod_k^{n-m} \ket{\psi_{ik}}, = \sum_i^{p^m} \sum_j^{p^{n-m}} c_{i,j} \ket{i} \ket{j},
  \end{equation}
  where again \(\ket{j}\) is written in base \(p\) representation.

  \(|c_{i,j}|^2 = |c_{k,l}|^2\) \(\forall\, i,j,k,l\) since we are given that \(|c_i|^2 = |c_j|^2\) \(\forall\, i,j\) and \(\{\ket{\psi_{ik}}\}\) are single-qudit stabilizer states that are not logical states, and so are equiprobable in terms of logical states.
  
  It follows that
  \begin{equation}
    c_{i,j} = \mathcal N \exp \left( \frac{2 \pi i}{p} S(i,j) \right),
  \end{equation}
  where \(S(i,j) \in \mathbb R\) is some polynomial function of \(i\) and \(j\) when they are broken up into their \(m\) and \((n-m)\) base \(p\) constituents, respectively.

  Therefore, the Weyl symbol of the diagonal gate \(\hat U\) that takes \(\hat \rho' = \prod_{i=1}^{p^n} \hat F_i \ket{0}^{\otimes n}\) to \(\sum_i^{p^m} c_i \ket{i} \ket{\psi_{i}}\), \(\hat U = \sum_i^{p^m} \sum_j^{p^{n-m}} c_{i,j} \ketbra{i,j}{i,j}\), is \(U(i,j) = \mathcal N \exp \left( \frac{2 \pi i}{p} S(i,j) \right)\).

  For fixed \(i\), \(\sum_j c_{i,j} \ketbra{i}{i} \otimes \ketbra{j}{j}\) takes \(\ket{i} \prod_{i=1}^{p^{n-m}} \hat F_i \ket{0}^{\otimes (n-m)}\) to \(\sum_j^{p^{n-m}} c_{i,j} \ket{i} \ket{j} = \ket{i} \ket{\psi_i}\), and is thus a Clifford gate since it takes a stabilizer state to a stabilizer state. Hence, for fixed \(i\), \(S(i,j)\) must be a quadratic polynomial in \(j\).

  Therefore, each \(i\) in the sum indexes a term made up of a \(p^{n-m}\)-dimensional quadratic Gauss sum. This implies that the Wigner function has \(\le p^m\) quadratic Gauss sums of dimension \(p^{n-m}\), or equivalently, critical points. \qed
\end{proof}

\section{Two Qutrit \(\pi/8\) gate magic States}
\label{sec:twoqutritmagicstates}

A single qubit \(T\)-gate magic state is not a stabilizer state and has a stabilizer rank of two. Two copies of the qubit magic state are somewhat remarkable in that they can also be written in terms of only two stabilizer states:
\begin{eqnarray}
  \ket{A^{\otimes 2}} &=& \frac{1}{2} (\ket{00} + i \ket{11})\nonumber\\
                      &&+\frac{e^{i\pi/4}}{2}(\ket{01}+\ket{10}),
\end{eqnarray}
where each line consists of a stabilizer state.

In the qutrit case, a similar pattern holds. A single qutrit \(T\)-gate magic state consists of at least three stabilizer states. Moreover, two copies of the qutrit magic state can also be written in terms of only three stabilizer states:
\begin{eqnarray}
  \label{eq:pi8gate3stabstates}
  \hat U_{\pi/8}^{\otimes 2} \hat F^{\otimes 2} \ket{00} &=& \frac{1}{3}(\ket{00} + \ket{12} + \ket{21})\\
                                                         &&+ \frac{e^{-2 \pi i/9}}{3} (\ket{02} + \ket{20} + e^{2 \pi i/3} \ket{11}) \nonumber\\
                                                         && + \frac{e^{2 \pi i/9}}{3} (\ket{01} + \ket{10} + e^{-2 \pi i/3}\ket{22}), \nonumber
\end{eqnarray}
where again each line consists of a stabilizer state (\(\hat F\) is the Hadamard gate or discrete Fourier transform).

Qubit strong simulation algorithms based on stabilizer rank have been able to leverage this fact to halve their exponential scaling of terms to \(\mathcal O(2^{0.5 t})\)~\cite{Bravyi16,Howard18}. The performance can be improved to \(\mathcal O (2^{\sim 0.468 t})\) by using the fact that six qubit \(T\)-gate magic states can be written in terms of only seven stabilizer states~\cite{Bravyi16_2}.

The dependence of the number of critical points on the intermediate values of \(\bs x_{2q}\) for the \(\pi/8\) gate magic state, means that generally the number of critical points scales exponentially with number of \(\pi/8\) gates. However, if this stabilizer rank for two qutrit \(\pi/8\) gate magic states can be used, the exponent is reduced by a factor of \(2\).

Eq.~\ref{eq:pi8gate3stabstates} can be transformed into a more presentable form by the action of the Clifford controlled-not gate, \(C_{12}\), which adds the value of the low qutrit to the high qutrit mod $3$ and produces:
 \begin{eqnarray}
C_{12} \hat U_{\pi/8}^{\otimes 2} \hat F^{\otimes 2} \ket{00} &=&\frac{1}{3} \left( \ket{00} + \ket{02} + \ket{01} \right)\\
&&+ \frac{e^{-2 \pi i/9}}{3} \left( \ket{22} + \ket{20} + e^{2 \pi i/3} \ket{21} \right) \nonumber\\
&& + \frac{e^{2 \pi i/9}}{3} \left( \ket{11} + \ket{10} + e^{-2 \pi i/3}\ket{12} \right), \nonumber
 \end{eqnarray}
 We recognize this as a pointer state, in which the high qutrit tells us the state of the low qubit. These states are stabilizer states, and we may write:
\begin{eqnarray}
C_{12} \hat U_{\pi/8}^{\otimes 2} \hat F^{\otimes 2} \ket{00} &=& \frac{1}{3} \ket{0} \hat F \ket{0} \\
&+& \frac{e^{2 \pi i/9}}{3} \ket{1} \hat P^2 \hat F \ket{0} \nonumber\\
&+& \frac{e^{-2 \pi i/9}}{3}\ket{2} \hat P \hat F \ket{1}, \nonumber
\end{eqnarray}
where \(\hat P = \text{diag} \{1,1,\omega\}\) is the Clifford phase shift gate.

This form satisfies the description of Theorem~\ref{th:stabrank} and so implies that the Wigner function of \(\hat U_{\pi/8}^{\otimes 2} \hat F^{\otimes 2} \ket{00}\) can be represented by three one-dimensional quadratic Gauss sums, just like for \(\hat U_{\pi/8} \hat F \ket{0}\)---the single \(\pi/8\) gate magic state.

To see how this is true in the Wigner representation, we consider two copies of the magic state given by Eq.~\ref{eq:weylpi8gatequtrit}:
\begin{widetext}
  \begin{eqnarray}
    \rho^{\otimes 2}_{\pi/8}(\bs x_1, \bs x_2) &=& \frac{1}{3^2} \sum_{x_{2q_1} \in \mathbb Z/ 3^2 \mathbb Z} \exp \left\{ \frac{2 \pi i}{3^2} \left[ (-x_{2q_1} + 2 x_{q_1} )^{3} - x_{2q_1}^{3} + 2 \times 3 (x_{2q_1} - x_{q_1}) x_{p_1}  \right] \right\} \\
                                                && \times \frac{1}{3^2} \sum_{x_{2q_2} \in \mathbb Z/ 3^2 \mathbb Z} \exp \left\{ \frac{2 \pi i}{3^2} \left[ (-x_{2q_2} + 2 x_{q2} )^{3} - x_{2q_2}^{3} + 2 \times 3 (x_{2q_2} - x_{q_2}) x_{p_2}  \right] \right\} \nonumber
  \end{eqnarray}

We are allowed to transform \(x_{2q_1}\) and \(x_{2q_2}\) by a symplectic transformation since the sum over these variables is invariant to symplectic transformations; the sum over a finite field of an exponential polynomial function that is linearly transformed merely changes the order of summation. However, to calculate the full trace as in Eq.~\ref{eq:applicationtrace} involves summing over \(\bs x\) as well, which involves terms that are also exponential polynomial functions but over a subset of the full domain \(D\) given by Eq.~\ref{eq:applicationtracerestriction}. A symplectic transformation does not leave this restricted sum invariant, it additionally transforms the restriction (Eq.~\ref{eq:applicationtracerestriction}) by some matrix \(\bsmc M\) and vector \(\bs \alpha\). Fortunately, we consider all such transformations of the restriction in our analysis above due to already considering any Clifford transformation on the \(\pi/8\) gates. Therefore, we are free to further symplectically transform the \(\bs x\) variables within our analysis.

   The two-qutrit \(\hat C_{12}\) controlled-not gate transforms \((x_{p_1}, x_{p_2}, x_{q_1}, x_{q_2})\) to \((x_{p_1}, x_{p_2}-x_{p_1}\Mod 3, x_{q_1}+x_{q_2}\Mod 3, x_{q_2})\)~\cite{Kocia16}. Acting on both \(\bs x\) and \(x_{2q_1}\) and \(x_{2q_2}\) with \((\hat C_{12})^2\) produces:
  \begin{eqnarray}
    \label{eq:twoqutritpi8gate_2}
    && \rho_{\pi/8}^{\otimes 2}(x_{p_1}, x_{p_2} + x_{p_1} \Mod 3, x_{q_1} - x_{q_2} \Mod 3, x_{q_2}) \\
    &=& \frac{1}{3^2} \sum_{x_{2q_1}, x_{2q_2} \in \mathbb Z/ 3 \mathbb Z} \exp \left\{ \frac{2 \pi i}{3^2} \left[ 8 x_{q_1}^3 + 7 x_{2q_1}^3 + P(x_{2q_1}, x_{2q_2}, \bs x) \right] \right\},
\end{eqnarray}
where
\begin{eqnarray}
  P(x_{2q_1}, x_{2q_2}, \bs x) &=& 3 x_{q_1}^2 x_{q_2} + 6 x_{q_1} x_{q_2}^2 + 6 x_{q_1}^2 x_{2q_1} + 6 x_{q_1} x_{q_2} x_{2q_1} \nonumber\\
    && + 6 x_{q_2}^2 x_{2q_1} + 6 x_{q_1} x_{2q_1}^2 + 3 x_{q_2} x_{2q_1}^2 + 3 x_{q_1}^2 x_{2q_2} + 3 x_{q_1} x_{q_2} x_{2q_2} \nonumber\\
    && + 6 x_{q_1} x_{2q_1} x_{2q_2} + 3 x_{q_2} x_{2q_1} x_{2q_2} + 6 x_{2q_1}^2 x_{2q_2} + 6 x_{q_1} x_{2q_2}^2 + 3 x_{2q_1} x_{2q_2}^2  \nonumber\\
    && + x_{p_1} (6 x_{2q_1} + 3 x_{q_1}) + x_{p_2} (6 x_{2q_2} + 3 x_{q_2}). \nonumber
\end{eqnarray}
\end{widetext}

We can see that for each of the three values that \(x_{2q_1}\) takes in the sum, the exponent is a quadratic polynomial for \(x_{2q_2}\) for fixed \(\bs x\) with coefficients in \(\mathbb Z/3 \mathbb Z\) and so is a quadratic Gauss sum. We further see that the \(x_{q_1}\) are similarly cubic and that for each value of \(x_{q_1}\) the exponent is a quadratic polynomial for \(x_{q_2}\) for fixed \(x_{2q_1}\) and \(x_{2q_2}\). \(x_{p_1}\) and \(x_{p_2}\) are both linear terms just as before, and therefore the same arguments for the scaling of the marginal trace hold here.

Thus, the end result is a sum of three one-dimensional Gauss sums, exactly the same number of Gauss sums as we found for a single \(\pi/8\) gate in Section~\ref{sec:qutritpi8gate}.

Thus the number of critical points is reduced to \(3^{t/2+1}\) so that the cost of simulation of the example in Section~\ref{sec:qutritpi8gate} is \(\mathcal O(3^{t/2})\). This can be seen by the lowest black curve in Figure~\ref{fig:pi8gatenumerics}.

This performance is better than the Monte Carlo algorithm of Pashayan~\emph{et al}.~\cite{Pashayan15} and is achieved for an algorithm with no Monte Carlo sampling error. Use of Monte Carlo would further improve this scaling. It is interesting to note that weak simulation of qutrits by the method of~\cite{Huang18} scales as \(3^{0.32t}\) and so it is perhaps possible that Monte Carlo could improve this strong simulation algorithm to be more efficient than weak simulation.
  
\section{Future Directions}
\label{sec:future}

One of the central pillars of the stationary phase method used here is that non-contextual operations are efficiently classically simulable. This means that the Weyl symbols of Clifford gates reduce to a Gauss sum and affect simple symplectic transformations in Weyl phase space. Furthermore, stabilizer states have non-negative Wigner functions that characterize functions of affine subspaces of the discrete phase space.

For a similar approach to work for qubits, the same operations must be non-contextual in order to be able to be free resources. This requires the WWM formalism to be extended from two generators, \(p\) and \(q\), to three generators that become Grassmann elements~\cite{Kocia17_2}. The resultant Grassmann algebra cannot be treated over disjoint states in phase space~\cite{Kocia18}, such as we have done here, and so a Grassmann calculus must be used. It would be interesting to see if such the stationary phase method can be applied to the Grassmann algebra and produce a similar treatment of qubit \(\pi/8\) gates.

Another interesting direction for future study regards the mathematical relationship between \(\pi/8\) gate magic state stabilizer rank and the number of critical points in exponentiated multidimensional polynomials. The improvement in scaling found in Section~\ref{sec:twoqutritmagicstates} found by a symplectic transformation of the \(x_{2q_1}\)-\(x_{2q_2}\) plane is really due to the reduction in the cubic power of a polynomial with respect to one degree when it is reexpressed as an inseparable polynomial with the second degree of freedom. Theorem~\ref{th:stabrank} strongly suggests that a similar simplification holds for higher numbers of \(\pi/8\) gate magic states that are known to have lower stabilizer ranks~\cite{Bravyi16}. The form of this relationship may be helpful in finding such lower stabilizer ranks for even higher numbers of \(\pi/8\) gate magic states than are currently know, as well as establishing concrete bounds.

A related interesting question regards approximate stabilizer rank of magic states instead of their exact stabilizer rank. This is sufficient for weak simulation and frequently leads to a more efficient algorithm since we can get away with introducing error in the probability distribution we are sampling that is supposed to represent the exact probability distribution. A weak simulation result for qudits has recently been developed~\cite{Huang18}. However, it would be interesting to see if there is a similar result to Theorem~\ref{th:stabrank} that deals with approximate stabilizer rank and if there is some such ``approximate'' analog to the discrete stationary phase method that is useful for weak simulation.

\section{Conclusion}
\label{sec:conc}

This paper established the stationary phase method as a way to understand the order \(\hbar^0\) non-contextual Clifford subtheory in the WWM formalism, producing single Gauss sums, and higher order \(\hbar\) contextual extensions of the subtheory, in terms of uniformizations---higher order sums---that can be reexpressed in terms of a sum over critical points or Gauss sums. This firmly tracks with the same relationships that exist in the continuous infinite-dimensional Hilbert space treatment of Gaussianity and non-Gaussianity, even though the stationary phase method introduced here for qudit systems differs in that it is a ``periodized'' stationary phase.

We discussed these differences and similarities between the continuous and discrete case. This involved comparing this measure of non-contextuality to negativity. We found that the usage of higher order \(\hbar\) uniformizations through the stationary phase method is more efficient than using negativity for \(\pi/8\) gate magic states, at least in the manner that has so far been tried. This also seems to have been noticed in the discrete community, which has turned to favor stabilizer state decomposition of magic states~\cite{Bravyi16,Howard18}. By relating the stabilizer rank of magic states to the number of critical points necessary to treat them, this paper falls in line with this latter approach in treating non-Clifford gates.

We found that we are able to calculate a single qutrit marginal from a system consisting of \(t\) \(\pi/8\) gate magic states that are then evolved under Clifford gates, with a sum consisting of \(3^{t+1}\) critical points corresponding to closed-form Gauss sums when the magic states are kept separable. This scaling improves to \(3^{\frac{t}{2}+1}\) when pairs of magic states are rotated into each other in accord with the optimal two-qutrit \(\pi/8\) gate magic state stabilizer rank. We showed that the latter scaling improves upon the current state-of-the-art.

All of this taken together establishes the usefulness of contextuality for practical application of classical simulation of qudit quantum algorithms through the venue of semiclassical higher order corrections in \(\hbar\) accomplished by the stationary phase method.

\acknowledgments
Parts of this manuscript are a contribution of NIST, an agency of the US government, and are not subject to US copyright. PSL acknowledges support from NSF award PHY1720395.

\bibliography{biblio}{}
\bibliographystyle{unsrtnat}

\appendix

\section{\(p\)-adic Numbers}
\label{app:padicnumbers}

For prime \(p\), we define the \(p\)-adic order or \(p\)-adic valuation, \(\nu_p(n)\), of a non-zero integer \(n\) to be the highest exponent \(\nu\) such that \(p^\nu\) divides \(n\) and set \(\nu_p(0) = \infty\).

\(p\)-adic numbers \(\mathbb Q_p\) can be written as:
\begin{equation}
  \label{eq:padicnumber}
  \sum_{i=k}^\infty c_i p^i,
\end{equation}
for \(c_i \in \{0, 1, 2, \ldots, (p-1)\}\) and \(k\) is an integer. \(p\)-adic integers, \(\mathbb Z_p\), are \(p\)-adic numbers where \(c_i=0\) for all \(i<0\).

Let us define the absolute value of a \(p\)-adic number \(n\), \(|n|_p\), to be the inverse of \(p\) taken to the power of its valuation: \(|n|_p = \frac{1}{p^{\nu_p(n)}}\). Hence, we can define the metric \(|n-m|_p\) to denote the distance between two \(p\)-adic numbers \(m\) and \(n\). Notice that this means that \(m\) and \(n\) are ``close'' together if their distance is a \emph{large} power of \(p\), which is the opposite expected from the Euclidean metric. As a result, this \(p\)-adic formalism presents an alternative way to complete the rational numbers to the real numbers \(\mathbb R\); completing the rationals with respect to the \(p\)-adic metric, \((\mathbb Q,|\bullet|_p)\), produces the \(p\)-adic numbers.

The results presented in this paper come from \(p\)-adic number theory. However, it should be clear from this brief presentation that positive integers and positive rational numbers with terminating base \(p\) expansions will have terminating \(p\)-adic expressions (Eq.~\ref{eq:padicnumber}) that are identical to their base \(p\) expansions. Since these are primarily the cases we will be concerned with in this paper, a more thorough understanding of \(p\)-adic theory is not really strictly necessary.

\section{Gauss Sums on Clifford Gates}
\label{app:Cliffgate}
We define the Hessian
\begin{equation}
  \bs A = - 2 \left( \begin{array}{cc}- \bs B \Mod p & \bsmc J\\ -\bsmc J &\bs B \Mod p\end{array} \right),
\end{equation}
the vector
\begin{equation}
  \bs v(\bs x, \bs x', \bs x_3) = \bsmc J \left( \begin{array}{c} 2(\bs x_3 - \bs x + \bs x') + (\bs \alpha \Mod p) \\ 2(\bs x_3 - \bs x - \bs x') + (- \bs \alpha \Mod p )\end{array}\right),
\end{equation}
and the scalar
\begin{equation}
  c(\bs x, \bs x', \bs x_3) = 2 \left[\bs x'^T \bsmc J \bs x + \bs x_3^T \bsmc J (\bs x + \bs x') \right],
\end{equation}
so that the sum can be rewritten to make use of Proposition~\ref{prop:Gausssums}b:
\begin{widetext}
\begin{eqnarray}
  UU^*(\bs x, \bs x') &=& \frac{1}{p^{2n}} \sum_{\bs x_1, \bs x_2, \bs x_3 \in (\mathbb Z/p \mathbb Z)^{2n}} \exp \left\{ \frac{\pi i}{p} \left[ \left(\begin{array}{c}\bs x_1\\ \bs x_2\end{array}\right)^T \bs A \left(\begin{array}{c}\bs x_1\\ \bs x_2\end{array}\right) + 2 \bs v^T \left(\begin{array}{c}\bs x_1\\ \bs x_2\end{array}\right) + 2 c \right] \right\}\\
                       &=& \sum_{\bs x_3 \in (\mathbb Z/p \mathbb Z)^{2n}} \exp \left[ -\frac{\pi i}{p} \left( \bs u^T \bs A \bs u - 2 c \right) \right] G_1(\bs A, \bs 0), \nonumber
\end{eqnarray}
where \(\bs u = \bs A^{-1} \bs v\), for~\cite{Bernstein05}
\begin{equation}
  \bs A^{-1}= -\frac{1}{2} \left( \begin{array}{c|c} -\bs I & \bs B^{-1} \bsmc J\\\hline -\bs B^{-1} \bsmc J & \bs I \end{array} \right) (\bs B - \bsmc J \bs B^{-1} \bsmc J )^{-1} =-\frac{1}{2} (\bs B - \bsmc J \bs B^{-1} \bsmc J)^{-1} \left( \begin{array}{c|c} -\bs I & \bsmc J \bs B^{-1}\\\hline -\bsmc J \bs B^{-1} & \bs I \end{array} \right),
\end{equation}
where we assume \(\det \bs B \ne 0\). We drop the arguments on \(\bs u\), \(\bs v\), and \(c\) and such subsequent terms for conciseness.

Since \(\bs B\) can be expressed with entries in \(\mathbb Z/ p \mathbb Z\) it follows that \((\bs B - \bsmc J \bs B^{-1} \bsmc J)^{-1} \in \mathbb Z_p\) and so \(\bs u \in \mathbb Z_p\). Thus, \(G_1(\bs A,0) \ne 0 \, \forall \, \bs x, \bs x', \bs x_3 \in (\mathbb Z/ p \mathbb Z)^{2N}\) by Proposition~\ref{prop:Gausssums}b.

In preparation of the final sum over \(\bs x_3\), we expand out the exponent's phase:
  \begin{eqnarray}
  \bs u^T \bs A \bs u - 2 c &=& \bs v^T (\bs A^{-1})^T \bs A \bs A^{-1} \bs v - 2 c\\
                            &=& \tiny \left( \begin{array}{c} 2(\bs x_3 - \bs x + \bs x') + (\bs \alpha \Mod p)\\ 2(\bs x_3 - \bs x - \bs x') + (- \bs \alpha \Mod p) \end{array} \right)^T \bsmc J^T \frac{-1}{2} [(\bs B - \bsmc J \bs B^{-1} \bsmc J)^{-1}]^T \left( \begin{array}{cc} -\bs I & \bsmc J \bs B^{-1} \\ - \bsmc J \bs B^{-1} & \bs I \end{array} \right) \bsmc J \left( \begin{array}{c} 2(\bs x_3 - \bs x + \bs x') + (\bs \alpha \mod p)\\ 2(\bs x_3 - \bs x - \bs x') + (- \bs \alpha \Mod p) \end{array} \right) \nonumber\\
                            && -4 [\bs x'^T \bsmc J \bs x + \bs x_3^T \bsmc J (\bs x + \bs x')] \nonumber\\
                            &=& \left\{ [2(\bs x_3 - \bs x + \bs x') + (\bs \alpha \Mod p)]^T \bsmc J^T (\bs B - \bsmc J \bs B^{-1} \bsmc J)^{-1}\frac{-1}{2}(-\bs I) \bsmc J  \right.\nonumber\\
                            && \left. + [2(\bs x_3 - \bs x - \bs x')+(- \bs \alpha \Mod p)]^T \bsmc J^T \frac{-1}{2} (\bs B - \bsmc J \bs B^{-1} \bsmc J)^{-1}(-\bsmc J \bs B^{-1}) \bsmc J \right\} [2 (\bs x_3 - \bs x + \bs x') + (\bs \alpha \Mod p) ] \nonumber\\
                            && + \left\{ [2(\bs x_3 - \bs x + \bs x') + (\bs \alpha \Mod p)]^T \bsmc J^T \frac{-1}{2} (\bs B - \bsmc J \bs B^{-1} \bsmc J)^{-1} \bsmc J \bs B^{-1} \bsmc J \right.\nonumber\\
                            && \left. + [2(\bs x_3 - \bs x - \bs x') + (- \bs \alpha \Mod p))^T \bsmc J^T \frac{-1}{2} (\bs B - \bsmc J \bs B^{-1} \bsmc J)^{-1} \bs I \bsmc J \right\} [ 2(\bs x_3 - \bs x - \bs x') + (- \bs \alpha \Mod p)] \nonumber\\
                            && - 4 [\bs x'^T \bsmc J \bs x + \bs x_3^T \bsmc J (\bs x + \bs x')]. \nonumber
\end{eqnarray}

Gathering powers of \(\bs x_3\), we define the Hessian
\begin{eqnarray}
  \bs A' &=& 4 \bsmc J^T \frac{-1}{2} (\bs B - \bsmc J \bs B^{-1} \bsmc J)^{-1} (-\bs I - \bs B^{-1} \bsmc J) \bsmc J \\
         && 4 \bsmc J^T \frac{-1}{2} (\bs B - \bsmc J \bs B^{-1} \bsmc J)^{-1} (\bs B^{-1} \bsmc J + \bs I) \bsmc J \nonumber\\
         &=& 0, \nonumber
\end{eqnarray}
the vector
\begin{eqnarray}
  2 \bs v'^T &=& (2 \bs x' + (\bs \alpha \Mod p))^T (4 \bsmc J) (\bsmc J \bs B - \bs I)^{-1} + 4 (\bs x + \bs x')^T \bsmc J,
\end{eqnarray}
where we made use of the property that \(\bs B\), \(\bs B^{-1}\), and \((\bs B - \bsmc J \bs B^{-1} \bsmc J)^{-1}\) are symmetric, \(\bsmc J^{-1} = \bsmc J^T = - \bsmc J\), and
\begin{equation}
  \bs B - \bsmc J \bs B^{-1} \bsmc J = (\mp \bsmc J \bs B^{-1} \pm 1)(\mp \bs B \bsmc J \pm 1) \bsmc J = \bsmc J(\mp \bs B^{-1} \bsmc J \pm 1)(\mp \bsmc J \bs B \pm 1),
\end{equation}
and we define the scalar
\begin{eqnarray}
  2 c' &=& \tiny 4 \bs x'^T [4 \bsmc J^T (\bs B - \bsmc J \bs B^{-1} \bsmc J)^{-1} \frac{-1}{2} \bsmc J - 4 \bsmc J^T (\bs B - \bsmc J \bs B^{-1} \bsmc J)^{-1} \frac{-1}{2} \bsmc J \bs B^{-1} \bsmc J - \bsmc J] ((1 + \bsmc J \bs B)^{-1} (1 - \bsmc J \bs B)(\bs x' + \frac{\bs \alpha}{2}) + \frac{\bs \alpha}{2}) \nonumber\\
       && + 8 \bs \alpha^T \bsmc J^T (\bs B - \bsmc J \bs B^{-1} \bsmc J)^{-1} \frac{-1}{2} \bsmc J [1 - \bs B^{-1} \bsmc J]((1 + \bsmc J \bs B)^{-1}(1 - \bsmc J \bs B)(\bs x' + \frac{\bs \alpha}{2}) + \frac{\bs \alpha}{2}),
\end{eqnarray}
for \(\bs \beta \equiv - \frac{1}{2} \bsmc J^T (\bs B - \bsmc J \bs B^{-1} \bsmc J)^{-1} \).

In summary:
\begin{eqnarray}
  \bs A' &=& 0,
\end{eqnarray}
the vector
\begin{eqnarray}
  2 \bs v'^T &=& (2 \bs x' + (\bs \alpha \Mod p))^T (4 \bsmc J) (\bsmc J \bs B - \bs I)^{-1} + 4 (\bs x + \bs x')^T \bsmc J,
\end{eqnarray}
and
\begin{eqnarray}
  2 c' &=& 4 \bs x'^T [4 \bs \beta \bsmc J - 4 \bs \beta \bsmc J \bs B^{-1} \bsmc J - \bsmc J] \bs x + 8 \bs \alpha^T \bs \beta \bsmc J [1 - \bs B^{-1} \bsmc J] \bs x
\end{eqnarray}
for \(\bs \beta = -\frac{1}{2} \bsmc J^T (\bs B - \bsmc J \bs B^{-1} \bsmc J)^{-1} \).

This allows us to rewrite
\begin{equation}
  U U^*(\bs x, \bs x') = G_1(\bs A, \bs 0) \sum_{\bs x_3 \in (\mathbb Z/p \mathbb Z)^{2n}} \exp \left[ -\frac{\pi i}{p} \left( \bs x_3^T \bs A' \bs x_3 + 2 \bs v'^T \bs x_3 + 2 c' \right) \right].
\end{equation}

By Proposition~\ref{prop:Gausssums}b, since \(\bs A' = \bs 0\), the sum over \(\bs x_3\) is non-zero if and only if \(\bs v' = 0\):
\begin{eqnarray}
  \bs x^T &=& (2 \bs x' + \bs \alpha)^T \bsmc J (\bsmc J \bs B - 1)^{-1} \bsmc J - \bs x'^T \\
          &=& \bs x'^T \bsmc J (2 + (\bsmc J \bs B - 1))(\bsmc J \bs B - 1)^{-1} \bsmc J\nonumber\\
          &&+ \bs \alpha^T \frac{1}{2} \bsmc J(1 + \bsmc J \bs B + 1 - \bsmc J \bs B)(\bsmc J \bs B - 1)^{-1} \bsmc J \nonumber\\
          &=& - \bs x'^T \bsmc J (1 + \bsmc J \bs B)(1 - \bsmc J \bs B)^{-1} \bsmc J\nonumber\\
          &&- \bs \alpha^T \frac{1}{2} \bsmc J (1 + \bsmc J \bs B)(1 - \bsmc J \bs B)^{-1} \bsmc J + \frac{\bs \alpha^T}{2} \nonumber\\
          &=& (\bs x'^T + \frac{\bs \alpha^T}{2}) (- \bsmc J \bsmc M^{-1} \bsmc J) + \frac{\bs \alpha^T}{2} \nonumber\\
  \implies \bs x &=& (- \bsmc J \bsmc M^{-1} \bsmc J)^T \left(\bs x' + \frac{\bs \alpha}{2}\right) + \frac{\bs \alpha}{2} \nonumber\\
          &=& - \bsmc J (- \bsmc J \bsmc M \bsmc J) \bsmc J \left(\bs x' + \frac{\bs \alpha}{2}\right) + \frac{\bs \alpha}{2} \nonumber\\
          &=& \bsmc M \left(\bs x' + \frac{\bs \alpha}{2}\right) + \frac{\bs \alpha}{2}. \nonumber
\end{eqnarray}

Thus we find that the sum over \(\bs x_3\) is non-zero i.f.f. \(\bs x = \bsmc M \left(\bs x' + \frac{\bs \alpha}{2}\right) + \frac{\bs \alpha}{2}\) and for these values the phase \(2 c'\) is equal to zero too. We can see this last fact by examining the quadratic, linear and constant parts of \(2c'\) w.r.t. \(\bs x'\) separately.

Quadratic terms in \(\bs x'\):
\begin{eqnarray}
  && 4 \bs x'^T [2 \bsmc J (\bs B - \bsmc J \bs B^{-1} \bsmc J)^{-1} \bsmc J - 2 \bsmc J (\bs B - \bsmc J \bs B^{-1} \bsmc J)^{-1} \bsmc J \bs B^{-1} \bsmc J - \bsmc J] (1 + \bsmc J \bs B)^{-1}(1 - \bsmc J \bs B) \bs x' \\
  &=& 4 \bs x'^T \bsmc J \bs x' \nonumber\\
  &=& 0. \nonumber
\end{eqnarray}
Using the results from the same simplification from the work on the quadratic part to simplify the first two terms in the linear part, we find:
\begin{eqnarray}
  && 2 \bs x'^T \bsmc J \bs \alpha + 2 \bs x'^T \bsmc J (\bs B - \bsmc J \bs B^{-1} \bsmc J)^{-1} \bsmc J (1 - \bs B^{-1} \bsmc J) (1 + \bsmc J \bs B) \bs \alpha \\
  && + 4 \bs \alpha^T \bsmc J (\bs B - \bsmc J \bs B^{-1} \bsmc J)^{-1} \bsmc J (1 - \bs B^{-1} \bsmc J) (1 - \bsmc J \bs B) (1 + \bsmc J \bs B)^{-1} \bs x' \nonumber\\
  &=& 4 \bs x'^T [-(\bsmc J - \bsmc J \bs B \bsmc J)^{-1} + (\bsmc J - \bsmc J \bs B \bsmc J)^{-1}] \bs \alpha \nonumber\\
  &=& 0. \nonumber
\end{eqnarray}
Constant terms:
\begin{eqnarray}
  && 2 \bs \alpha^T \bsmc J (\bs B - \bsmc J \bs B^{-1} \bsmc J)^{-1} \bsmc J (1 - \bs B^{-1} \bsmc J)((1 + \bsmc J \bs B)^{-1}(1 - \bsmc J \bs B) + 1) \bs \alpha\\
  &=& -4 \bs \alpha^T (\bsmc J \bs B \bsmc J \bs B \bsmc J)^{-1} \bs \alpha \nonumber
\end{eqnarray}
Since \(((\bsmc J \bs B \bsmc J \bs B \bsmc J)^{-1})^T = -(\bsmc J \bs B \bsmc J \bs B \bsmc J)^{-1}\), it follows that
\begin{equation}
  -4 \bs \alpha^T (\bsmc J \bs B \bsmc J \bs B \bsmc J)^{-1} \bs \alpha = 4 \bs \alpha^T (\bsmc J \bs B \bsmc J \bs B \bsmc J)^{-1} \bs \alpha = 0.
\end{equation}
\end{widetext}

In particular, for these values of \(\bs x\) and \(\bs x'\),  \(U^* U(\bs x, \bs x') = d^{2n}\) since the phase \(2 c'\) is equal to zero and \(G_1(0) = d^{2n}\) for a \(4n\) dimensional summation variable \((\bs x_1, \bs x_2)\).

\section{First Non-Trivial Example}
\label{app:firstnontrivialex}

We verify that the operator with the action \(S_9\) corresponds to a unitary operator:
\begin{widetext}
\begin{eqnarray}
  \left(U_9 U_9^* \right) (\bs x) &=& \left(\frac{1}{3^2}\right)^2\sum_{\bs x', \bs x'' \in (\mathbb Z/3^2\mathbb Z)^2} U_9(\bs x'') U_9^*(\bs x') \exp\left(\frac{2 \pi i}{3^2} \Delta_3(\bs x, \bs x', \bs x'') \right)\\
                                        &=& \frac{1}{3^4}\sum_{\bs x', \bs x'' \in (\mathbb Z/3^2\mathbb Z)^2} \exp\left\{ \frac{2 \pi i}{3^2} \left[ C (-{x_q'}^3 + {x_q''}^3) + B (-{x_q'}^2 + {x_q''}^2) + \bs \alpha \bsmc J (\bs x' - \bs x'') + 2 \bs x^T \bsmc J (\bs x' - \bs x'') + 2 \bs x'^T \bsmc J \bs x'' \right] \right\} \nonumber\\
                                        &=& \frac{1}{3^4}\sum_{\bs x', \bs x'' \in (\mathbb Z/3^2\mathbb Z)^2} \exp\left\{ \frac{2 \pi i}{3^2} \left[ C (-{x_q'}^3 + {x_q''}^3) + B (-{x_q'}^2 + {x_q''}^2) - \alpha_p (x_q' - x_q'') - 2 x_p (x_q' - x_q'') \right] \right\} \nonumber\\
                                        && \qquad \qquad \qquad \times \exp\left\{ \frac{2 \pi i}{3^2} \left[ (-2 x_q - \alpha_q + 2 x'_q) x''_p + (\alpha_q + 2 x_q - 2 x_q'') x_p' \right] \right\} \nonumber\\
                                        &=& \sum_{x'_q, \bs x''_q \in \mathbb Z/3^2\mathbb Z} \exp\left\{ \frac{2 \pi i}{3^2} \left[ C (-{x_q'}^3 + {x_q''}^3) + B (-{x_q'}^2 + {x_q''}^2) - \alpha_p (x_q' - x_q'') - 2 x_p (x_q' - x_q'') \right] \right\} \nonumber\\
                                        && \qquad \qquad \qquad \times \delta\left[\left(2 x'_q - 2 x_q - \alpha_q\right)\Mod 3^2 \right] \delta\left[ \left(-2 x_q'' + 2 x_q + \alpha_q\right) \Mod 3^2 \right] \nonumber\\
                                        &=& \sum_{x'_q, \bs x''_q \in \mathbb Z/3^2\mathbb Z} \exp\left\{ \frac{2 \pi i}{3^2} \left[ C (-{x_q'}^3 + {x_q''}^3) + B (-{x_q'}^2 + {x_q''}^2) - \alpha_p (x_q' - x_q'') - 2 x_p (x_q' - x_q'') \right] \right\} \nonumber\\
                                        && \qquad \qquad \qquad \times \delta\left[ \left(x'_q - x''_q\right) \Mod 3^2 \right] \delta\left[ \left(2 x'_q + 2 x''_q - 4 x_q - 2 \alpha_q \right) \Mod 3^2 \right] \nonumber\\
                                        &=& 1. \nonumber
\end{eqnarray}
So we now consider
\begin{eqnarray}
  U_9 U_9^*(\bs x, \bs x') &=& \frac{1}{9^{2}} \sum_{\bs x_1, \bs x_2, \bs x_3 \in (\mathbb Z/3^2 \mathbb Z)^{2N}} U(\bs x_1) U^*(\bs x_2) \exp\left[ \frac{2 \pi i}{p} \Delta_5(\bs x_3, \bs x, \bs x_1, \bs x', \bs x_2)\right] \\
  &=& \frac{1}{9^2} \sum_{\bs x_1, \bs x_2, x_3 \in (\mathbb Z/3^{2} \mathbb Z)^2} \nonumber\\
  && \exp \left\{ \frac{2 \pi i}{3^2} \left[ S_9(\bs x_1) - S_9(\bs x_2) - \left( \begin{array}{c} \bs x_1\\ \bs x_2 \end{array}\right)^T\left( \begin{array}{cc} 0& \bsmc J\\ -\bsmc J& 0 \end{array} \right) \left( \begin{array}{c} \bs x_1\\ \bs x_2 \end{array}\right) + 2 \left(\begin{array}{c} \bs x_3 - \bs x + \bs x'\\ \bs x_3 - \bs x - \bs x' \end{array} \right)^T \bsmc J \left( \begin{array}{c} \bs x_1\\ \bs x_2 \end{array}\right) + c \right] \right\}, \nonumber
\end{eqnarray}
\end{widetext}
where we remind ourselves that
\begin{equation}
  c = 2 \left[\bs x'^T \bsmc J \bs x + \bs x_3^T \bsmc J (\bs x + \bs x') \right].
\end{equation}

We first sum away the linear monomials in \(x_{1p}\) and \(x_{2p}\) to produce Kronecker delta functions:
\begin{widetext}
\begin{eqnarray}
 U_9 U_9^*(\bs x, \bs x') &=& \frac{1}{{9^2}}\sum_{\bs x_1, \bs x_2, \bs x_3 \in (\mathbb Z/3^2 \mathbb Z)^2} \exp \left\{ \frac{2 \pi i}{3^2} \left[ C x_{1q}^3 + B x_{1q}^2 - C x_{2q}^3 - B x_{2q}^2 + c \right] \right\}, \\
  && \times \exp \left\{ \frac{2 \pi i}{3^2} \left[ \bs \alpha^T \bsmc J (\bs x_1 - \bs x_2) - \left(\begin{array}{c} \bs x_1\\ \bs x_2 \end{array}\right)^T\left( \begin{array}{cc} 0& \bsmc J\\ -\bsmc J& 0 \end{array} \right) \left( \begin{array}{c} \bs x_1\\ \bs x_2 \end{array}\right) + 2 \left(\begin{array}{c} \bs x_3 - \bs x + \bs x'\\ \bs x_3 - \bs x - \bs x' \end{array} \right)^T \bsmc J \left( \begin{array}{c} \bs x_1\\ \bs x_2 \end{array}\right) \right] \right\} \nonumber\\
  &=& \frac{1}{{9^2}} \sum_{\bs x_1, \bs x_2, \bs x_3 \in (\mathbb Z/3^2 \mathbb Z)^2} \exp \left\{ \frac{2 \pi i}{3^2} \left[ C x_{1q}^3 + B x_{1q}^2 - C x_{2q}^3 - B x_{2q}^2 + c \right] \right\}, \nonumber\\
  && \times \exp \left\{ \frac{2 \pi i}{3^2} \left[ \alpha_p (x_{1q} - x_{2q}) - 2(x_{3p} - x_p + x'_p) x_{1q} - 2(x_{3p} - x_p - x'_p) x_{2q} \right] \right\} \nonumber\\
  && \times \exp \left\{ \frac{2 \pi i}{3^2} \left[ - \alpha_q + 2( x_{2q} + x_{3q} - x_q + x'_q) \right] x_{1p} \right\} \exp \left\{ \frac{2 \pi i}{3^2} \left[ \alpha_q + 2( -x_{1q} + x_{3q} - x_q - x'_q) \right] x_{2p} \right\} \nonumber\\
  &=& \sum_{\substack{\bs x_{1q}, \bs x_{2q} \in \mathbb Z/3^2 \mathbb Z\\ \bs x_3 \in (\mathbb Z/3^2 \mathbb Z)^2}} \exp \left\{ \frac{2 \pi i}{3^2} \left[ C x_{1q}^3 + B x_{1q}^2 - C x_{2q}^3 - B x_{2q}^2 + c \right] \right\}, \nonumber\\
  && \times \exp \left\{ \frac{2 \pi i}{3^2} \left[ \alpha_p (x_{1q} - x_{2q}) - 2(x_{3p} - x_p + x'_p) x_{1q} - 2(x_{3p} - x_p - x'_p) x_{2q} \right] \right\} \nonumber\\
  && \times \delta\left\{ \left[ \alpha_q - 2( x_{2q} + x_{3q} - x_q + x'_q) \right] \Mod 3^2\right\} \delta\left\{ \left[-\alpha_q - 2( -x_{1q} + x_{3q} - x_q - x'_q) \right] \Mod 3^2 \right\}\nonumber\\
  &=& \sum_{\substack{\bs x_{1q}, \bs x_{2q} \in \mathbb Z/3^2 \mathbb Z\\ \bs x_3 \in (\mathbb Z/3^2 \mathbb Z)^2}} \exp \left\{ \frac{2 \pi i}{3^2} \left[ C x_{1q}^3 + B x_{1q}^2 - C x_{2q}^3 - B x_{2q}^2 + c \right] \right\}, \nonumber\\
  && \times \exp \left\{ \frac{2 \pi i}{3^2} \left[ \alpha_p (x_{1q} - x_{2q}) - 2(x_{3p} - x_p + x'_p) x_{1q} - 2(x_{3p} - x_p - x'_p) x_{2q} \right] \right\} \nonumber\\
  && \times \delta\left\{ \left[ -2( x_{2q} + 2 x_{3q} - 2 x_q - x_{1q}) \right] \Mod 3^2 \right\} \delta \left\{ \left[ 2 \left( \alpha_q - x_{2q} - x_{1q} - 2 x'_q \right) \right] \Mod 3^2 \right\} \nonumber
\end{eqnarray}

We replace the \(x_{3q}\) sum and sum away \(x_{3p}\):
\begin{eqnarray}
 U_9 U_9^*(\bs x, \bs x') &=& \sum_{\substack{x_{1q}, x_{2q} \in \mathbb Z/3^2 \mathbb Z\\ \bs x_3 \in (\mathbb Z/3^2 \mathbb Z)^2}} \exp \left\{ \frac{2 \pi i}{3^2} \left[ C x_{1q}^3 + B x_{1q}^2 - C x_{2q}^3 - B x_{2q}^2 + 2(\bs x'^T \bsmc J \bs x - x_{3p}(x_q + x'_q) ) \right] \right\}, \\
                          && \times \exp \left\{ \frac{2 \pi i}{3^2} \left[ \alpha_p (x_{1q} - x_{2q}) - 2(x_{3p} - x_p + x'_p) x_{1q} - 2(x_{3p} - x_p - x'_p) x_{2q} \right] \right\} \nonumber\\
                          && \times \exp \left\{ \frac{2 \pi i}{3^2} 2 (x_p + x'_p) x_{3q} \right\}  \delta\left\{ \left[ -2( x_{2q} + 2 x_{3q} - 2 x_q - x_{1q}) \right] \Mod 3^2 \right\} \delta \left\{ \left[ 2 \left( \alpha_q - x_{2q} - x_{1q} - 2 x'_q \right) \right] \Mod 3^2 \right\} \nonumber\\
                          &=& \sum_{x_{1q}, x_{2q}, x_{3p} \in \mathbb Z/3^2 \mathbb Z} \exp \left\{ \frac{2 \pi i}{3^2} \left[ C x_{1q}^3 + B x_{1q}^2 - C x_{2q}^3 - B x_{2q}^2 + 2(\bs x'^T \bsmc J \bs x - x_{3p}(x_q + x'_q) ) \right] \right\}, \nonumber\\
                          && \times \exp \left\{ \frac{2 \pi i}{3^2} \left[ \alpha_p (x_{1q} - x_{2q}) - 2(x_{3p} - x_p + x'_p) x_{1q} - 2(x_{3p} - x_p - x'_p) x_{2q} \right] \right\} \nonumber\\
                          && \times \exp \left\{ \frac{2 \pi i}{3^2} \left[ 2 x_q (x_p + x'_p) - (x_{2q}-x_{1q})(x_p + x'_p) \right] \right\} \delta \left\{ \left[ 2 \left( \alpha_q - x_{2q} - x_{1q} - 2 x'_q \right) \right] \Mod 3^2 \right\} \nonumber\\
                          &=& \sum_{x_{1q}, x_{2q}, x_{3p} \in \mathbb Z/3^2 \mathbb Z} \exp \left\{ \frac{2 \pi i}{3^2} \left[ C x_{1q}^3 + B x_{1q}^2 - C x_{2q}^3 - B x_{2q}^2 + 2(\bs x'^T \bsmc J \bs x) \right] \right\}, \nonumber\\
                          && \times \exp \left\{ \frac{2 \pi i}{3^2} \left[ \alpha_p (x_{1q} - x_{2q}) - 2(- x_p + x'_p) x_{1q} - 2(- x_p - x'_p) x_{2q} + 2 x_q (x_p + x'_p) - (x_{2q}-x_{1q})(x_p + x'_p) \right] \right\} \nonumber\\
                          && \times \exp \left\{ \frac{2 \pi i}{3^2} 2 \left[ - x_q - x'_q - x_{2q} - x_{1q}\right] x_{3p}\right\} \delta \left\{ \left[ 2 \left( \alpha_q - x_{2q} - x_{1q} - 2 x'_q \right) \right] \Mod 3^2 \right\} \nonumber\\
                          &=& 9 \sum_{x_{1q}, x_{2q} \in \mathbb Z/3^2 \mathbb Z} \exp \left\{ \frac{2 \pi i}{3^2} \left[ C x_{1q}^3 + B x_{1q}^2 - C x_{2q}^3 - B x_{2q}^2 + 2(\bs x'^T \bsmc J \bs x) \right] \right\}, \nonumber\\
                          && \times \exp \left\{ \frac{2 \pi i}{3^2} \left[ \alpha_p (x_{1q} - x_{2q}) - 2(- x_p + x'_p) x_{1q} - 2(- x_p - x'_p) x_{2q} + 2 x_q (x_p + x'_p) - (x_{2q}-x_{1q})(x_p + x'_p) \right] \right\} \nonumber\\
                          && \times \delta \left\{ \left[ 2 \left( - x_q - x'_q - x_{2q} - x_{1q}\right) \right] \Mod 3^2 \right\} \delta \left\{ \left[ 2 \left( \alpha_q - x_{2q} - x_{1q} - 2 x'_q \right) \right] \Mod 3^2 \right\} \nonumber\\
                          &=& 9 \sum_{x_{2q} \in \mathbb Z/3^2 \mathbb Z} \exp \left\{ \frac{2 \pi i}{3^2} \left[ C (-x_q - x'_q - x_{2q})^3 + B (-x_q - x'_q - x_{2q})^2 - C x_{2q}^3 - B x_{2q}^2 + 2(\bs x'^T \bsmc J \bs x) + 2 x_q (x_p + x'_p)\right] \right\}, \nonumber\\
                          && \times \exp \left\{ \frac{2 \pi i}{3^2} \left[ \alpha_p (-x_q - x'_q - 2 x_{2q}) - 2(-x_p + x'_p) (-x_q - x'_q - x_{2q}) - 2(- x_p - x'_p) x_{2q} - (2 x_{2q}+ x_q + x'_q)(x_p + x'_p) \right] \right\} \nonumber\\
                          && \times \delta \left\{ \left[ 2 \left( \alpha_q + x_q - x'_q \right) \right] \Mod 3^2 \right\}, \nonumber\\
                          &=& \begin{cases} 9 \sum_{x_{2q} \in \mathbb Z/3^2 \mathbb Z} \exp \left\{ \frac{2 \pi i}{3^2} \left[ C' x_{2q}^3 + B' x_{2q}^2 + \alpha' x_{2q} + c'\right] \right\} & \text{for}\, x_q = (- \alpha_q + x'_q) \Mod 3^2\\ 0 & \text{otherwise.}\end{cases} \nonumber
\end{eqnarray}
\end{widetext}
where
\begin{equation}
  C' = -2 C,
\end{equation}
\begin{equation}
  B' = -3 C (2 x'_q - \alpha_q),
\end{equation}
\begin{equation}
  \alpha' = (2 x'_p - 2 x_p - 2 \alpha_p + (2 x'_q - \alpha_q)(2 B - 6 C x'_q + 3 C \alpha_q)),
\end{equation}
and
\begin{equation}
  c' = -(2 x'_q - \alpha_q)(- x'_p + x_p + \alpha_p + (2 x'_q - \alpha_q)(-B + C(2 x'_q - \alpha_q)).
\end{equation}
where in the last step we could have replaced \(x_{2q}\) instead of \(x_{1q}\) with similar results.

This leaves a single sum over \(x_{2q}\) of an exponentiated cubic polynomial \(S(x_{2q}) = C' x_{2q}^3 + B' x_{2q}^2 + \alpha' x_{2q} + c'\) with coefficients in \(\mathbb Z\). We proceed to find the exponential argument's zeros \(\{\tilde x_{2q}\}\), the stationary phase points:
\begin{equation}
  \left.\partder{S}{x_{2q}}\right|_{x_{2q} = \tilde x_{2q}} = 0 = 3 C' \tilde x^2_{2q} + 2 B' \tilde x_{2q} + \alpha',
\end{equation}
\begin{equation}
  \Leftrightarrow \tilde x_{2q} = \left[- 2 B' \pm \left( 4 B'^2 - 12 C' \alpha' \right)^{1/2} \right] (6 C')^{-1},
\end{equation}
where all the arithmetic operations are understood to be taken \(\Mod 3^2\).

Therefore, by Theorem~\ref{th:statphase}(a) and (b),
\begin{eqnarray}
  &&U_9U_9^*(\bs x, \bs x')\\
  &=& 9 \sum_{\bar x_{2q} \in \mathbb Z/ 3 \mathbb Z} \left\{ \sum_{\substack{x_{2q} \in \mathbb Z/ 3^2 \mathbb Z\\ x_{2q} \Mod 3 = \bar x_{2q}}} \exp \left[ \frac{2 \pi i}{3^2} S(x_{2q})\right] \right\} \nonumber\\
  &=& 9 \sum_{\{\tilde x_{2q}\} \Mod 3} 3 \exp\left[\frac{2 \pi i}{3^2} S(\tilde x_{2q}) \right] G_0(H_{\tilde x_{2q}},3^{-1}\nabla S(\tilde x_{2q}) ), \nonumber\\
  &=& 9 \sum_{\{\tilde x_{2q}\} \Mod 3} 3 \exp\left[\frac{2 \pi i}{3^2} S(\tilde x_{2q}) \right], \nonumber
\end{eqnarray}
where we remind ourselves that \(H_{\tilde x_{2q}} = \left. \frac{\partial^2 S}{\partial x_\mu \partial x_\nu} \right|_{x = \tilde x_{2q}}\).

For instance, for \(C=-2\), \(B=-4\) and \(\bs \alpha = (0,4)\),
\begin{eqnarray}
  \partder{S(x_{2q})}{x_{2q}} &=& 12 x_{2q}^2 + 12 (-4+2 x'_q) x_{2q} - 2 x_p + 2 x'_p\\
                                  && + (-4 + 2 x'_q)(-32 + 12 x'_q), \nonumber
\end{eqnarray}
for \(x_q = - \alpha_q + x'_q \Mod 3^2\), and so for \(\bs x = (0,0)\) and \(\bs x' = (1,4)\),
\begin{equation}
  \partder{S(x_{2q})}{x_{2q}} \Mod 3 = \left[66 + 12 x_{2q} (4 + x_{2q})\right] \Mod 3 = 0 \, \forall\, x_{2q}.
\end{equation}
Therefore, \(\tilde x_{2q}\) takes all values \(\Mod 3\) and so
\begin{eqnarray}
  &&U_9U_9^*((0,0), (1,4))\\
  &=& 9 \left\{ \exp\left[\frac{2 \pi i}{3^2} S(0) \right] \right. \nonumber\\
      && \left.+ \exp\left[\frac{2 \pi i}{3^2} S(1) \right] + \exp\left[\frac{2 \pi i}{3^2} S(2) \right]\right\}, \nonumber\\
  &=& 9 \left\{ \exp\left[\frac{2 \pi i}{3^2} 5 \right] \right. \nonumber\\
  && \left.+ \exp\left[\frac{2 \pi i}{3^2} 0 \right] + \exp\left[\frac{2 \pi i}{3^2} 4 \right]\right\}, \nonumber
\end{eqnarray}

On the other hand, for \(\bs x = (0,0)\) and \(\bs x' = (2,4)\),
\begin{equation}
  \partder{S(x_{2q})}{x_{2q}} \Mod 3 = \left[68 + 12 x_{2q} (4 + x_{2q})\right] \Mod 3 \ne 0 \, \forall\, x_{2q},
\end{equation}
and so \(U_9U_9^*((0,0), (2,4)) = 0\).

This example illustrates the usefulness of this periodized stationary phase method for evaluating non-Clifford gate Weyl symbols and its potential for simplifying or reducing the full sum.

\section{Comparison between the Discrete Periodized Stationary Phase Method and the Stationary Phase Approximation in the Continuous Case}
\label{sec:periodizedstatphase}

The stationary phase method in the continuous case can be described as representing a integral over a continuous domain by a set of discrete points \(\{\tilde x_i\}\). These are the stationary points of the phase of the integrand. The phase is expanded by Taylor's theorem at these discrete points \(\{\tilde x_i\}\), and if the expansion is truncated at quadratic order, then Gaussian integrals result.

The notion of the stationary phase method differs when there is no longer a proper Euclidean metric to define continuous distances or areas for the integrand's domain. Similarly, the traditional version of Taylor's theorem does not hold. Instead, we must use the ``periodized'' version of Taylor's theorem~\cite{Bourbaki90} in a ``periodized'' stationary phase approximation. In this version, the points \(\{\tilde x_i\}\) where the phase's derivative is zero correspond to critical points where the ``periodized'' phase is stationary, i.e., where the phase, \emph{taken every \(x_i\) points}, slows down so that it ceases cancelling its opposing contributions around the unit circle with equal weights.

In the discrete case we consequently look for stationarity along equally-spaced periodic intervals that span the whole summand, instead of at particular points of the integrand; stationary phase points correspond to ``reduced'' points representing periodic intervals of the summand at which an expansion to second order in the action is made producing quadratic Gauss sum contributions.

\section{Qutrit \(\pi/8\) Gate}
\label{app:qutritpi8gate}

\(S_{\pi/8}(\bs x) \notin \mathbb Z[\bs x]\) but it does lie in \(\mathbb Z[\bs x]\) when the domain is increased from \(3\) to \(3^2\); we want to consider the sum over \(x_q\) of \(U_{\pi/8}(\bs x) = U_{\pi/8}(x_q)\) and reexpress it in terms of an exponential such that it is a sum over the larger space \(\mathbb Z/ 3^2 \mathbb Z\) of a polynomial with integer coefficients. Hence, we want to find \(\alpha, \beta \in \mathbb Z\) s.t.
\begin{widetext}
  \begin{equation}
    \alpha x_q^{2 \times 3} + \beta x_q^{3} \Mod 3^2 = -2 \left(3 (x_q \Mod 3)^2 + (x_q \Mod 3)\right) \Mod 3^2.
  \end{equation}
\end{widetext}
  Substituting in \(x_q=1\) and \(x_q=3^{2}-1 = -1 (\Mod 3^{2})\) we find the system of equations
  \begin{equation}
    4 \times (-2) \Mod 3^2 = \alpha + \beta \Mod 3^2,
  \end{equation}
  and
  \begin{equation}
    14 \times (-2) \Mod 3^2 = \alpha - \beta \Mod 3^2,
  \end{equation}
  which has the solution \(\alpha = 0\) and \(\beta = 1\).
  
Therefore,
\begin{eqnarray}
  && \sum_{x_q \in \mathbb Z/3 \mathbb Z} \exp\left[ \frac{2 \pi i}{3} \left(-\frac{2}{3}\right) (x_q + 3 x_q^2) \right] \\
  &=& \frac{1}{3} \sum_{x_q \in \mathbb Z/3^2 \mathbb Z} \exp \left[ \frac{2 \pi i}{3^2} x_q^{3} \right] \nonumber\\
  &\equiv& \frac{1}{3} \sum_{x_q \in \mathbb Z/3^2 \mathbb Z} \exp \left[ \frac{2 \pi i}{3^2} S'_{\pi/8}(\bs x) \right]. \nonumber
\end{eqnarray}
Notice that the first line consists of a sum of an exponentiated polynomial with coefficients in \(\mathbb Q\) while the second line is a sum over a larger space, but this time of an exponentiated polynomial with coefficients in \(\mathbb Z\). Thus, we have accomplished what we set out to do.

Since we will not be performing the sum above in isolation but with other exponentiated terms, so it is important to establish a slightly more general result:
\begin{eqnarray}
  &&\sum_{x_q \in \mathbb Z/3 \mathbb Z} \exp\left\{ \frac{2 \pi i}{3} \left[ -\frac{2}{3} (x_q + 3 x_q^2) + f(\bs x) \right] \right\} \\
  &=& \frac{1}{3} \sum_{x_q \in \mathbb Z/3^2 \mathbb Z} \exp \left\{ \frac{2 \pi i}{3^2} \left[ S'_{\pi/8}(\bs x) + 3 f(\bs x) \right] \right\}, \nonumber
\end{eqnarray}
where \(f(\bs x) \in \mathbb Z[x_1, \ldots, x_n]\).

\begin{widetext}
 We now consider
\begin{eqnarray}
  U_{\pi/8} U^*_{\pi/8}(\bs x, \bs x') &=& \frac{1}{3^2} \sum_{\bs x_1, \bs x_2, \bs x_3 \in (\mathbb Z/3 \mathbb Z)^2} U_{\pi/8}(\bs x_1) U_{\pi/8}^*(\bs x_2) \exp\left[ \frac{2 \pi i}{3} \Delta_5(\bs x_3, \bs x, \bs x_1, \bs x', \bs x_2)\right], \\
                                       &=& \frac{1}{3^2} \frac{1}{3^{2}} \sum_{\substack{\bs x_3 \in (\mathbb Z/3 \mathbb Z)^2\\x_{1p}, x_{2p} \in \mathbb Z/3 \mathbb Z, \, x_{1q}, x_{2q} \in \mathbb Z/ 3^{2} \mathbb Z}} \exp \left\{ \frac{2 \pi i}{3^{2}} \left[ S'_{\pi/8}(\bs x_1) - S'_{\pi/8}(\bs x_2) \right] \right\} \nonumber\\
  && \times \exp \left\{ \frac{2 \pi i}{3^2} \left[- 3 \left( \begin{array}{c} \bs x_1\\ \bs x_2 \end{array}\right)^T\left( \begin{array}{cc} 0& \bsmc J\\ -\bsmc J& 0 \end{array} \right) \left( \begin{array}{c} \bs x_1\\ \bs x_2 \end{array}\right) - 2 \times 3 \left(\begin{array}{c} \bs x_3 - \bs x + \bs x'\\ \bs x_3 - \bs x - \bs x' \end{array} \right)^T \bsmc J \left( \begin{array}{c} \bs x_1\\ \bs x_2 \end{array}\right) + 3 c \right] \right\}, \nonumber
\end{eqnarray}
\end{widetext}
where we remind ourselves again that
\begin{equation}
  c = 2 \left[\bs x'^T \bsmc J \bs x + \bs x_3^T \bsmc J (\bs x + \bs x') \right].
\end{equation}

Simplifying, we find:
\begin{widetext}
\begin{eqnarray}
U_{\pi/8}U^*_{\pi/8}(\bs x, \bs x') &=& \frac{1}{3} \sum_{x_{2q} \in \mathbb Z/ 3^2 \mathbb Z} \exp \left\{ \frac{2 \pi i}{3^2} \left[ (-x_{2q} + x_q + x'_q)^{3}  - x_{2q}^{3} \right] \right\} \\
                                      && \times \exp \left\{ \frac{2 \pi i}{3^2} 3 \left[ 2 (x_p - x'_p) x_{2q} + (x_p + 3 x'_p) x'_q - (3 x_p + x'_p) x_q  \right] \right\} \nonumber\\
  \label{eq:qutritpi8gate_appendix}
                      && \times \delta \left[ 2 \times 3 \left( x_q - x'_q \right) \Mod 3^2 \right] \nonumber\\
                      &\equiv& \begin{cases}\frac{1}{3} \sum_{x_{2q} \in \mathbb Z/ 3^2 \mathbb Z} \exp \left[ \frac{2 \pi i}{3^2} S''_{\pi/8}(x_{2q}, x_q, x_p, x'_p) \right] & \text{if}\,\, x'_q = x_q \,(\hskip-3pt \Mod 3),\\ 0 & \text{otherwise} \end{cases}. \nonumber
\end{eqnarray}

Therefore, by Theorem~\ref{th:statphase}(a) and (b), for \(x'_q = x_q \,(\hskip-3pt \Mod 3)\) (and \(j=1\)),
\begin{eqnarray}
  \label{eq:qutritpi8gatestatphase_appendix}
  && U_{\pi/8}U_{\pi/8}^*(\bs x, \bs x') \\
  &=& \frac{1}{3} \sum_{\bar x_{2q} \in \mathbb Z/ 3 \mathbb Z} \left\{ \sum_{\substack{x_{2q} \in \mathbb Z/ 3^2 \mathbb Z\\ x_{2q} \Mod 3 = \bar x_{2q}}} \exp \left[ \frac{2 \pi i}{3^2} S''_{\pi/8}(x_{2q})\right] \right\} \nonumber\\
  &=& \frac{1}{3} \sum_{\{\tilde x_{2q}\} \Mod 3} \exp\left[\frac{2 \pi i}{3^2} S''_{\pi/8}(\tilde x_{2q}) \right] G_0(H_{\tilde x_{2q}},3^{-1}\nabla S''_{\pi/8}(\tilde x_{2q}) ), \nonumber
\end{eqnarray}
where \(G_0\) is defined in Eq.~\ref{eq:Gausssum}.
\end{widetext}

We note that Eq.~\ref{eq:qutritpi8gate_appendix} is \(3\)-periodic in all of its arguments, and so its simplification by Theorem~\ref{th:statphase} is a special case: its sums can simply be restricted to be over \(\mathbb Z/ 3 \mathbb Z\) with the appropriate power of \(3\) added to compensate. It is easy to find that Eq.~\ref{eq:qutritpi8gatestatphase_appendix} is precisely this equation, since the preceding phase \(\exp\left[\frac{2 \pi i}{3^2} S''_{\pi/8}(\tilde x_{2q}) \right]\) is the only non-trivial part when \(G_0 = 1\) and the domain of summation is appropriately reduced.

As an example, consider \(x_q = x'_q = 0\) and \(x_p = x'_p = 1\):
\begin{equation}
\partder{S''_{\pi/8}(x_{2q},0,1,1,1)}{x_{2q}} = 6 x_{2q}^2,
\end{equation}
and we consider the critical points \(\tilde x_{2q}\) when the above is equal to \(0 \, (\hskip-3pt\Mod 3)\).
Hence, \(\tilde x_{2q} = \{0,1,2\}\) and so
\begin{eqnarray}
  &&U_{\pi/8}U_{\pi/8}^*(1,0,1,0) \\
  &=& \frac{1}{3} 3 \left\{ \exp\left[ \frac{2 \pi i}{3^2} S''(0) \right] G_0\left[H_0, 3^{-1} \nabla S''(0)\right] \right.\nonumber\\
  && \qquad \qquad \left. + \exp\left[ \frac{2 \pi i}{3^2} S''(1) \right] G_0\left[H_1, 3^{-1} \nabla S''(1)\right] \right. \nonumber\\
  && \qquad \qquad \left.+ \exp\left[ \frac{2 \pi i}{3^2} S''(2) \right] G_0\left[H_2, 3^{-1} \nabla S''(2)\right] \right\} \nonumber\\
  &=& \left\{ \exp\left[ \frac{2 \pi i}{3^2} 0 \right]  + \exp\left[ \frac{2 \pi i}{3^2} 2 \right] \right. \nonumber\\
  && \qquad \quad \left. + \exp\left[ \frac{2 \pi i}{3^2} 16 \right] \right\}. \nonumber
\end{eqnarray}
The Gaussians above all are equivalent to \(G_0(0,0) = 1\). Therefore, as we can see, we are really just performing the same sum as in Eq.~\ref{eq:qutritpi8gate_appendix} but over the smaller domain \(\mathbb Z/3\mathbb Z\) and compensating by multiplying in the correct powers of \(3\).

\begin{widetext}
We are interested in the magic state of this gate, which corresponds to it acting on \(\hat H \ket 0 = \ket{p=0}\). The Wigner function of \(\ket{p=0}\) is \(\rho'(\bs x) \equiv \frac{1}{3} \delta_{x_p, \bs 0}\). Hence,
\begin{eqnarray}
  U_{\pi/8} U^*_{\pi/8}(\bs x, \bs x') &=& \frac{1}{3^2} \sum_{\bs x_1, \bs x_2, \bs x_3 \in (\mathbb Z/3 \mathbb Z)^2} U_{\pi/8}(\bs x_1) U_{\pi/8}^*(\bs x_2) \exp\left[ \frac{2 \pi i}{3} \Delta_5(\bs x_3, \bs x, \bs x_1, \bs x', \bs x_2)\right], \\
                                       &=& \frac{1}{3^2} \sum_{\bs x_1, \bs x_2, \bs x_3 \in (\mathbb Z/3 \mathbb Z)^2} \exp \left[ \frac{2 \pi i}{3} S_{\pi/8}(\bs x_1) \right] \exp \left[- \frac{2 \pi i}{3} S_{\pi/8}(\bs x_2) \right] \nonumber\\
                                       && \times \exp \left\{ \frac{2 \pi i}{3} \left[ - \left( \begin{array}{c} \bs x_1\\ \bs x_2 \end{array}\right)^T\left( \begin{array}{cc} 0& \bsmc J\\ -\bsmc J& 0 \end{array} \right) \left( \begin{array}{c} \bs x_1\\ \bs x_2 \end{array}\right) - 2 \left(\begin{array}{c} \bs x_3 - \bs x + \bs x'\\ \bs x_3 - \bs x - \bs x' \end{array} \right)^T \bsmc J \left( \begin{array}{c} \bs x_1\\ \bs x_2 \end{array}\right) + c \right] \right\}, \nonumber\\
                                       &=& \frac{1}{3^2} \frac{1}{3^{2}} \sum_{\substack{\bs x_3 \in (\mathbb Z/3 \mathbb Z)^2\\x_{1p}, x_{2p} \in \mathbb Z/3 \mathbb Z, \, x_{1q}, x_{2q} \in \mathbb Z/ 3^{2} \mathbb Z}} \exp \left\{ \frac{2 \pi i}{3^{2}} \left[ S'_{\pi/8}(\bs x_1) - S'_{\pi/8}(\bs x_2) \right] \right\} \nonumber\\
  && \times \exp \left\{ \frac{2 \pi i}{3^2} \left[- 3 \left( \begin{array}{c} \bs x_1\\ \bs x_2 \end{array}\right)^T\left( \begin{array}{cc} 0& \bsmc J\\ -\bsmc J& 0 \end{array} \right) \left( \begin{array}{c} \bs x_1\\ \bs x_2 \end{array}\right) - 2 \times 3 \left(\begin{array}{c} \bs x_3 - \bs x + \bs x'\\ \bs x_3 - \bs x - \bs x' \end{array} \right)^T \bsmc J \left( \begin{array}{c} \bs x_1\\ \bs x_2 \end{array}\right) + 3 c \right] \right\}, \nonumber
\end{eqnarray}
where we remind ourselves again that
\begin{equation}
  c = 2 \left[\bs x'^T \bsmc J \bs x + \bs x_3^T \bsmc J (\bs x + \bs x') \right].
\end{equation}

We first sum away the linear monomials in \(x_{1p}\) and \(x_{2p}\) to produce Kronecker delta functions:
\begin{eqnarray}
 U_{\pi/8}U^*_{\pi/8}(\bs x, \bs x') &=& \frac{1}{3^4} \sum_{\substack{\bs x_3 \in (\mathbb Z/3 \mathbb Z)^2\\x_{1p}, x_{2p} \in \mathbb Z/3 \mathbb Z, \, x_{1q}, x_{2q} \in \mathbb Z/ 3^{2} \mathbb Z}} \exp \left\{ \frac{2 \pi i}{3^{2}} \left[ S'_{\pi/8}(\bs x_{1q}) - S'_{\pi/8}(\bs x_{2q}) + 3 c \right] \right\}\\
  && \times \exp \left\{ \frac{2 \pi i}{3^{2}} \left[ 2 \times 3(x_{3p} - x_p + x'_p) x_{1q} + 2 \times 3(x_{3p} - x_p - x'_p) x_{2q} \right] \right\} \nonumber\\
                     && \times \exp \left\{ \frac{2 \pi i}{3^{2}} \left[ -2 \times 3( -x_{2q} + x_{3q} - x_q + x'_q) \right] x_{1p} \right\} \exp \left\{ \frac{2 \pi i}{3^{2}} \left[ -2 \times 3( x_{1q} + x_{3q} - x_q - x'_q) \right] x_{2p} \right\} \nonumber\\
  &=& \frac{1}{3^2} \sum_{\substack{\bs x_3 \in (\mathbb Z/3 \mathbb Z)^2\\x_{1q}, x_{2q} \in \mathbb Z/ 3^{2} \mathbb Z}} \exp \left\{ \frac{2 \pi i}{3^{2}} \left[ S'_{\pi/8}(\bs x_{1q}) - S'_{\pi/8}(\bs x_{2q}) + 3 c \right] \right\} \nonumber\\
  && \times \exp \left\{ \frac{2 \pi i}{3^{4}} \left[ 2 \times 3(x_{3p} - x_p + x'_p) x_{1q} + 2 3(x_{3p} - x_p - x'_p) x_{2q} \right] \right\} \nonumber\\
  && \times \delta \left[ - 2 \times 3 ( -x_{2q} + x_{3q} - x_q + x'_q) \Mod 3^{2}\right] \delta\left[ -2 \times 3 ( x_{1q} + x_{3q} - x_q - x'_q) \Mod 3^2 \right]\nonumber\\
  &=& \frac{1}{3^2} \sum_{\substack{\bs x_3 \in (\mathbb Z/3 \mathbb Z)^2\\x_{1q}, x_{2q} \in \mathbb Z/ 3^{2} \mathbb Z}} \exp \left\{ \frac{2 \pi i}{3^{2}} \left[ S'_{\pi/8}(\bs x_{1q}) - S'_{\pi/8}(\bs x_{2q}) + 3 c \right] \right\} \nonumber\\
  && \times \exp \left\{ \frac{2 \pi i}{3^{4}} \left[ 2 \times 3 (x_{3p} - x_p + x'_p) x_{1q} + 2 \times 3(x_{3p} - x_p - x'_p) x_{2q} \right] \right\} \nonumber\\
  && \times \delta\left[ -2 \times 3 ( -x_{2q} + 2 x_{3q} - 2 x'_q + x_{1q}) \Mod 3^{2} \right] \delta \left[ -2 \times 3 \left( -x_{2q} - x_{1q} + 2 x_q \right) \Mod 3^{2} \right] \nonumber
\end{eqnarray}

We evaluate the \(x_{3q}\) sum and, as sum away \(x_{3p}\):
\begin{eqnarray}
  U_{\pi/8}U_{\pi/8}^*(\bs x, \bs x') &=& \frac{1}{3^2} \sum_{\substack{\bs x_3 \in (\mathbb Z/3 \mathbb Z)^2\\x_{1q}, x_{2q} \in \mathbb Z/ 3^{2} \mathbb Z}} \exp \left\{ \frac{2 \pi i}{3^{2}} \left[ S'_{\pi/8}(\bs x_{1q}) - S'_{\pi/8}(\bs x_{2q}) + 2 \times 3 \bs x'^T \bsmc J \bs x \right] \right\} \nonumber \\
                      && \times \exp \left\{ \frac{2 \pi i}{3^{2}} \left[ 2 \times 3 (x_{3p} - x_p + x'_p) x_{1q} + 2 \times 3 (x_{3p} - x_p - x'_p) x_{2q} - 2 \times 3 (x_q + x'_q) x_{3p} \right] \right\} \\
                      && \times \exp \left\{ \frac{2 \pi i}{3^{2}} \left[ 2 \times 3 (x_p + x'_p) x_{3q} \right] \right\} \nonumber \\
                      && \times \delta\left[ -2 \times 3 ( -x_{2q} + 2 x_{3q} - 2 x'_q + x_{1q}) \Mod 3^{2} \right] \delta \left[ -2 \times 3 \left( -x_{2q} - x_{1q} + 2 x_q \right) \Mod 3^{2} \right] \nonumber\\
                      &=& \frac{1}{3^2} \sum_{\substack{\bs x_{3p} \in \mathbb Z/3 \mathbb Z\\x_{1q}, x_{2q} \in \mathbb Z/ 3^{2} \mathbb Z}} \exp \left\{ \frac{2 \pi i}{3^{2}} \left[ S'_{\pi/8}(\bs x_{1q}) - S'_{\pi/8}(\bs x_{2q}) + 2 \times 3 \bs x'^T \bsmc J \bs x \right] \right\} \nonumber \\
                      && \times \exp \left\{ \frac{2 \pi i}{3^{2}} \left[ 2 \times 3 (-x_p + x'_p) x_{1q} + 2 \times 3 (- x_p - x'_p) x_{2q} + 3 (x_p + x'_p) (-x_{1q} + x_{2q}) + 2 p^{2m} (x_p + x'_p) x'_q \right] \right\} \nonumber \\
                      && \times \exp \left\{ \frac{2 \pi i}{3^{2}} \left[ - 2 \times 3 (-x_{1q} - x_{2q} + x_q + x'_q) x_{3p} \right] \right\} \delta \left[ -2 \times 3 \left( -x_{2q} - x_{1q} + 2 x_q \right) \Mod 3^{2} \right] \nonumber \\
                      &=& \frac{1}{3} \sum_{x_{1q}, x_{2q} \in \mathbb Z/ 3^{2} \mathbb Z} \exp \left\{ \frac{2 \pi i}{3^{2}} \left[ S'_{\pi/8}(\bs x_{1q}) - S'_{\pi/8}(\bs x_{2q}) + 2 \times 3 \bs x'^T \bsmc J \bs x \right] \right\} \nonumber \\
                      && \times \exp \left\{ \frac{2 \pi i}{3^{2}} \left[ 2 \times 3 (-x_p + x'_p) x_{1q} + 2 \times 3 (- x_p - x'_p) x_{2q} + 3 (x_p + x'_p) (-x_{1q} + x_{2q}) + 2 \times 3 (x_p + x'_p) x'_q \right] \right\} \nonumber \\
                      && \delta \left[ 2 \times 3 (x_{1q} + x_{2q} - x_q - x'_q) \Mod 3^{2} \right] \delta \left[ -2 \times 3 \left( -x_{2q} - x_{1q} + 2 x_q \right) \Mod 3^{2} \right] \nonumber \\
                      &=& \frac{1}{3} \sum_{x_{2q} \in \mathbb Z/ 3^2 \mathbb Z} \nonumber\\
                      && \times \exp \left\{ \frac{2 \pi i}{3^2} \left[ S'_{\pi/8}(-x_{2q} + x_q + x'_q) - S'_{\pi/8}(x_{2q}) + 2 \times 3 \bs x'^T \bsmc J \bs x + 2 \times 3 (x_p + x'_p) x'_q \right] \right\} \nonumber \\
                      && \times \exp \left\{ \frac{2 \pi i}{3^2} \left[ 2 \times 3(-x_p + x'_p) (-x_{2q} + x_q + x'_q) + 2 \times 3(- x_p - x'_p) x_{2q} + 3 (x_p + x'_p) (- x_q - x'_q  + 2 x_{2q}) \right] \right\} \nonumber\\ 
                      && \delta \left[ 2 \times 3 \left( x_q - x'_q \right) \Mod 3^2 \right] \nonumber\\
                      &=& \frac{1}{3} \sum_{x_{2q} \in \mathbb Z/ 3^2 \mathbb Z} \nonumber\\
                      && \times \exp \left\{ \frac{2 \pi i}{3^2} \left[ (-x_{2q} + x_q + x'_q)^{3}  \right] \right\} \nonumber\\
                      && \times \exp \left\{ \frac{2 \pi i}{3^2} \left[ - x_{2q}^{3} \right] \right\} \nonumber \\
                                      && \times \exp \left\{ \frac{2 \pi i}{3^2} 3 \left[ 2 (x_p - x'_p) x_{2q} + (x_p + 3 x'_p) x'_q - (3 x_p + x'_p) x_q  \right] \right\} \nonumber\\
                      && \times \delta \left[ 2 \times 3 \left( x_q - x'_q \right) \Mod 3^2 \right] \nonumber\\
                      &\equiv& \begin{cases}\frac{1}{3} \sum_{x_{2q} \in \mathbb Z/ 3^2 \mathbb Z} \exp \left[ \frac{2 \pi i}{3^2} S''_{\pi/8}(x_{2q}, x_q, x_p, x'_p) \right] & \text{if}\,\, x'_q = x_q \,(\hskip-3pt \Mod 3),\\ 0 & \text{otherwise} \end{cases}. \nonumber
\end{eqnarray}
\end{widetext}

We are interested in the magic state of this gate, which corresponds to it acting on \(\hat H \ket 0 = \ket{p=0}\). The Wigner function of \(\ket{p=0}\) is \(\rho'(\bs x) \equiv \frac{1}{3} \delta_{x_p, \bs 0}\). Hence,
\begin{widetext}
  \begin{eqnarray}
    \rho_{\pi/8}(\bs x) \equiv \sum_{\bs x'} U_{\pi/8}U^*_{\pi/8}(\bs x, \bs x') \rho'(\bs x') &=& \begin{cases} \frac{1}{3^2} \sum_{x_{2q} \in \mathbb Z/ 3^2 \mathbb Z} \exp \left[ \frac{2 \pi i}{3^2} S''_{\pi/8}(x_{2q}, x_q, x_p, x'_p{=}0) \right] & \text{if}\,\, x'_q = x_q \,(\hskip-3pt \Mod 3),\\ 0 & \text{otherwise} \end{cases}. \nonumber\\
                      &=& \frac{1}{3^2} \sum_{\substack{x_{2q} \in \mathbb Z/ 3^2 \mathbb Z\\x'_q \in \mathbb Z/ 3 \mathbb Z}} \nonumber\\
                      && \times \exp \left\{ \frac{2 \pi i}{3^2} \left[ (-x_{2q} + x_q + x'_q)^{3} \right] \right\} \nonumber\\
                      && \times \exp \left\{ \frac{2 \pi i}{3^2} \left[ -x_{2q}^{3} \right] \right\} \nonumber \\
                      && \times \exp \left\{ \frac{2 \pi i}{3^2} 3 \left[ (2 x_{2q} + x'_q - 3 x_q) x_p  \right] \right\} \delta \left[ 2 \times 3 \left( x_q - x'_q \right) \Mod 3^2 \right] \nonumber\\
                      &=& \frac{1}{3^2} \sum_{x_{2q} \in \mathbb Z/ 3^2 \mathbb Z} \nonumber\\
                                                                                               && \times \exp \left\{ \frac{2 \pi i}{3^2} \left[ (-x_{2q} + 2 x_q )^{3} \right] \right\} \nonumber\\
                      && \times \exp \left\{ \frac{2 \pi i}{3^2} \left[ -x_{2q}^{3} \right] \right\} \\
    \label{eq:weylpi8gatequtrit_appendix}
                      && \times \exp \left\{ \frac{2 \pi i}{3^2} 2 \times 3 \left[ (x_{2q} - x_q) x_p  \right] \right\}. \nonumber
  \end{eqnarray}
\end{widetext}

If we let \(S(x_p, x_q)\) be the phase, we note that \(\partder{S}{x_p} = \partder{S}{x_q} = 0 \Mod 3\) and \(\nabla^2 S \equiv H = 0 \Mod 3^0\) \(\forall \, x_p, \,x_q\). Hence, evaluation at any phase space point, or linear combination thereof, requires summation over all three reduced phase space points \(\tilde x_{2q} \in \mathbb Z/ 3 \mathbb Z\) since they are all critical points.

Again, we note that Eq.~\ref{eq:weylpi8gatequtrit_appendix} is \(3\)-periodic in all of its arguments, and so its simplification by Theorem~\ref{th:statphase} is again a particularly simple special case where its sums can simply be restricted to be over \(\mathbb Z/ 3 \mathbb Z\) with the appropriate power of \(3\) added to compensate.

\end{document}